%% file: main.tex
\pdfoutput=1
\documentclass[12pt]{article}
\input{preamble}
\usepackage[authoryear,square,semicolon]{natbib}
\usepackage[affil-it]{authblk}

\title{Dynamic Latent Space Modeling of Inhomogeneous Poisson Network Processes with Applications to International Relations}
\author[1]{Jie Jian}
\author[2]{Jiguo Cao}
\author[2]{Owen G. Ward}

\affil[1]{Data Science Institute, University of Chicago}
\affil[2]{Department of Statistics and Actuarial Science, Simon Fraser University}
\date{}

\begin{document}
\maketitle

\begin{abstract}
\input{tex_files/0_abstract}
\end{abstract}

\input{tex_files/1_introduction}

\input{tex_files/2_methodology}
\input{tex_files/3_identifiability}

\input{tex_files/4_estimation}
\input{tex_files/5_simulation}
\input{tex_files/6_application}

\input{tex_files/7_discussion}

\input{tex_files/8_AIdeclare}

{\renewcommand{\baselinestretch}{1}\selectfont
\bibliographystyle{plainnat}
\bibliography{LSMrefs}
}

\clearpage
\appendix
\section{Gauss--Legendre quadrature}
\label{supp:quad}

The rule used in (22) is the standard Gauss--Legendre rule mapped to the observation window. Let $(u_q,\omega_q)_{q=1}^{Q}$ be the nodes and weights of the $Q$-point rule on the reference interval $[-1,1]$, obtained as the roots of the degree-$Q$ Legendre polynomial and the associated weights. The affine change of variable from $[-1,1]$ to $[0,T]$ gives
\[
  s_q = \tfrac{T}{2}\,(u_q + 1),
  \qquad
  w_q = \tfrac{T}{2}\,\omega_q,
  \qquad q = 1,\dots,Q,
\]
and the weights satisfy $\sum_q w_q = T$, since the rule integrates the constant function exactly. We use $Q = 40$ points for the cumulative intensities (22) and a finer grid of $Q_{\bm{\Omega}} = 200$ points for the Gram matrices $\bm{\Omega}_1, \bm{\Omega}_2$, the latter computed once before optimization begins.


\section{Proofs for Section 3}
\label{app:ident}

Throughout, $\mathbf{Q} = \mathbf{I}_n - n^{-1} \bm{1} \bm{1}^{\top}$,
$\mathbf{D}(t) = [d_{ij}(t)^2]$ and
$\mathbf{G}(t) = -\tfrac{1}{2} \mathbf{Q} \mathbf{D}(t) \mathbf{Q}$. 

\subsection{Proof of Proposition 1} 

\begin{proof}
(i) We fix $t$ and suppress it. Since the distances remain unchanged,
\begin{align*}
d_{1i}^2  = \| z_i \|^2  = \| z^{*}_i \|^2 = (d^{*}_{1i})^2,
\end{align*}
and
\begin{align*}
\| z_i \|^2 +\| z_j \|^2 - 2 z_i^{\top} z_j  =  \| z^{*}_i \|^2 +\| z^{*}_j \|^2 - 2 z_i^{*\top} z^{*}_j ,
\end{align*}
we have
$\mathbf{Z} \mathbf{Z}^{\top} = \mathbf{Z}^{*} \mathbf{Z}^{*\top}$. There is an orthogonal map $\mathbf{U}$ gives $\mathbf{Z}^{*} = \mathbf{Z} \mathbf{U}$ with $\mathbf{U} \in O(2)$. Now, on the first two nodes:
\begin{align*}
\begin{bmatrix}
    0 & 0\\ z_{21} &0
\end{bmatrix} 
\begin{bmatrix}
u_{11} & u_{12} \\ u_{21} & u_{22}
\end{bmatrix}
= 
\begin{bmatrix}
    0 & 0\\ z_{21}^\star & 0
\end{bmatrix} ,
\end{align*}
sets $u_{12}=u_{21}=0$, $u_{11}=u_{22}=\pm 1$. 

(ii) Under (A2), the matrix $\mathbf{Z}(t)$ has full column rank for every $t$, so $\mathbf{U}(t) = \{ \mathbf{Z}(t)^{\top} \mathbf{Z}(t) \}^{-1} \mathbf{Z}(t)^{\top} \mathbf{Z}^{*}(t)$ is the unique solution of $\mathbf{Z}^{*}(t) = \mathbf{Z}(t) \mathbf{U}(t)$ and is continuous on $[0,T]$. A continuous map from a connected set into the four-point set $\mathbb{V}_4$ is constant. 

(iii) By (6) 
the trajectories $\mathbf{Z}(t) \operatorname{diag}(\epsilon_1, \epsilon_2)$ are generated by $(\epsilon_1 \mathbf{C}_x, \epsilon_2 \mathbf{C}_y)$, which satisfies (13) 
whenever $\mathbf{c}$ does. All pairwise distances are unchanged, hence so is $\ell$, and $f_1$, $f_2$ and $\| \mathbf{c} \|_F$ are quadratic in each of $\mathbf{C}_x$ and $\mathbf{C}_y$ separately, hence invariant under independent sign changes. The four arrays are distinct if and only if $\mathbf{C}_x \neq \bm{0}$ and $\mathbf{C}_y \neq \bm{0}$, which hold under (A1) and (A2) respectively.
\end{proof}


\subsection{Proof of Proposition 2}

\begin{proof}
By equations (2) and (5), the profile log-likelihood takes the form
\begin{equation}
\label{eq:beta-profile}
  \ell(\mathbf{Z}, \bm{\beta})
  \;=\;
  \sum_{i \in V} m_i \beta_i
  \;-\;
  \sum_{(i,j) \in \mathcal{D}} e^{\beta_i + \beta_j} I_{ij}
  \;+\; \text{const}.
\end{equation}
Differentiating twice yields gives (16). 
The quadratic form is non-negative and vanishes only if $v_i + v_j = 0$ for all $i \neq j$. For $n \ge 3$, subtracting $v_i + v_j = 0$ from $v_i + v_k = 0$ implies $v_j = v_k$ for all $j, k \neq i$, forcing $\mathbf{v} = v \bm{1}$ and $2v = 0$. This establishes strict concavity in (ii), and (i) follows immediately because equal intensities require $v_i + v_j = 0$ for $\mathbf{v} = \bm{\beta}^{*} - \bm{\beta}$.

For (iii), strict concavity implies that a unique finite maximizer exists if and only if $\ell(\mathbf{Z}, \bm{\beta})$ is coercive, meaning $\ell(\bm{\beta} + s \mathbf{v}) \to -\infty$ as $s \to \infty$ along every direction $\mathbf{v} \neq \bm{0}$. Define the recessive cone $\mathcal{V} = \{ \mathbf{v} \neq \bm{0} : v_i + v_j \le 0 \text{ for all } (i,j) \in \mathcal{D} \}$. 

If $\mathbf{v} \notin \mathcal{V}$, at least one dyad has $v_i + v_j > 0$, so the exponential term $-e^{\beta_i + \beta_j + s(v_i + v_j)} I_{ij}$ grows without bound and drives $\ell \to -\infty$. If $\mathbf{v} \in \mathcal{V}$, the exponential terms remain bounded as $s \to \infty$, leaving the asymptotic slope determined entirely by the linear term $s \mathbf{m}^{\top} \mathbf{v}$. Consequently, coercivity holds if and only if $\mathbf{m}^{\top} \mathbf{v} < 0$ for every $\mathbf{v} \in \mathcal{V}$.

It remains to show that $\mathbf{m}^{\top} \mathbf{v} < 0$ on $\mathcal{V}$ is equivalent to $0 < m_i < m$ for all $i$.

To establish necessity, suppose $m_i = 0$ for some $i$. Taking $\mathbf{v} = -\mathbf{e}_i \in \mathcal{V}$ gives $\mathbf{m}^{\top} \mathbf{v} = 0$, violating strict negativity. Next, suppose $m_i \ge m$ for some $i$. Taking $v_i = 1$ and $v_j = -1$ for all $j \neq i$ gives $\mathbf{v} \in \mathcal{V}$ and $\mathbf{m}^{\top} \mathbf{v} = m_i - \sum_{j \neq i} m_j = 2(m_i - m) \ge 0$, again violating strict negativity.

To establish sufficiency, assume $0 < m_i < m$ holds for all $i$ and pick any $\mathbf{v} \in \mathcal{V}$. Order coordinates so that $v_1 = \max_i v_i$. If $v_1 \le 0$, then all coordinates are non-positive and at least one is strictly negative, yielding $\mathbf{m}^{\top} \mathbf{v} < 0$ since $m_i > 0$. If $v_1 > 0$, the constraint $v_1 + v_j \le 0$ forces $v_j \le -v_1$ for all $j \ge 2$, whence
\begin{equation}
  \mathbf{m}^{\top} \mathbf{v}
  \;\le\;
  m_1 v_1 - v_1 \sum_{j=2}^n m_j
  \;=\;
  v_1 \bigl\{ m_1 - (2m - m_1) \bigr\}
  \;=\;
  2 v_1 (m_1 - m)
  \;<\;
  0.
\end{equation}
Thus $\mathbf{m}^{\top} \mathbf{v} < 0$ holds for all $\mathbf{v} \in \mathcal{V}$, ensuring coercivity and completing the proof.
\end{proof}


\subsection{Proof of Proposition 3}

\begin{proof}
(i) In (18), we have equation for pairwise intensity
\begin{equation}
\label{eq:additive-shift-supplementary}
  d^{*}_{ij}(t)^2 = d_{ij}(t)^2 + a_i + a_j, \quad (i,j) \in \mathcal{D},\ t \in [0,T],
\end{equation} 
which leads to the matrix form
\begin{equation}
\label{eq:additive-shift-matrix-supplementary}
  \mathbf{D}^{*}(t) -\mathbf{D}(t) =  \bm{a}\bm{1}^\top  + \bm{1}\bm{a}^\top - 2 \operatorname{diag}(\bm{a}).
\end{equation} 
Multiplying $Q$ to both sides of~\eqref{eq:additive-shift-matrix-supplementary} eliminates the vector $\bm{a}$
\begin{equation}
\label{eq:additive-shift-matrix-Q-supplementary}
\mathbf{Q} \mathbf{D}^{*}(t) \mathbf{Q} -\mathbf{Q} \mathbf{D}(t)\mathbf{Q} =\bm{0}  + \bm{0} - 2 \mathbf{Q} \operatorname{diag}(\bm{a})\mathbf{Q},
\end{equation}
which is Equation (19) in Proposition (i). 

The conclusion that $\operatorname{rank} \mathbf{G}(t) \le 2$ for all $t \in [0, T]$ comes from the representation of $\mathbf{D}(t)$. The squared distance between node positions $z_i(t), z_j(t) \in \mathbb{R}^2$ is $$d_{ij}(t)^2 = \Vert{}z_i(t) - z_j(t)\Vert{}^2 = \Vert{}z_i(t)\Vert{}^2 + \Vert{}z_j(t)\Vert{}^2 - 2 z_i(t)^\top z_j(t)$$Let $\mathbf{b}(t) = (\Vert{}z_1(t)\Vert{}^2, \dots, \Vert{}z_n(t)\Vert{}^2)^\top$ be the $n \times 1$ vector of squared norms. Expanding across all pairs yields the matrix equation:$$\mathbf{D}(t) = \mathbf{b}(t)\mathbf{1}^\top + \mathbf{1}\mathbf{b}(t)^\top - 2 \mathbf{Z}(t)\mathbf{Z}(t)^\top.$$ Multiplying $\mathbf{D}(t)$ on the left and right by $\mathbf{Q}$ yields
\begin{equation}
\label{eq:rank-G-supplementary}
\mathbf{G}(t) := -\tfrac{1}{2} \mathbf{Q} \mathbf{D}(t) \mathbf{Q} = \mathbf{Q} \mathbf{Z}(t) \mathbf{Z}(t)^{\top} \mathbf{Q}
\end{equation}
Therefore, $\mathbf{G}(t)$ and $\mathbf{G}^{*}(t)$ both have rank of at most 2. Consequently, $\operatorname{rank}(\mathbf{G}^*(t) - \mathbf{G}(t)) \le 4$ and $|\mathcal{S}| \le 4$.

(ii) First, we show that $a_1=0$. Let $\mathcal{U}= \{ i : a_i = 0 \}$. By definition, for any $i,j\in \mathcal{U}$, ~\eqref{eq:additive-shift-supplementary} yields $d_{ij}(t)=d_{ij}^{*} (t)$ for all $t$. Therefore, the positions $\{z_i^*(t)\}_{i \in \mathcal{U}}$ and $\{z_i(t)\}_{i \in \mathcal{U}}$ are isometric. There exist an orthogonal matrix $\mathbf{R}(t) \in O(2)$ and a translation vector $\mathbf{e}(t) \in \mathbb{R}^2$ such that 
\begin{equation}
\label{eq:z_i-supplementary}
z_i^{*}(t)=\mathbf{R}(t) z_i(t) + \mathbf{e}(t) \quad i \in \mathcal{U},\ t \in [0,T].
\end{equation}
The distance between $i\in \mathcal{U}$ and the first node satisfies $$d_{i1}^*(t)^2 = d_{i1}(t)^2 + a_1 \quad \text{for all } i \in \mathcal{U},$$
and plugging~\eqref{eq:z_i-supplementary} into it yields $$\Vert{}\mathbf{R}(t) z_i(t) + \mathbf{e}(t)\Vert{}^2 = \Vert{}z_i(t)\Vert{}^2 + a_1.$$
Expanding and canceling $\Vert{}z_i(t)\Vert{}^2$ from both sides, we get $$2 \bigl(\mathbf{R}(t)^\top \mathbf{e}(t)\bigr)^\top z_i(t) = a_1 - \Vert{}\mathbf{e}(t)\Vert{}^2 \quad \text{for all } i \in \mathcal{U}.$$ If $\mathbf{R}(t)^\top \mathbf{e}(t)$ is not $\bm{0}$, then at each time $t$, all nodes in $\mathcal{U}$ lie on the same line which contradicts (A3). Thus, $\mathbf{R}(t)^\top \mathbf{e}(t) = \bm{0}$ which implies $\mathbf{e}(t) = \bm{0}$ and $a_1=0$.

Then we show that $\bm{a}=\bm{0}$. For $k \in \mathcal{S}$ and $i \in \mathcal{U}$, we have $d^{*}_{ik}(t)^2 = d_{ik}(t)^2 + a_k$. Expanding, plugging in $z_i^*(t) = \mathbf{R}(t) z_i(t)$, and arranging the equation, we have 
\begin{equation}
    \label{eq:z_i_k-supplementary}
    2 z_i(t)^\top \bigl( z^{*}_k(t) - z_k(t) \bigr) = \Vert{}z^{*}_k (t)\Vert{}^2 - \Vert{}z_k(t)\Vert{}^2 - a_k 
\end{equation}
As node $1 \in \mathcal{U}$, we have the right-hand side of~\eqref{eq:z_i_k-supplementary} as $0$. Since $\vert{}\mathcal{U}\vert{} \ge n - 4 \ge 3$, the unshifted nodes $\{z_i(t)\}_{i \in \mathcal{U}}$ contain at least 3 non-collinear points, which implies $z^{*}_k(t) - z_k(t)= \bm{0}$, and thus $a_k=0$. Given $\bm{a}=0$, by Proposition 1, under the anchoring system, we also know that the two positions  $\{z_i^*(t)\}$ and $\{z_i(t)\}$ are identifiable up to $\mathbb{V}_4 $.
\end{proof}

\subsection{Proof of Proposition 4}

\begin{proof}
Let $\mathbf{u} = (\mathbf{u}_{\bm{\beta}}, \mathbf{u}_{\mathbf{c}}) \in \mathbb{R}^{n + (2n-3)K}$ be a tangent vector in the coordinate space of $\mathcal{C}_0 \times \mathbb{R}^n$. The quadratic form associated with the expected Fisher information matrix $\mathcal{I}(\mathbf{c}, \bm{\beta})$ evaluates to
\begin{equation}
  \mathbf{u}^{\top} \mathcal{I}(\mathbf{c}, \bm{\beta}) \mathbf{u}
  \;=\;
  \sum_{(i,j) \in \mathcal{D}} \int_0^T \Bigl( \nabla \log \lambda_{ij}(t)^{\top} \mathbf{u} \Bigr)^2 \lambda_{ij}(t) \, dt.
\end{equation}
Because intensity functions satisfy $\lambda_{ij}(t) > 0$ everywhere on $[0,T]$, this quadratic form equals zero if and only if the directional derivative vanishes identically across time, meaning
\begin{equation}
\label{eq:directional-zero}
  \nabla \log \lambda_{ij}(t)^{\top} \mathbf{u}
  \;=\;
  0
  \quad \text{for all } (i,j) \in \mathcal{D} \text{ and } t \in [0,T].
\end{equation}
Proving that \eqref{eq:directional-zero} forces $\mathbf{u} = \mathbf{0}$ simultaneously establishes that the differential of the map $(\mathbf{c}, \bm{\beta}) \mapsto \{ \log \lambda_{ij}(\cdot) \}$ is injective and that $\mathcal{I}(\mathbf{c}, \bm{\beta})$ is strictly positive definite.

Let $\mathbf{B}(t) = (B_1(t), \dots, B_K(t))^{\top} \in \mathbb{R}^K$ denote the vector of $K$ B-spline basis functions evaluated at time $t$. The trajectory perturbation for node $i$ driven by coefficient vector $\mathbf{u}_{\mathbf{c}_i}$ is $h_i(t) = (\mathbf{B}(t)^{\top} \otimes \mathbf{I}_2) \mathbf{u}_{\mathbf{c}_i}$. Expanding \eqref{eq:directional-zero} yields
\begin{equation}
\label{eq:diff-expanded}
  u_{\beta_i} + u_{\beta_j}
  \;-\;
  2 \bigl( z_i(t) - z_j(t) \bigr)^{\top} \bigl( h_i(t) - h_j(t) \bigr)
  \;=\;
  0.
\end{equation}

Under the gauge constraints defining $\mathcal{C}_0$, node 1 is fixed at the spatial origin, imposing $z_1(t) \equiv \mathbf{0}$ and $h_1(t) \equiv \mathbf{0}$. For any dyad $(1, j)$ with $j \ge 2$, equation \eqref{eq:diff-expanded} simplifies to
\begin{equation}
\label{eq:node1-dyad}
  u_{\beta_1} + u_{\beta_j} \;-\; 2 z_j(t)^{\top} h_j(t) \;=\; 0,
\end{equation}
showing that $2 z_j(t)^{\top} h_j(t) = u_{\beta_1} + u_{\beta_j}$ is constant over $t \in [0,T]$.

For any dyad $(i,j)$ with $i, j \ge 2$, expanding \eqref{eq:diff-expanded} and substituting \eqref{eq:node1-dyad} gives
\begin{equation}
\label{eq:cross-terms}
  2 z_i(t)^{\top} h_j(t) + 2 z_j(t)^{\top} h_i(t) \;=\; 2 u_{\beta_1}.
\end{equation}


Under assumptions (A1), (A2), and (A3) with $n \ge 7$, the trajectories $\{z_i(t)\}_{i=1}^n$ span two spatial dimensions non-degenerately over time. The time-varying inner products in \eqref{eq:cross-terms} cannot sum to a temporal constant across all dyadic pairings unless $h_i(t)$ corresponds to an infinitesimal rigid spatial transformation $h_i(t) = \mathbf{R} z_i(t)$ for a skew-symmetric matrix $\mathbf{R}$. Under the rotational gauge constraint of $\mathcal{C}_0$ restricting node 2 (e.g., $h_{2y}(t) \equiv 0$), we obtain $\mathbf{R} = \mathbf{0}$, forcing $h_i(t) \equiv \mathbf{0}$ for every node $i$.

Setting $h_i(t) \equiv \mathbf{0}$ reduces \eqref{eq:node1-dyad} to $u_{\beta_1} + u_{\beta_j} = 0$ for all $j \ge 2$. Substituting this relation into $u_{\beta_i} + u_{\beta_j} = 0$ for $i, j \ge 2$ forces $u_{\beta_1} = 0$, which immediately implies $u_{\beta_i} = 0$ for all $i \in V$.

Finally, because $h_i(t) = (\mathbf{B}(t)^{\top} \otimes \mathbf{I}_2) \mathbf{u}_{\mathbf{c}_i} \equiv \mathbf{0}$ and the B-spline basis functions are linearly independent, we obtain $\mathbf{u}_{\mathbf{c}_i} = \mathbf{0}$ for all $i$. Thus $\mathbf{u} = \mathbf{0}$ is the unique kernel element, confirming that the differential is injective and $\mathcal{I}(\mathbf{c}, \bm{\beta}) \succ \mathbf{0}$.
\end{proof}


\section{Node activity updates in reduced models}
\label{supplements:beta}

Under the homogeneous baseline (3) the score is linear in $e^{\beta}$ and the update is closed-form,
\begin{equation}
\label{eq:beta-homog}
  \hat{\beta}(\mathbf{c})
  \;=\;
  \log m
  \;-\;
  \log \sum_{(i,j) \in \mathcal{D}} I_{ij},
\end{equation}
while under the Poisson $\beta$-model (4) we have $I_{ij} = T$, so
we can iteratively update
\begin{equation}
\label{eq:beta-activity}
  \beta_i^{(s+1)}
  \;=\;
  \log m_i - \log T
  \;-\;
  \log \sum_{j \ne i} \exp\bigl(\beta_j^{(s)}\bigr),
\end{equation}
a coordinate-ascent recursion of the same form as classical Bradley--Terry scaling.

\section{Details on BIC computation}
\label{supp:bic}

\subsection{Gauss--Newton curvature substitution}

The log-likelihood $\ell(\mathbf{c}, \bm{\beta})$ depends on trajectory coefficients $\mathbf{c}$ nonlinearly through pairwise Euclidean distances $d_{ij}(t) = \|\mathbf{z}_i(t) - \mathbf{z}_j(t)\|$. Standard Laplace approximations expand $\ell$ quadratically using the observed negative Hessian $\mathbf{H}_{\ell} = -\nabla_{\mathbf{c}}^2 \ell$. However, because the distance mapping is non-convex, $\mathbf{H}_{\ell}$ can possess negative eigenvalues away from local maxima.

We resolve this instability by replacing $\mathbf{H}_{\ell}$ with the positive-semidefinite Gauss--Newton curvature matrix $\mathbf{F} = \mathbf{J}^{\top}\mathbf{W}\mathbf{J}$, where $\mathbf{J}$ is the Jacobian of predicted dyadic intensities with respect to $\mathbf{c}$, and $\mathbf{W}$ is the diagonal weight matrix derived from model expectations. This substitution omits only second-derivative terms of the distance mapping responsible for negative curvature, isolating information provided by the data.

The penalized curvature operator $\mathbf{A} = \mathbf{F} + \mathbf{H}_{\mathcal{P}}$ combines data curvature with the penalty matrix $\mathbf{H}_{\mathcal{P}} = \mathbf{I}_{2n-3} \otimes \bm{\Pi}$, where $\bm{\Pi} = \tau^{-2}\mathbf{I}_K + 2\rho_1\bm{\Omega}_1 + 2\rho_2\bm{\Omega}_2 \succ \mathbf{0}$. Because $\bm{\Pi}$ is strictly positive definite, $\mathbf{A}$ satisfies $\mathbf{A} \succeq \tau^{-2}\mathbf{I}_{d_{\mathrm{free}}}$ with $d_{\mathrm{free}} = (2n-3)K$. Consequently, $\mathbf{A}$ remains strictly invertible and well-conditioned across all grid locations, ensuring that $k_{\mathrm{eff}} \in (0, d_{\mathrm{free}}]$ is rigorously defined.

\subsection{Matrix-free Hutchinson trace estimation}

Direct computation of (27) via dense matrix inversion costs $O(d_{\mathrm{free}}^3)$ operations, which becomes intractable when $d_{\mathrm{free}}$ is large. We avoid explicit construction of $\mathbf{A}$ by using the complementary trace identity
\begin{equation}
\label{eq:supp-hutch}
  k_{\mathrm{eff}}
  \;=\;
  d_{\mathrm{free}} \,-\, tr \bigl(\mathbf{H}_{\mathcal{P}}\mathbf{A}^{-1}\bigr)
  \;\approx\;
  d_{\mathrm{free}} \,-\, \frac{1}{P}\sum_{p=1}^{P} \mathbf{u}_p^{\top}\mathbf{H}_{\mathcal{P}}\,\mathbf{x}_p,
\end{equation}
where $\mathbf{u}_1, \dots, \mathbf{u}_P \in \{-1,+1\}^{d_{\mathrm{free}}}$ are independent Rademacher probe vectors, and each $\mathbf{x}_p$ is the unique solution to the linear system $\mathbf{A}\mathbf{x}_p = \mathbf{u}_p$.

Matrix-vector products $\mathbf{A}\mathbf{v} = \mathbf{F}\mathbf{v} + \mathbf{H}_{\mathcal{P}}\mathbf{v}$ are computed without forming dense matrices. The action $\mathbf{F}\mathbf{v}$ is evaluated via two-pass automatic differentiation, while $\mathbf{H}_{\mathcal{P}}\mathbf{v}$ exploits the block-Kronecker structure of $\mathbf{H}_{\mathcal{P}}$, requiring only banded matrix operations involving $\bm{\Pi}$.

Each system $\mathbf{A}\mathbf{x}_p = \mathbf{u}_p$ is solved using a preconditioned conjugate gradient solver. To ensure smooth output across the $(\rho_1,\rho_2)$ evaluation grid, we fix the random seeds generating probe vectors $\{\mathbf{u}_p\}_{p=1}^P$ across all grid cells and warm-start conjugate gradient iterations using solutions from adjacent grid points. This eliminates Monte Carlo sampling noise, producing a continuous criterion surface.

\subsection{Optimization surface characteristics}

Empirically, the criterion surface displays distinct sensitivities along the two tuning axes. The surface responds strongly to variations in $\rho_2$, which governs path curvature and balances model flexibility against spatial smoothness. Sensitivity to $\rho_1$ is comparatively muted, as $\rho_1$ primarily scales total path length. To maximize computational efficiency, we recommend evaluating a broad preliminary grid over $(\rho_1, \rho_2)$, followed by a refined unidimensional search over $\rho_2$ at a fixed moderate value of $\rho_1$.

\section{Details of simulations and application}
\label{supplement:simulations}
 
This section includes the additional details for the simulation studies of Section~5 and application in Section~6, covering the evaluation metrics, the two extrapolation rules used out of sample, and the implementation of each competing method.

\paragraph{Metrics.}
The distance IMSE is $\{T\binom{n}{2}\}^{-1}\sum_{i<j}\int_0^T\{\hat d_{ij}(t)-d_{ij}(t)\}^2dt$, which is invariant to rotation, reflection and translation of either configuration and so needs no Procrustes alignment. The held-out log-likelihood is $\sum_{e}\log\hat\lambda_{i_ej_e}(t_e)-\sum_{i\neq j}\int_{T_{\mathrm{tr}}}^{T}\hat\lambda_{ij}$, the sum running over held-out events and the integral evaluated by Gauss--Legendre quadrature. 
CLPM and the dyad-independent Poisson process pool the two directions of each dyad into a single process, so their fitted intensity estimates twice the per-ordered-pair rate and every quantity reported for them uses $\hat f_{ij}(t) - \log 2$. Without that adjustment every predicted count would be doubled.
The count metric compares that integral with the observed held-out count on each ordered pair. The Kolmogorov--Smirnov statistic rescales each ordered pair's event times by its own compensator, completing the right-censored final interval as its residual plus an independent unit exponential draw, and comparing the pooled intervals with the uniform distribution. Since the parameters are estimated from the same events and $m$ is large, it is reported as a comparative discrepancy across specifications rather than as a test.

\paragraph{Extrapolation rules.}
Persistence freezes the latent positions, or the log-intensity where there are no positions, at $T_{\mathrm{tr}}$. Constant velocity continues them at the analytic spline derivative for DLS-PP and the homogeneous baseline, at the slope of the segment containing $T_{\mathrm{tr}}$ for CLPM, whose piece-wise linear paths have no derivative at a change point, and at the fitted log-intensity derivative for the dyad-independent Poisson process. The Poisson $\beta$-model has no trajectory, so the two rules coincide and one held-out block is reported, an identity we verify numerically rather than assume.
 
\paragraph{Implementation of CLPM.}
We use the distance variant of the authors' implementation. It carries a single increment penalty, the analogue of $\rho_1$, and no analogue of $\rho_2$, since piece-wise linear paths have no curvature within a segment. The penalty is selected by BIC with the parameter count $2nK_{\mathrm{cp}}+1$, no effective dimension being available, and the number of change points is fixed so that the coefficient count per coordinate matches $K$. 
 
\paragraph{Implementation of the dyad-independent Poisson process.}
Each pooled dyad is fitted independently with an unpenalized B-spline log-intensity of dimension $K$, giving $\binom{n}{2}K$ coefficients against $(2n-3)K+n$ for DLS-PP. The log-likelihood of a dyad depends on its events only through $\sum_e\phi(t_e)$, so all dyads are fitted simultaneously by a batched Newton iteration. A dyad with no events in the fitting window has $\hat\lambda_{ij}\equiv0$, which attains its maximum contribution of exactly zero and requires no optimization, and a held-out event on such a dyad has log-intensity $-\infty$. The held-out log-likelihood is therefore accumulated over dyads with at least one training event and the number of held-out events on the remainder is reported separately, with no floor placed on $\hat\lambda$. Coefficients are bounded during optimization only to keep the arithmetic finite, since with fewer events than coefficients on a dyad the maximum is approached but not attained. Because the endpoint derivative of such a fit is large and is exponentiated, the constant-velocity cell is reported as unstable rather than as a figure, and the distance IMSE is replaced by $\mathrm{IMSE}_\lambda$ on the conditional intensities.
 
\paragraph{Implementation of the Poisson $\beta$-model.}
This specification has no tuning parameters. As shown in Section~\ref{supplements:beta}, given $m_i$ events involving node $i$, the profile score is $m_i 2T\exp(\beta_i)\sum_{j\neq i}\exp(\beta_j)$ and we solve it by the half-stepped iteration $\beta_i\leftarrow\tfrac12[\beta_i+\log m_i-\log\{2T\sum_{j\neq i}\exp(\beta_j)\}]$. The half step is necessary because the undamped map has eigenvalue $-1$ along the uniform direction, so its overall level oscillates with period two while the
deviations from the mean converge.
 
\paragraph{Replication and tuning.}
Each design uses a ground truth fixed across replications, with one dataset drawn per replication and the optimizer seeded by that replication's data seed, so the reported standard deviations cover both sources of variation. Tuning is selected once by BIC for DLS-PP, homogeneous baseline and CLPM, and then held fixed. The error in $\boldsymbol\beta$ is undefined for the dyad-independent Poisson process, and for CLPM and the homogeneous baseline it is reported on the single-intercept scale.
 
\paragraph{Selected tuning parameters.}
Table~\ref{tab:selected-tuning} records the values used. The in-sample and held-out columns differ because the two blocks come from separate fits, the first to the whole window and the second to $[0,T_{\mathrm{tr}}]$ only, each with its tuning selected on the events that fit saw. The basis dimension is $K=10$ throughout for models with B-spline representations including DLS-PP, homogeneous baseline, and DIPP, and for CLPM the number of change points is selected and fixed at $9$.

\begin{table}[htbp]
\centering
\caption{Tuning parameters used for each method, in the two simulation designs and in the application. The dyad-independent Poisson process carries nothing beyond the basis dimension per dyad and the Poisson $\beta$-model carries nothing at all.}
\label{tab:selected-tuning}
\small
\setlength{\tabcolsep}{4pt}
\begin{tabular}{@{}llcccccc@{}}
\toprule
& & \multicolumn{2}{c}{S1} & \multicolumn{2}{c}{S2}
  & \multicolumn{2}{c}{Application} \\
\cmidrule(lr){3-4}\cmidrule(lr){5-6}\cmidrule(lr){7-8}
Method & Parameter & \makecell{In-\\sample} & \makecell{Held-\\out} & \makecell{In-\\sample} & \makecell{Held-\\out}
        & \makecell{In-\\sample} & \makecell{Held-\\out} \\
\midrule
\multirow{2}{*}{DLS-PP}
  & $\rho_1$ & 0.1 & 0.01 & 0.1 & 0.1 & 0.1 & 0.0001 \\
  & $\rho_2$ & 0.005 & 0.01 & 100 & 100 & 0.0001 & 0.0001 \\
\addlinespace
\multirow{2}{*}{\makecell[l]{Homogeneous\\baseline}}
  & $\rho_1$ & 1 & 0.01 & 0.1 & 0.1 & 0.0001 & 0.0001 \\
  & $\rho_2$ & 0.001 & 0.01 & 100 & 100 & 0.0001 & 0.0001 \\
\addlinespace
CLPM    & penalty       & 1 & 0.1 & 0.1 & 0.001 & 1 & 1 \\
\addlinespace
DIPP            & $K$ & \multicolumn{6}{c}{10} \\
\addlinespace
Poisson-$\beta$ & ---          & \multicolumn{6}{c}{none} \\
\bottomrule
\end{tabular}
\end{table}

\section{60 selected countries in application}


To select the cohort for Section 6, economic size was measured using constant purchasing power parity (PPP) GDP from the World Bank World Development Indicators (series \texttt{NY.GDP.MKTP.PP.KD}), which adjusts for inflation and local price levels to ensure real comparability across time. We excluded regional aggregates and retained sovereign entities with at least 23 years of non-missing GDP and population data over 1995--2022. Eligible economies were ranked annually by descending PPP GDP. The 60 countries with the lowest median annual rank across the study period form the final panel. Table~\ref{tab:top60_gdp_countries} lists the selected economies and their rankings.

\begin{table}[htbp]
\centering
\tiny
\caption{Selected 60 economies (index order), 1995--2022.}
\label{tab:top60_gdp_countries}
\begin{tabular}{r l r l r l r l}
\hline
\# & Country & \# & Country & \# & Country & \# & Country \\
\hline
1  & United States            & 16 & Turkiye                  & 31 & Colombia                 & 46 & Chile \\
2  & China                    & 17 & Saudi Arabia             & 32 & Viet Nam                 & 47 & Iraq \\
3  & India                    & 18 & Nigeria                  & 33 & Switzerland              & 48 & Portugal \\
4  & Japan                    & 19 & Australia                & 34 & Bangladesh               & 49 & Hong Kong SAR, China \\
5  & Russian Federation       & 20 & Thailand                 & 35 & Philippines              & 50 & Denmark \\
6  & Germany                  & 21 & Egypt, Arab Rep.         & 36 & Sweden                   & 51 & Peru \\
7  & Brazil                   & 22 & Iran, Islamic Rep.       & 37 & Algeria                  & 52 & Finland \\
8  & France                   & 23 & Argentina                & 38 & Austria                  & 53 & Hungary \\
9  & United Kingdom           & 24 & Netherlands              & 39 & United Arab Emirates     & 54 & Ireland \\
10 & Italy                    & 25 & Poland                   & 40 & Greece                   & 55 & Israel \\
11 & Mexico                   & 26 & Pakistan                 & 41 & Romania                  & 56 & Morocco \\
12 & Indonesia                & 27 & Ukraine                  & 42 & Singapore                & 57 & Angola \\
13 & Spain                    & 28 & South Africa             & 43 & Kazakhstan               & 58 & Syrian Arab Republic \\
14 & Canada                   & 29 & Malaysia                 & 44 & Norway                   & 59 & Sri Lanka \\
15 & Korea, Rep.              & 30 & Belgium                  & 45 & Czechia                  & 60 & Kuwait \\
\hline
\end{tabular}
\end{table}

\end{document}

%% file: preamble.tex
\usepackage[utf8]{inputenc} 
\usepackage[T1]{fontenc}    

\usepackage[bitstream-charter]{mathdesign}
\usepackage{amsmath}
\usepackage[scaled=0.92]{PTSans}

\usepackage[
  paper  = letterpaper,
  left   = 1.0in,
  right  = 1.0in,
  top    = 1.0in,
  bottom = 1.0in,
  ]{geometry}

\usepackage[usenames,dvipsnames,table]{xcolor}
\definecolor{shadecolor}{gray}{0.9}

\usepackage[final,expansion=alltext]{microtype}
\usepackage[english]{babel}
\usepackage[parfill]{parskip}
\usepackage{afterpage}
\usepackage{framed}
\newcommand{\spacingset}[1]{%
  \renewcommand{\baselinestretch}{#1}\small\normalsize}
\spacingset{1.2}

{\endMakeFramed}

\usepackage{lineno}

\usepackage{ragged2e}

\newcounter{parcount}

\usepackage{graphicx}
\usepackage{wrapfig}
\usepackage[labelfont=bf,font={footnotesize,stretch=1},width=.9\textwidth]{caption}
\usepackage[format=hang]{subcaption}

\usepackage{booktabs,multirow,multicol}       

\usepackage{algorithm}
\usepackage{algpseudocode}
\usepackage{listings}
\usepackage{fancyvrb}
\fvset{fontsize=\normalsize}

\usepackage[colorlinks,linktoc=all]{hyperref}
\usepackage[all]{hypcap}
\hypersetup{citecolor=MidnightBlue}
\hypersetup{linkcolor=MidnightBlue}
\hypersetup{urlcolor=MidnightBlue}

\usepackage[acronym,nowarn]{glossaries}

\lstdefinestyle{mystyle}{
    commentstyle=\color{OliveGreen},
    keywordstyle=\color{BurntOrange},
    numberstyle=\tiny\color{black!60},
    stringstyle=\color{MidnightBlue},
    basicstyle=\ttfamily,
    breakatwhitespace=false,
    breaklines=true,
    captionpos=b,
    keepspaces=true,
    numbers=left,
    numbersep=5pt,
    showspaces=false,
    showstringspaces=false,
    showtabs=false,
    tabsize=2
}
\input{preamble/preamble_math}
\input{preamble/definitions_basic}

\input{preamble/commenting}

\usepackage{xr-hyper}
\usepackage{makecell}

\AtBeginEnvironment{abstract}{%
  \renewcommand{\baselinestretch}{1.2}\selectfont}

\usepackage{booktabs,multirow,multicol}       
\usepackage{makecell}

%% file: preamble/preamble_math.tex
\usepackage{centernot}
\usepackage{amsthm}         
\usepackage{amsmath,bm,bbm}
\usepackage{nicefrac}       
\usepackage{mathtools}      
\usepackage{amsbsy}         
\usepackage{amstext}        
\usepackage{thmtools}       
\usepackage{thm-restate}    
\usepackage{algorithm,algpseudocode} 

\begingroup
    \makeatletter
    \@for\theoremstyle:=definition,remark,plain\do{%
        \expandafter\g@addto@macro\csname th@\theoremstyle\endcsname{%
            \addtolength\thm@preskip\parskip
            }%
        }
\endgroup

\usepackage{arydshln}
\makeatletter
\def\adl@drawiv#1#2#3{%
        \hskip.5\tabcolsep
        \xleaders#3{#2.5\@tempdimb #1{1}#2.5\@tempdimb}%
                #2\z@ plus1fil minus1fil\relax
        \hskip.5\tabcolsep}
\newcommand{\cdashlinelr}[1]{%
  \noalign{\vskip\aboverulesep
           \global\let\@dashdrawstore\adl@draw
           \global\let\adl@draw\adl@drawiv}
  \cdashline{#1}
  \noalign{\global\let\adl@draw\@dashdrawstore
           \vskip\belowrulesep}}
\makeatother

\renewcommand{\epsilon}{\varepsilon}

\declaretheorem[style=plain,numberwithin=section,name=Proposition]{proposition}

\declaretheorem[style=definition,numberwithin=section,name=Example]{example}
\declaretheorem[style=remark,numberwithin=section,name=Remark]{remark}

\DeclarePairedDelimiter{\norm}{\lVert}{\rVert}

\newenvironment{example*}
 {\pushQED{\qed}\example}
 {\popQED\endexample}
\numberwithin{equation}{section}
\newcommand{\dlspp}{DLS-PP}
\DeclareMathOperator{\tr}{tr}

%% file: preamble/definitions_basic.tex
\DeclareMathOperator*{\argmax}{argmax}


%% file: preamble/commenting.tex
\definecolor{WowColor}{rgb}{.75,0,.75}
\definecolor{SubtleColor}{rgb}{0,0,.50}

\newcounter{margincounter}

%% file: tex_files/0_abstract.tex

We study continuous-time relational event data, where time-stamped dyadic interactions reflect both individual node propensities and evolving relational proximity. We propose a dynamic latent space model for inhomogeneous Poisson processes, where event intensities depend on node-specific activity parameters and time-varying latent distances modeled via flexible B-splines. We prove model identifiability by decoupling baseline activity from latent position, ensuring high interaction volumes do not warp the spatial map. For scalability, we develop a minibatch stochastic gradient algorithm with stable initialization and geometric anchoring, alongside an effective-degrees-of-freedom BIC for tuning model complexity. Simulations confirm accurate parameter recovery and out-of-sample prediction. Applied to cooperative diplomatic events among 60 major economies (1995--2022), the model uncovers shifting patterns of international cooperation and isolates mobile geopolitical actors from stationary institutional anchors.

%% file: tex_files/1_introduction.tex
\section{Introduction}
\label{sec:intro}

Understanding how entity relationships evolve over time is a fundamental goal across the social, political, and biological sciences. Latent space models \citep[LSMs;][]{hoff_2002_latent} offer a powerful geometric framework for this task, embedding entities in a low-dimensional space where proximity represents relational affinity. However, modern relational data rarely arrive as static networks. Instead, they consist of continuous streams of individual event records that precisely mark when two entities connect.

\begin{figure}[htbp]
    \centering
    \includegraphics[width=\linewidth]{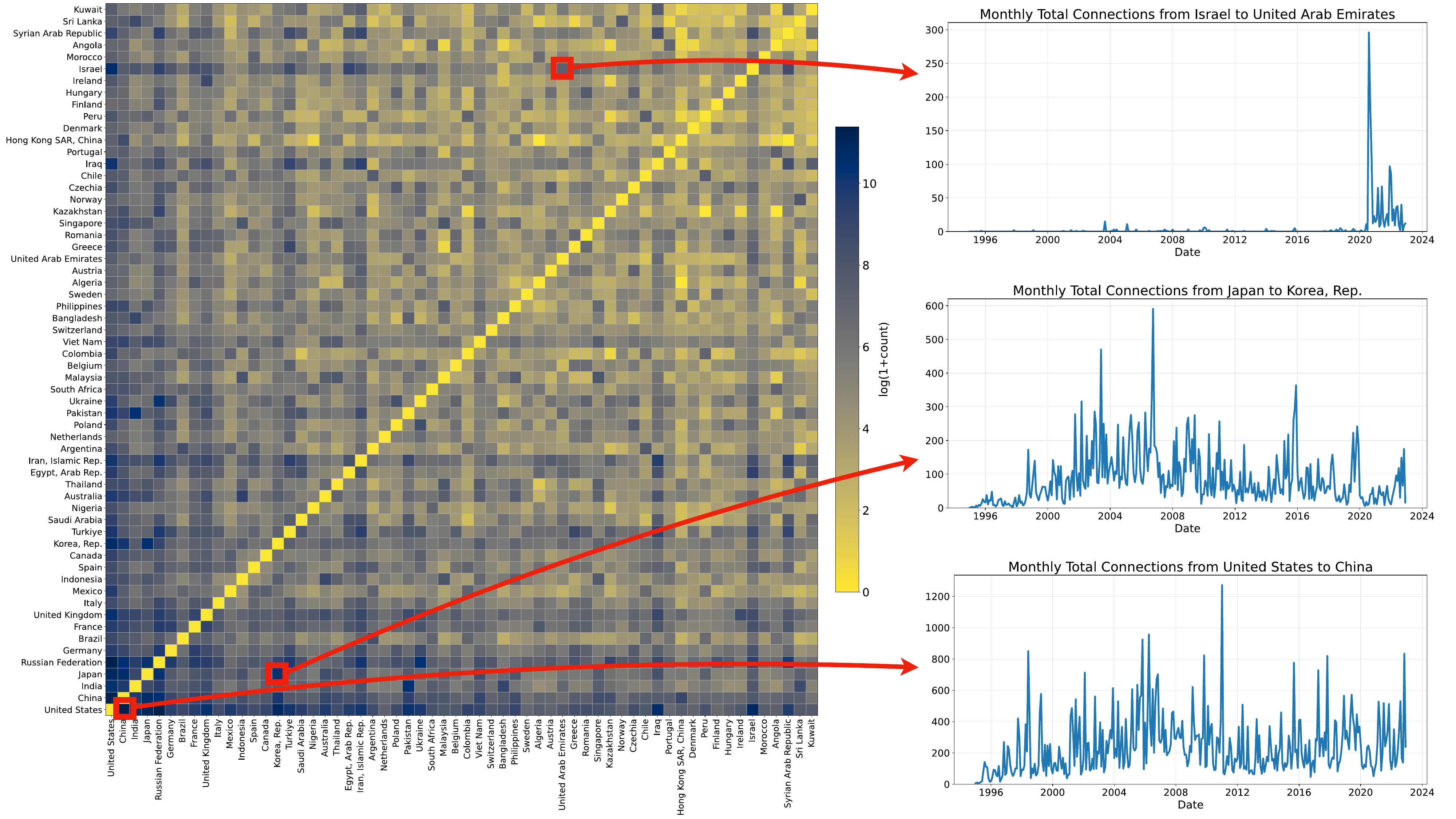}
  \caption{Dyad-level ICEWS cooperative interactions. The left panel reports total events between countries $i$ and $j$ over 1995--2022 across 60 economies. The right panel shows monthly counts for three dyads with comparable totals (Israel to UAE, Japan to Korea, and US to China), whose contrasting dynamics aggregate totals conceal.}
    \label{fig:icews_heatmap_total}
\end{figure}

A motivating example is the Integrated Crisis Early Warning System \citep[ICEWS;][]{Boschee2015ICEWS}, a machine-coded archive that records directed political actions capturing who did what to whom and when between pairs of nations over decades, and that has been externally validated against human coding \citep{BagozziEtAl2019Underreporting}.
Such data invite questions about how dyadic alignments drift, how networks realign, and whether global shocks alter long-term cooperation. Yet, collapsing these event streams into static summary counts discards the temporal dynamics needed to answer them. As Figure~\ref{fig:icews_heatmap_total} demonstrates, dyads with similar aggregate totals can mask fundamentally distinct interaction histories, ranging from abrupt onsets and recurrent bursts to gradual drift.

Many dynamic LSMs attempt to track temporal changes by fitting separate latent geometries across sequences of discrete snapshots \citep{sarkar2005dynamic, sewell2015latent, durante2014nonparametric, loyal2024fast,macdonald2025latent}.
Extensions like \citet{durante2016locally} allow latent positions to evolve in continuous time via stochastic differential equations, but their likelihoods remain bound to time-aggregated network snapshots rather than point-process event streams. Alternatively, some approaches fix spatial geometries entirely while allowing only baseline node activity to vary \citep{he2025semiparametric}. Consequently, these snapshot-based frameworks lack an instantaneous event-generating mechanism, making it difficult to capture fine-grained temporal dynamics or perform smooth forward-in-time forecasting. 

This paper develops a continuous-time dynamic latent space model built directly on point-process event streams. By representing latent positions as continuous curves, our framework models trajectory paths and instantaneous velocities while avoiding temporal aggregation. We establish conditions under which time-varying latent geometry and node-level baseline activity are separately identified, and we demonstrate the model's predictive accuracy and interpretability by analyzing cooperative political events among 60 major economies from 1995 to 2022.

Existing methods that model continuous event timestamps resolve parts of this challenge, but key gaps remain. Continuous-time stochastic blockmodels \citep{corneli2018multiple, matias2018semiparametric, xin2017continuous} capture discrete community shifts rather than continuous spatial geometries, whereas latent Hawkes models \citep{huang_2022_mutually, ward_2022_network} accommodate temporal dynamics while keeping latent positions static. Most relevant to our work are continuous latent position models \citep{artico2023dynamic, rastelli_2023_continuous, romero_2023_gaussian}, which allow positions to move over time. However, current formulations suffer from major limitations: they constrain trajectories to piecewise-linear paths with unnatural directional kinks, omit node-specific baseline activity parameters—forcing overall interaction volume to be confounded with spatial distance, and face severe computational bottlenecks when evaluating point-process likelihoods over large networks.

To address these challenges, we introduce a continuous-time dynamic latent space model built directly on relational event streams. We model interactions between nodes $i$ and $j$ as independent inhomogeneous Poisson processes with intensity
\begin{equation}\label{eq:intro-model}
  \log \lambda_{ij}(t) \;=\; \beta_i + \beta_j - \|z_i(t) - z_j(t)\|^2 ,
  \qquad 1 \le i < j \le n ,
\end{equation}
where $z_i(t) \in \mathbb{R}^2$ denotes a smooth latent trajectory and $\beta_i$ captures node $i$'s intrinsic activity. In diplomatic networks, omitting $\beta_i$ artificially pulls "busy" high-volume actors toward the geometric center, confounding baseline activity with political alignment \citep{krivitsky2009representing, rastelli_2023_continuous}. Our formulation decouples activity from spatial geometry so that positions reflect true relational affinity. Crucially, we prove that this joint specification is fully identifiable without ad hoc constraints, yielding a smooth continuous-time framework that avoids temporal aggregation artifacts while scaling to large event archives.

This paper provides the following four key contributions.

\begin{enumerate}
\item \textbf{Continuous-time smooth embeddings with identified nodal effects.} We formulate latent trajectories as smooth, continuous curves governed by flexible roughness penalties rather than piecewise approximations. Notably, additive nodal parameters decouple baseline activity heterogeneity from spatial distance, making intrinsic interaction propensity and relational geometry separately inferable components.
 
\item \textbf{Theoretical identifiability of smooth geometry and activity.} We provide formal identifiability proofs for both the continuous-time geometry and node-level activity parameters. Our anchoring framework resolves continuous-time rotational and translational invariance, while strict concavity guarantees unique activity estimates. 

\item \textbf{Scalable computation and model selection.} We overcome the severe optimization bottlenecks of dynamic LSMs, enabling inference on large-scale datasets via dyadic minibatch stochastic gradient descent, while delivering a BIC criterion powered by generalized Gauss-Newton optimization and Hutchinson trace estimation of effective degrees of freedom.
 
 
\item \textbf{Evidence on the changing geometry of international cooperation.} Applied to cooperative ICEWS events among 60 major economies over 1995--2022, the proposed method yields excellent predictive performance and latent trajectories of nations, and the nested specifications let us test whether both the geometry and the activity parameters are needed.
\end{enumerate}

The rest of the paper is organized as follows: Section~\ref{sec:model} presents the model framework, Section~\ref{sec:ident} establishes theoretical identifiability, Section~\ref{sec:estimation} outlines the estimation pipeline, Section~\ref{sec:simulation} reports simulations, and Section~\ref{sec:app} analyzes the ICEWS data. Section~\ref{sec:discussion} concludes, with technical proofs deferred to the Supplementary Material, and code reproducing all analyses is provided with the submission.

%% file: tex_files/2_methodology.tex

%

\section{A dynamic latent space model for network event data}
\label{sec:model}

We model the observed interactions as a collection of dyadic point processes whose intensities are governed by the time-varying positions of the nodes in a low-dimensional latent space. Section~\ref{subsec:data} fixes notation and the observation scheme, Section~\ref{subsec:intensity} introduces the intensity model and the two reduced specifications against which it is compared. Section~\ref{subsec:splines} gives the functional representation of the latent trajectories, Section~\ref{subsec:penalty} the penalty, and Section~\ref{subsec:anchoring} the anchoring method, objective and the estimator.

\subsection{Data structure and notation}
\label{subsec:data}

Let $V=\{1,\dots,n\}$ be a fixed set of nodes observed over the window $[0,T]$, and let
\[
  \mathcal{D} \;=\; \bigl\{(i,j) : 1 \le i < j \le n \bigr\},
  \qquad |\mathcal{D}| \;=\; \tbinom{n}{2},
\]
denote the set of unordered dyads. For each dyad we observe the times at which an interaction between $i$ and $j$ is recorded,
\[
  \mathcal{H}_{ij} \;=\; \bigl\{ t_{ij,1} < \cdots < t_{ij,m_{ij}} \bigr\}
  \;\subset\; [0,T],
\]
pooling events in both directions, with $m_{ij} = |\mathcal{H}_{ij}|$. We write $m_i = \sum_{j \ne i} m_{ij}$ for the number of events involving node~$i$ and $m = \sum_{(i,j) \in \mathcal{D}} m_{ij}$ for the total number of events. Without loss of generality, time is rescaled so that $T=1$, and we retain $T$ in the notation to keep the role of the observation length explicit. The node set is fixed over $[0,T]$ and the observation window is common to all dyads.
Treating dyads as unordered is a modelling choice rather than a property of the data. We empirically support this symmetry in Section~\ref{sec:app}.


\subsection{Intensity model}
\label{subsec:intensity}

Each node $i$ is assigned a latent trajectory
\[
  z_i(t) \;=\; \bigl(x_i(t),\, y_i(t)\bigr)^{\!\top} \;\in\; \mathbb{R}^{2},
  \qquad t \in [0,T],
\]
and we write $d_{ij}(t) = \norm{z_i(t) - z_j(t)}$ for the latent distance between $i$ and $j$ at time~$t$. Conditionally on the trajectories $\mathbf{Z}(\cdot) = \{z_1(\cdot),\dots,z_n(\cdot)\}$ and on activity parameters $\bm{\beta} = (\beta_1,\dots,\beta_n)^{\!\top}$, the dyadic processes are taken to be independent inhomogeneous Poisson processes with log intensities
\begin{equation}
\label{eq:intensity}
  \log \lambda_{ij}(t)
  \;=\; \beta_i + \beta_j - d_{ij}(t)^2,
  \qquad (i,j) \in \mathcal{D}.
\end{equation}
We refer to \eqref{eq:intensity} as the 
Dynamic Latent Space Poisson Process
(\emph{\dlspp{}}) model. The parameter $\beta_i$ captures the overall propensity of node~$i$ to interact, absorbing the degree heterogeneity.
We adopt squared distance in \eqref{eq:intensity} rather than raw Euclidean distance for computational tractability and differentiability, which ensures that the log-intensity function $\log \lambda_{ij}(t)$ remains quadratic in the latent coordinates.

Two reduced specifications serve as benchmarks throughout. Both are nested in \eqref{eq:intensity}, and we record them because the comparison in Section~\ref{sec:simulation} and~\ref{sec:app} demonstrates that both the latent geometry and the activity parameters are needed.
\begin{align}
  \text{\emph{Homogeneous baseline}:} \quad
    &\log \lambda_{ij}(t) = \beta - d_{ij}(t)^2,
    \label{eq:homog}\\
  \text{\emph{Poisson $\beta$-model}:} \quad
    &\log \lambda_{ij}(t) = \beta_i + \beta_j.
    \label{eq:activity}
\end{align}
Specification~\eqref{eq:homog} is the continuous-time analogue of the classical latent distance model \citep{hoff_2002_latent} with a single intercept.
Specification~\eqref{eq:activity} discards the geometry and is the Poisson form of the $\beta$-model for undirected networks \citep{holland1981exponential, chatterjee2011random}.

Conditional independence gives the log-likelihood
\begin{equation}
\label{eq:loglik}
  \ell(\mathbf{Z},\bm{\beta})
  \;=\;
  \sum_{(i,j) \in \mathcal{D}}
  \Biggl\{
    \sum_{t \in \mathcal{H}_{ij}} \log \lambda_{ij}(t)
    \;-\; \int_0^T \lambda_{ij}(t)\, dt
  \Biggr\}.
\end{equation}
Expression \eqref{eq:loglik} is a function of infinitely many parameters, since the unknowns are the positions $z_i(t)$ of all $n$ nodes at every $t$ in $[0,T]$.
We now describe how we constrain this quantity to give us a tractable representation, allowing inference.
Section~\ref{subsec:splines} represents each coordinate function in a spline basis, which turns the infinite-dimensional estimation problem into a finite-dimensional one, and Section~\ref{subsec:penalty} adds penalties on velocity and acceleration. These penalties are functions of the coordinates while the likelihood is a function of the distances alone, and Section~\ref{subsec:anchoring} therefore fixes the coordinate frame through anchoring.
Figure~\ref{fig:ingredients} shows what fails when any one of the three is omitted.

\begin{figure}[t]
\centering
\setkeys{Gin}{height=2.2cm, keepaspectratio} 
\setlength{\tabcolsep}{2pt}
\begin{tabular}{ccccc}
  \includegraphics{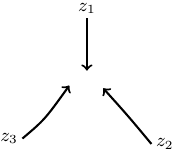} & \includegraphics{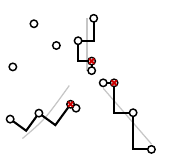} &
  \includegraphics{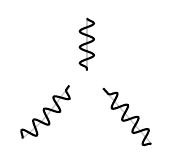} & \includegraphics{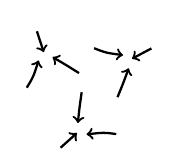} &
  \includegraphics{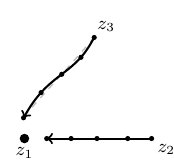} \\[0.15em]
  {\small (a)} & {\small (b)} & {\small (c)} & {\small (d)} & {\small (e)}
\end{tabular}
\caption{Trajectory estimation for three converging nodes under varying model constraints: (a) True reference configuration; (b) Pointwise estimates exhibiting temporal instability; (c) Unpenalized splines showing overfitting; (d) Unanchored rigid-motion invariance with identical objective values \eqref{eq:objective}; and (e) Smooth trajectory recovery using penalized splines under anchored parameterization \eqref{eq:C0}.}
\label{fig:ingredients}
\end{figure}

\subsection{Functional representation of the trajectories}
\label{subsec:splines}

We assume that $x_i(t)$ and $y_i(t)$ are smooth functions of time for every node $i$. Each coordinate function is represented by a linear combination of B-spline basis functions. Let $\{\phi_k\}_{k=1}^{K}$ be a clamped B-spline basis of degree $p$ on $[0,T]$ with knots $0 = \kappa_0 < \kappa_1 < \cdots < \kappa_M = T$, and set
\begin{equation}
\label{eq:spline}
  z_i(t)
  \;=\;
  \begin{pmatrix} x_i(t) \\ y_i(t) \end{pmatrix}
  \;=\;
  \begin{pmatrix}
    \sum_{k=1}^{K} c^{(x)}_{ik}\,\phi_k(t) \\[2pt]
    \sum_{k=1}^{K} c^{(y)}_{ik}\,\phi_k(t)
  \end{pmatrix},
  \qquad i = 1,\dots,n .
\end{equation}
We fix $p = 3$ with equally spaced interior knots, giving $K = M + p$ total basis functions, and arrange the corresponding coefficients as
\[
  \mathbf{c}^{(x)}_i = \bigl(c^{(x)}_{i1},\dots,c^{(x)}_{iK}\bigr)^{\!\top},
  \qquad
  \mathbf{C}_x =
  \bigl[\mathbf{c}^{(x)}_1,\dots,\mathbf{c}^{(x)}_n\bigr]^{\!\top}
  \in \mathbb{R}^{n \times K},
\]
and analogously for the second coordinate, writing $\mathbf{c} = (\mathbf{C}_x, \mathbf{C}_y)$ and $\mathbf{c}_i = (\mathbf{c}^{(x)}_i, \mathbf{c}^{(y)}_i)$. Under \eqref{eq:spline} the squared latent distance is a quadratic form in the coefficient differences,
\begin{equation}
\label{eq:dist-spline}
  d_{ij}(t)^2
  \;=\;
  \Bigl\{ \bigl(\mathbf{c}^{(x)}_i - \mathbf{c}^{(x)}_j\bigr)^{\!\top}\!\bm{\phi}(t) \Bigr\}^{2}
  +
  \Bigl\{ \bigl(\mathbf{c}^{(y)}_i - \mathbf{c}^{(y)}_j\bigr)^{\!\top}\!\bm{\phi}(t) \Bigr\}^{2},
\end{equation}
where $\bm{\phi}(t) = \bigl(\phi_1(t),\dots,\phi_K(t)\bigr)^{\!\top}$. Therefore, $\log \lambda_{ij}(t)$ is a piecewise polynomial of degree $2p$ in $t$ and the log-likelihood \eqref{eq:loglik} becomes an explicit function of $(\mathbf{c},\bm{\beta})$. We repeatedly exploit two key properties of this representation, as the basis exactly reproduces affine functions of $t$ and the compact support of $\phi_k$ ensures that all derivative Gram matrices are banded. 

\subsection{Roughness penalties and the penalized objective}
\label{subsec:penalty}

To enforce temporal smoothness and prevent unpenalized trajectories from overfitting localized event noise (Figure~\ref{fig:ingredients}c), we estimate $(\mathbf{c}, \bm{\beta})$ by maximizing a penalized log-likelihood. The roughness penalty comprises three components that regulate trajectory velocity, curvature, and overall spatial scale.


\paragraph{Path length}
The arc length of the trajectory of node~$i$,
\[
  \mathcal{A}_i
  \;=\;
  \int_0^T \sqrt{\dot x_i(t)^2 + \dot y_i(t)^2}\; dt,
\]
is neither quadratic in $\mathbf{c}_i$ nor differentiable where the integrand vanishes. We penalize instead the integrated squared velocity,
\begin{equation}\label{eq:f1}
  f_1(\mathbf{c}_i)
  \;=\;
  \int_0^T \bigl\{\dot x_i(t)^2 + \dot y_i(t)^2\bigr\}\,dt
  \;=\;
  \bigl(\mathbf{c}^{(x)}_i\bigr)^{\!\top}\bm{\Omega}_1\,\mathbf{c}^{(x)}_i
  \;+\;
  \bigl(\mathbf{c}^{(y)}_i\bigr)^{\!\top}\bm{\Omega}_1\,\mathbf{c}^{(y)}_i ,
\end{equation}
where $[\bm{\Omega}_1]_{k\ell} = \int_0^T \phi_k'(t)\,\phi_\ell'(t)\,dt$. By the Cauchy--Schwarz inequality,
\begin{equation}
\label{eq:arc-bound}
  \mathcal{A}_i^2 \;\le\; T\, f_1(\mathbf{c}_i),
\end{equation}
so penalizing $f_1$ therefore controls arc length from above. The matrix $\bm{\Omega}_1$ is symmetric, positive semidefinite and banded, with $[\bm{\Omega}_1]_{k\ell} = 0$ whenever $|k - \ell| > p$.

\paragraph{Curvature}
The second penalty controls the smoothness of the paths through the integrated
squared acceleration,
\begin{equation}
\label{eq:f2}
  f_2(\mathbf{c}_i)
  \;=\;
  \int_0^T \bigl\{ \ddot x_i(t)^2 + \ddot y_i(t)^2 \bigr\}\, dt
  \;=\;
  \bigl(\mathbf{c}^{(x)}_i\bigr)^{\!\top} \bm{\Omega}_2\, \mathbf{c}^{(x)}_i
  \;+\;
  \bigl(\mathbf{c}^{(y)}_i\bigr)^{\!\top} \bm{\Omega}_2\, \mathbf{c}^{(y)}_i,
\end{equation}
where $ [\bm{\Omega}_2]_{k\ell} = \int_0^T \phi_k''(t)\,\phi_\ell''(t)\, dt$ is again symmetric positive semidefinite and banded.
The two penalties act on complementary features: $f_1$ governs how far a node travels, $f_2$ how directly it does so. Both are of substantive interest in the application, where the length of a country's path and the abruptness of its movements admit distinct interpretations.

\paragraph{Penalized objective}
Adding a ridge term on the coefficients, the overall penalty is given by
\begin{equation}
\label{eq:penalty}
  \mathcal{P}(\mathbf{c})
  \;=\;
  \frac{1}{2\tau^2}\,\norm{\mathbf{c}}_F^2
  \;+\;
  \rho_1 \sum_{i=1}^{n} f_1(\mathbf{c}_i)
  \;+\;
  \rho_2 \sum_{i=1}^{n} f_2(\mathbf{c}_i),
\end{equation}
where $\norm{\mathbf{c}}_F^2 = \sum_{i,k}\{ (c^{(x)}_{ik})^2 + (c^{(y)}_{ik})^2 \}$.
In practice, the structured terms evaluate compactly in matrix form as $\sum_{i=1}^n f_r(\mathbf{c}_i) = \tr\bigl(\mathbf{C}_x \bm{\Omega}_r \mathbf{C}_x^{\top}\bigr) + \tr\bigl(\mathbf{C}_y \bm{\Omega}_r \mathbf{C}_y^{\top}\bigr)$ for $r = 1,2$, facilitating efficient matrix-based gradient evaluations. The coefficient-wise Hessian of $\mathcal{P}$ is block diagonal with common $K \times K$ block
\begin{equation}
\label{eq:Pi}
  \bm{\Pi}
  \;=\;
  \tau^{-2}\mathbf{I}_K
  \;+\;
  2\rho_1 \bm{\Omega}_1
  \;+\;
  2\rho_2 \bm{\Omega}_2
  \;\succ\; \mathbf{0},
\end{equation}
which is strictly positive definite for any finite $\tau^2$. Throughout, $\tau^2$ is fixed at a large value to maintain a weak ridge penalty, while $(\rho_1,\rho_2)$ are selected via the BIC in Section~\ref{subsec:bic}.

\begin{remark}[Bayesian interpretation]
\label{rem:bayes}
The penalty \eqref{eq:penalty} corresponds to the negative log-density of a proper prior on $\mathbf{c}$, rendering $\hat{\mathbf{c}}$ a maximum a posteriori estimate. The terms $f_1$ and $f_2$ represent continuous-time analogues of first- and second-order random-walk priors \citep{lang2004bayesian}, penalizing deviations from stationary and linear trajectories, respectively. Although $f_1$ and $f_2$ are individually improper, the weak ridge term guarantees overall prior properness ($\bm{\Pi} \succ \mathbf{0}$). This framework motivates the approximated BIC in Section~\ref{subsec:bic}, replacing raw parameter counts with an effective-dimension penalty.
\end{remark}


\subsection{Anchored parameterization}
\label{subsec:anchoring}

In principle, fitting the model simply requires maximizing the penalized log-likelihood $\ell - \mathcal{P}$. However, invariance under translations, rotations, and reflections prevents a unique solution without identifying constraints. 
Three configurations differing by such a rotation are shown in Figure~\ref{fig:ingredients}(d). We remove the ambiguity by anchoring two nodes,
\begin{equation}
\label{eq:C0}
  \mathcal{C}_0
  =
  \bigl\{
    \mathbf{c} :
    \mathbf{c}^{(x)}_1 = \mathbf{c}^{(y)}_1 = \bm{0},
    \ \
    \mathbf{c}^{(y)}_2 = \bm{0}
  \bigr\},
\end{equation}
so that $z_1(t) \equiv (0,0)^{\top}$ and $z_2(t)$ lies on the $x$-axis for all $t$, as in Figure~\ref{fig:ingredients}(e). The sign of $x_2(t)$ is left free, and the anchor pair is chosen by the rule of Section~\ref{subsec:init}, which keeps the two nodes well separated. The constraints in \eqref{eq:C0} are imposed by holding the corresponding entries of $\mathbf{c}$ at zero throughout the optimization, so the free parameters have dimension $d_{\mathrm{free}} = (2n-3)K$. The estimator is then
\begin{equation}
\label{eq:objective}
  (\hat{\mathbf{c}}, \hat{\bm{\beta}})
  \;=\;
  \argmax_{(\mathbf{c},\bm{\beta}) \,\in\, \mathcal{C}_0 \times \mathbb{R}^n}\;
  \mathcal{J}(\mathbf{c},\bm{\beta}),
  \qquad
  \mathcal{J}(\mathbf{c},\bm{\beta})
  \;=\;
  \ell(\mathbf{c},\bm{\beta}) - \mathcal{P}(\mathbf{c}).
\end{equation}
Defining the model directly on the anchored space $C_0$ rather than quotienting an unconstrained spline space is computationally efficient. Section~\ref{sec:ident} shows that \eqref{eq:C0} leaves only a finite ambiguity, and that the ambiguity acts trivially on every quantity we report.

%% file: tex_files/3_identifiability.tex
%
%
%
%
%

\section{Identifiability}
\label{sec:ident}

Restricting coefficients to the anchored set $\mathcal{C}_0$ in \eqref{eq:C0}, identifiability of $(\mathbf{c}, \bm{\beta})$ follows the mapping
\begin{equation}
\label{eq:chain}
  \text{law of } \{\mathcal{H}_{ij}\}
  \ \xrightarrow{\ (a)\ }\
  \{\lambda_{ij}(\cdot)\}
  \ \xrightarrow{\ (b)\ }\
  \bigl( \bm{\beta},\, \{ d_{ij}(t)^2 \} \bigr)
  \ \xrightarrow{\ (c)\ }\
  \{ z_i(t) \}
  \ \xrightarrow{\ (d)\ }\
  ( \mathbf{C}_x, \mathbf{C}_y ).
\end{equation}
Step (a) holds because continuous Poisson intensities are uniquely determined by their point-process law \citep{karr_1991_point}, and step (d) follows directly from B-spline linear independence.

The remaining steps form the theoretical core of this section. Proposition~\ref{prop:ident} addresses step (c) by showing that dynamic distance matrices uniquely recover spatial trajectories up to a time-invariant discrete reflection under $\mathcal{C}_0$. Proposition~\ref{prop:beta} establishes conditional concavity and existence bounds for $\bm{\beta}$ given node positions, which serves as a building block for Proposition~\ref{prop:joint} to resolve step (b) by jointly decoupling activity parameters from geometry. Finally, Proposition~\ref{prop:fisher} guarantees local non-degeneracy by proving that the expected Fisher information matrix is strictly positive definite on $\mathcal{C}_0 \times \mathbb{R}^n$, establishing the foundation for valid asymptotic inference.

Three conditions on the trajectories are used below.
\begin{enumerate}
\item[(A1)] The two anchor nodes in Section~\ref{subsec:anchoring} do not collide, $\inf_{t \in [0,T]} |x_2(t)| > 0$.
  \item[(A2)] The nodes are not collinear at any time,
    $\min_{t \in [0,T]} \max_{i > 2} |y_i(t)| > 0$. Under (A1) this is equivalent to $\operatorname{rank} \mathbf{Z}(t) = 2$ for every $t$ with $\mathbf{Z}(t) = [z_1(t),\dots,z_n(t)]^{\top}$.
  \item[(A3)] At some $t^{\dagger} \in [0,T]$, no subset of $n-4$ nodes lies on a common straight line in $\mathbb{R}^{2}$.
\end{enumerate}
Condition (A1) keeps the constrained coordinate system well defined, since the orientation of the $x$-axis will become arbitrary if the two nodes approach one another. Condition (A2) prevents the residual ambiguity from varying over time, and (A3) is used for Proposition~\ref{prop:joint}. The proofs of all propositions are given in Supplementary Section 2. 

\subsection{Identifiability of the trajectories}
\label{subsec:traj}

In general dynamic latent space models, distance-based likelihoods are invariant under rotation, translation, and reflection at each time point. The anchoring constraints in $\mathcal{C}_0$ resolve this continuous time-varying non-identifiability, reducing the space of isometric latent configurations to a discrete, time-invariant symmetry group.

\begin{proposition}
\label{prop:ident}
Let $\mathbf{c}, \mathbf{c}^{*} \in \mathcal{C}_0$ induce trajectories
$\mathbf{Z}(\cdot)$ and $\mathbf{Z}^{*}(\cdot)$ with
$d^{*}_{ij}(t) = d_{ij}(t)$ for all $(i,j) \in \mathcal{D}$ and all
$t \in [0,T]$, and let $\mathbf{Z}(\cdot)$ satisfy (A1).
\begin{enumerate}
  \item[(i)] For each $t$, there is a matrix $\mathbf{U}(t)$ that belongs to the Klein four-group $\mathbb{V}_4 = \{ \mathbf{I}_2, -\mathbf{I}_2, \operatorname{diag}(1,-1), \allowbreak \operatorname{diag}(-1,1) \}$, such that $\mathbf{Z}^{*}(t) = \mathbf{Z}(t) \mathbf{U}(t)$.
  \item[(ii)] If $\mathbf{Z}(\cdot)$ satisfies (A2) as well, then $\mathbf{U}(t) = \mathbf{U} \in \mathbb{V}_4 $ for all $t$.
  \item[(iii)] Conversely, each $\mathbf{U} \in \mathbb{V}_4$ acts on the
    coefficients by
    $(\mathbf{C}_x, \mathbf{C}_y) \mapsto
     (\epsilon_1 \mathbf{C}_x, \epsilon_2 \mathbf{C}_y)$
    with $\epsilon_1, \epsilon_2 \in \{-1,1\}$, maps $\mathcal{C}_0$ onto
    itself, and leaves $\ell$, $f_1$, $f_2$ and $\|\mathbf{c}\|_F$ unchanged.
    Under (A1) and (A2) the four resulting coefficient arrays are distinct.
\end{enumerate}
\end{proposition}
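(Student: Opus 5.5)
The plan is to work at a fixed time $t$ for (i), pass to continuity for (ii), and verify (iii) by direct substitution into the spline representation \eqref{eq:spline}.

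\textbf{Part (i).} Fix $t$ and suppress it. Because $\mathbf{c},\mathbf{c}^*\in\mathcal{C}_0$, node~1 sits at the origin in both configurations, so
\[
\|z_i\|^2=d_{1i}^2=(d^*_{1i})^2=\|z_i^*\|^2 .
\]
Polarization, $z_i^\top z_j=\tfrac12(\|z_i\|^2+\|z_j\|^2-d_{ij}^2)$, then gives the Gram identity $\mathbf{Z}\mathbf{Z}^\top=\mathbf{Z}^*\mathbf{Z}^{*\top}$. I would invoke the standard fact that two $n\times 2$ matrices with equal Gram matrices differ by a right orthogonal factor: $\mathbf{Z}^*=\mathbf{Z}\mathbf{U}$ for some $\mathbf{U}\in O(2)$. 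If $\operatorname{rank}\mathbf{Z}<2$, one chooses such a $\mathbf{U}$ suitably.

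Next I use the anchor structure. Row 2 of both matrices is $(x_2,0)$ and $(x_2^*,0)$ with $|x_2^*|=|x_2|>0$ by (A1). Hence $(x_2,0)\mathbf{U}=(x_2^*,0)$, which forces $u_{12}=0$ and $u_{11}=\pm1$. Orthogonality then gives $u_{21}=0$ and $u_{22}=\pm1$, so $\mathbf{U}\in\mathbb{V}_4$.

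There is one subtlety when $\mathbf{Z}$ is rank one at that $t$, i.e.\ all $y_i=0$. Then the second column of $\mathbf{U}$ is not determined, but the $O(2)$ choice can still be taken diagonal. I expect this rank-deficient case to be the only delicate point in (i).

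\textbf{Part (ii).} Under (A2), $\mathbf{Z}(t)$ has full column rank for every $t$. Therefore $\mathbf{U}(t)=(\mathbf{Z}^\top\mathbf{Z})^{-1}\mathbf{Z}^\top\mathbf{Z}^*(t)$ is uniquely determined. This expression is continuous in $t$, since the spline trajectories are continuous and $\det(\mathbf{Z}^\top\mathbf{Z})$ is bounded away from zero on the compact interval $[0,T]$. A continuous map from the connected set $[0,T]$ into the discrete four-point set $\mathbb{V}_4$ is constant.

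\textbf{Part (iii).} Right-multiplying by $\operatorname{diag}(\epsilon_1,\epsilon_2)$ flips the signs of $x_i(\cdot)$ and $y_i(\cdot)$. By linearity of \eqref{eq:spline} this corresponds exactly to $(\epsilon_1\mathbf{C}_x,\epsilon_2\mathbf{C}_y)$. This map sends zero rows to zero rows, so it preserves $\mathcal{C}_0$.

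All pairwise distances are preserved, since $\mathbf{U}$ is orthogonal, and hence $\ell$ is unchanged. The quantities $f_1$, $f_2$ and $\|\mathbf{c}\|_F^2$ are sums of quadratic forms separately in $\mathbf{c}^{(x)}_i$ and $\mathbf{c}^{(y)}_i$, so they are invariant under independent sign flips.

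For distinctness, the four arrays are distinct exactly when $\mathbf{C}_x\neq\bm 0$ and $\mathbf{C}_y\neq\bm 0$. The first holds because (A1) makes $x_2\not\equiv0$. The second holds because (A2) makes some $y_i\not\equiv0$. B-spline linear independence then transfers these nonvanishing statements to the coefficients.
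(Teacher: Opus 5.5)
Your proposal is correct and follows the paper's proof essentially step for step. For (i) you get the Gram identity from node 1 sitting at the origin and use the row-2 constraint to force $\mathbf{U}$ diagonal; for (ii) you use uniqueness and continuity of $\mathbf{U}(t)=(\mathbf{Z}^\top\mathbf{Z})^{-1}\mathbf{Z}^\top\mathbf{Z}^*(t)$ under (A2); for (iii) you substitute directly into the spline representation. The rank-one case you flag is not actually delicate: your row-2 argument uses only (A1) and orthogonality, so it forces every orthogonal solution $\mathbf{U}$ into $\mathbb{V}_4$ whatever the rank of $\mathbf{Z}(t)$.
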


Every quantity reported in Sections~\ref{sec:simulation} and~\ref{sec:app} is invariant under $\mathbb{V}_4$, including the distances $d_{ij}(t)$, the intensities $\lambda_{ij}(t)$, the penalties $f_1$ and $f_2$ and the arc lengths $A_i$. 
Proposition~\ref{prop:ident} provides the theoretical basis for the initialization method presented in Section~\ref{subsec:init}.


\subsection{Identifiability of the activity parameters}
\label{subsec:ident-beta}

In contrast to network formulations with node-wise effects requiring sum-to-zero or corner constraints of $\bm{\beta}$, the activity vector $\bm{\beta}$ is identified without explicit normalization.

\begin{proposition}
\label{prop:beta}
Fix the trajectories $\mathbf{Z}(\cdot)$, let
$I_{ij} = \int_0^T \exp\{ -d_{ij}(t)^2 \} \, dt$, which is positive for every
dyad, and let $n \ge 3$.
\begin{enumerate}
  \item[(i)] The map $\bm{\beta} \mapsto \{ \lambda_{ij}(\cdot) \}_{(i,j) \in
    \mathcal{D}}$ is injective, so $\bm{\beta}$ is identified without further
    constraints.
  \item[(ii)] The profile objective
    $\bm{\beta} \mapsto \ell(\mathbf{Z}, \bm{\beta})$ is strictly concave, with
    \begin{equation}
    \label{eq:beta-hessian}
      - \mathbf{v}^{\top}
      \bigl\{ \nabla^2_{\bm{\beta}} \ell \bigr\}
      \mathbf{v}
      =
      \sum_{(i,j) \in \mathcal{D}}
      e^{\beta_i + \beta_j} I_{ij} (v_i + v_j)^2
      > 0
      \quad \text{for every } \mathbf{v} \neq \bm{0} .
    \end{equation}
  \item[(iii)] The supremum of $\ell(\mathbf{Z}, \cdot)$ is attained, and the maximizer is then unique, if and only if $m_i$, the number of events involving node $i$, satisfies
    \begin{equation}
    \label{eq:beta-existence}
      0 < m_i < m
      \quad \text{for every } i \in V .
    \end{equation}
\end{enumerate}
\end{proposition}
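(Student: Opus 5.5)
With $\mathbf{Z}(\cdot)$ fixed, the plan is to reduce everything to a finite-dimensional concave problem in $\bm{\beta}$. Under \eqref{eq:intensity}, $\lambda_{ij}(t) = e^{\beta_i+\beta_j} e^{-d_{ij}(t)^2}$. Substituting into \eqref{eq:loglik} and summing the linear terms over events gives
\[
\ell(\mathbf{Z},\bm{\beta}) = \sum_{i\in V} m_i\beta_i - \sum_{(i,j)\in\mathcal{D}} e^{\beta_i+\beta_j} I_{ij} + \text{const},
\]
where the constant depends only on $\mathbf{Z}$. Each $I_{ij}$ is positive because its integrand is positive and continuous.

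For (ii), I differentiate twice and obtain $-\mathbf{v}^\top\nabla^2_{\bm{\beta}}\ell\,\mathbf{v} = \sum_{(i,j)\in\mathcal{D}} e^{\beta_i+\beta_j}I_{ij}(v_i+v_j)^2$. This form is nonnegative. It vanishes only if $v_i+v_j=0$ for every pair $i\neq j$.
\begin{itemize}
\item With $n\ge3$, comparing $v_i+v_j=0$ and $v_i+v_k=0$ gives $v_j=v_k$ for all distinct indices.
\item Hence $\mathbf{v}=v\bm{1}$, and then $2v=0$ forces $\mathbf{v}=\bm{0}$.
\end{itemize}
For (i), equal intensity functions force $\beta_i+\beta_j=\beta^*_i+\beta^*_j$ on every dyad, since the factor $e^{-d_{ij}^2}$ is common to both. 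Applying the same linear-algebra step to $\mathbf{v}=\bm{\beta}^*-\bm{\beta}$ gives $\bm{\beta}^*=\bm{\beta}$.

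For (iii), strict concavity already gives uniqueness of any maximizer, so only existence needs work. I plan to characterize existence by coercivity along rays, i.e.\ $\ell(\bm{\beta}+s\mathbf{v})\to-\infty$ as $s\to\infty$ for every direction $\mathbf{v}\neq\bm{0}$. For a closed proper concave function, the maximum is attained with a bounded maximizer set exactly when the recession function is negative in every nonzero direction. If some direction has nonnegative recession slope, I would use the strict concavity plus that slope to show the supremum is not attained at any finite point.

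The recession analysis splits by the sign of $v_i+v_j$.
\begin{itemize}
\item If some dyad has $v_i+v_j>0$, the corresponding exponential term dominates and $\ell\to-\infty$.
\item Otherwise $\mathbf{v}$ lies in the cone $\mathcal{V}=\{\mathbf{v}\neq\bm{0}: v_i+v_j\le0\ \forall (i,j)\}$. There the exponentials stay bounded, and the slope is $\mathbf{m}^\top\mathbf{v}$.
\end{itemize}
So existence is equivalent to $\mathbf{m}^\top\mathbf{v}<0$ for all $\mathbf{v}\in\mathcal{V}$.

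The remaining step, which I expect to be the crux, is showing this is equivalent to $0<m_i<m$ for all $i$.

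\emph{Necessity.} I use explicit directions.
\begin{itemize}
\item If $m_i=0$, take $\mathbf{v}=-\mathbf{e}_i$. It lies in $\mathcal{V}$ and gives slope $0$.
\item If $m_i\ge m$, take $v_i=1$ and $v_j=-1$ for $j\neq i$. This lies in $\mathcal{V}$ because $v_i+v_j=0$ and $v_j+v_k=-2$.
\item For the second case I use the degree identity $\sum_i m_i=2m$. The slope is $m_i-(2m-m_i)=2(m_i-m)\ge0$.
\end{itemize}

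\emph{Sufficiency.} Let $v_1=\max_i v_i$, after relabelling.
\begin{itemize}
\item If $v_1\le0$, every coordinate is nonpositive and some coordinate is negative. Since all $m_i>0$, the slope is negative.
\item If $v_1>0$, the constraints give $v_j\le -v_1$ for all $j\neq1$. Hence $\mathbf{m}^\top\mathbf{v}\le v_1\{m_1-(2m-m_1)\}=2v_1(m_1-m)<0$.
\end{itemize}
The delicate point is the $v_1>0$ case, where the bound uses $m_j\ge0$ together with $\sum_j m_j=2m$.
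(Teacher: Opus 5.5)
Your proposal is correct and follows the paper's proof essentially step for step. It uses the same profile form $\sum_i m_i\beta_i-\sum_{(i,j)} e^{\beta_i+\beta_j}I_{ij}$, the same $n\ge 3$ argument for strict concavity and injectivity, and the same recession-cone characterization of existence, with the same test directions $-\mathbf{e}_i$ and $(1,-1,\dots,-1)$ for necessity and the same max-coordinate bound for sufficiency. Your explicit appeal to the recession function of a closed proper concave function, to justify ``attained $\Leftrightarrow$ coercive'' under strict concavity, is a small but welcome tightening of a step the paper simply asserts.
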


The quadratic form in \eqref{eq:beta-hessian} vanishes if and only if $v_i + v_j = 0$ for all observed dyads. This linear system forces $\mathbf{v} = \mathbf{0}$ provided the graph of observed dyads is connected and non-bipartite. Bipartiteness permits non-trivial alternating shifts $\beta_i \mapsto \beta_i + \alpha$ and $\beta_j \mapsto \beta_j - \alpha$. In our setup, observing all dyads yields the complete graph, which satisfies this condition for $n \ge 3$. Finally, condition \eqref{eq:beta-existence} guarantees the existence of a unique maximizer by requiring every node to participate in at least one event without monopolizing the entire process. 



\subsection{Joint identifiability}
\label{subsec:joint}

This subsection establishes step (b) of \eqref{eq:chain} by showing that $\bm{\beta}$ and the dynamic distances $\{d_{ij}(\cdot)^2\}$ are jointly identifiable, ensuring baseline activity cannot be traded off against spatial position.

Suppose two configurations $(\mathbf{c}, \bm{\beta}), (\mathbf{c}^*, \bm{\beta}^*) \in \mathcal{C}_0 \times \mathbb{R}^n$ generate identical intensities $\lambda_{ij}^*(t) = \lambda_{ij}(t)$ for all $(i,j) \in \mathcal{D}$ and $t \in [0,T]$. Setting $\bm{a} = \bm{\beta}^* - \bm{\beta}$ yields
\begin{equation}
\label{eq:additive-shift}
  d^{*}_{ij}(t)^2 = d_{ij}(t)^2 + a_i + a_j, \quad (i,j) \in \mathcal{D},\ t \in [0,T].
\end{equation} 
Let $\mathbf{D}(t) = [ d_{ij}(t)^2 ]_{i,j=1}^{n}$ denote the squared distance matrix, $\mathbf{Q} = \mathbf{I}_n - n^{-1} \bm{1} \bm{1}^{\top}$ the projection matrix onto the orthogonal complement of $\operatorname{span}(\bm{1})$, and $\mathbf{G}(t) = -\tfrac{1}{2} \mathbf{Q} \mathbf{D}(t) \mathbf{Q} = \mathbf{Q} \mathbf{Z}(t) \mathbf{Z}(t)^{\top} \mathbf{Q}$ the centered Gram matrix, so that $\operatorname{rank} \mathbf{G}(t) \le 2$ for all $t \in [0, T]$.

\begin{proposition}
\label{prop:joint}
Let $n \ge 7$ and suppose $(\mathbf{c}, \bm{\beta}), (\mathbf{c}^*, \bm{\beta}^*) \in \mathcal{C}_0 \times \mathbb{R}^n$ satisfy $\lambda^*_{ij}(t) = \lambda_{ij}(t)$ for all $(i,j) \in \mathcal{D}$ and $t \in [0,T]$. Define $\bm{a} = \bm{\beta}^* - \bm{\beta}$ and $\mathcal{S} = \{ i : a_i \neq 0 \}$.
\begin{enumerate}
  \item[(i)] The activity shift vector $\bm{a}$ satisfies
  \begin{equation}
  \label{eq:rank-relation}
    \mathbf{Q} \operatorname{diag}(\bm{a}) \mathbf{Q} = \mathbf{G}^{*}(t) - \mathbf{G}(t), \quad t \in [0,T],
  \end{equation}
  where $\mathbf{G}^*(t)$ is the centered Gram matrix of $\mathbf{Z}^*(t)$. We have $\operatorname{rank}(\mathbf{G}^*(t) - \mathbf{G}(t)) \le 4$, therefore, at most four activity parameters can differ, i.e., $|\mathcal{S}| \le 4$.

  \item[(ii)] Under (A3), $\bm{\beta}^* = \bm{\beta}$. If (A1)–(A3) hold, $(\mathbf{c}^*, \bm{\beta}^*)$ and $(\mathbf{c}, \bm{\beta})$ coincide up to an action of $\mathbb{V}_4$, and $(\mathbf{c}, \bm{\beta})$ is uniquely identified on $\mathcal{C}_0 \times \mathbb{R}^n$.
\end{enumerate}
\end{proposition}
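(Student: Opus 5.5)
Equality of intensities gives $\beta_i+\beta_j-d_{ij}(t)^2=\beta^*_i+\beta^*_j-d^*_{ij}(t)^2$. This is exactly \eqref{eq:additive-shift} with $\bm a=\bm\beta^*-\bm\beta$.

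For part (i), I will first write the relation as a matrix identity on $\mathbf D(t)$. The diagonal of $\mathbf D$ is zero, so the off-diagonal relation extends to
\[
\mathbf D^*(t)-\mathbf D(t)=\bm a\bm 1^\top+\bm 1\bm a^\top-2\operatorname{diag}(\bm a).
\]
Conjugating by $\mathbf Q$ and using $\mathbf Q\bm 1=\bm 0$ kills the two rank-one terms. Multiplying by $-\tfrac12$ then gives \eqref{eq:rank-relation}.

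Next I will expand $d_{ij}^2=\|z_i\|^2+\|z_j\|^2-2z_i^\top z_j$. This gives $\mathbf D=\mathbf b\bm 1^\top+\bm 1\mathbf b^\top-2\mathbf Z\mathbf Z^\top$, hence $\mathbf G=\mathbf Q\mathbf Z\mathbf Z^\top\mathbf Q$, which has rank at most $2$. The same holds for $\mathbf G^*$, so the difference has rank at most $4$.

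To pass from this rank bound to $|\mathcal S|\le4$, I need a lower bound on $\operatorname{rank}\mathbf Q\operatorname{diag}(\bm a)\mathbf Q$. It is the compression of $\operatorname{diag}(\bm a)$ to the hyperplane $\bm 1^\perp$. Projecting to a codimension-one subspace drops rank by at most $2$, so the bound is $\ge|\mathcal S|-2$. That only yields $|\mathcal S|\le 6$, so a finer argument is needed.

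I would argue directly on the hyperplane. The subspace $\{\mathbf v\in\bm 1^\perp:\ \mathbf v \text{ supported on }\mathcal S\}$ has dimension $|\mathcal S|-1$, and on it the quadratic form equals $\sum_{i\in\mathcal S}a_iv_i^2$. With a short interlacing argument this gives rank at least $|\mathcal S|-1$, hence $|\mathcal S|\le5$. I expect this counting to be the main obstacle in (i). If the sharp bound of $4$ is delicate, I will note that (ii) only needs the weaker consequence $|\mathcal U|\ge n-5\ge 2$. In fact (ii) can be made to use $|\mathcal U|\ge 3$ together with the coordinate argument below.

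For part (ii), let $\mathcal U=\{i:a_i=0\}$. For $i,j\in\mathcal U$ the distances agree for all $t$. By the rigidity of Euclidean configurations, $z_i^*(t)=\mathbf R(t)z_i(t)+\mathbf e(t)$ on $\mathcal U$ with $\mathbf R(t)\in O(2)$.

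First I will show $a_1=0$. Both $z_1$ and $z_1^*$ are at the origin, so $d_{i1}^2=\|z_i\|^2$ and $d_{i1}^{*2}=\|z_i^*\|^2$. The relation for the pair $(i,1)$ then gives
\[
2(\mathbf R^\top\mathbf e)^\top z_i=a_1-\|\mathbf e\|^2 \quad\text{for all } i\in\mathcal U.
\]
If $\mathbf R^\top\mathbf e\neq\bm 0$, all of $\mathcal U$ is collinear at $t^\dagger$. Since $|\mathcal U|\ge n-4$, this contradicts (A3). Hence $\mathbf e=\bm 0$ and $a_1=0$, so node $1$ joins $\mathcal U$.

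Then for each $k\in\mathcal S$ and $i\in\mathcal U$, expanding $d_{ik}^{*2}=d_{ik}^2+a_k$ with $z_i^*=\mathbf Rz_i$ gives a relation that is affine in $z_i$. Evaluating at $i=1$ forces its constant term to vanish. So $z_i^\top(\mathbf R^\top z_k^*-z_k)=0$ for all $i\in\mathcal U$. The set $\mathcal U$ spans $\mathbb R^2$ at $t^\dagger$ by (A3), hence $\mathbf R^\top z_k^*=z_k$ and consequently $a_k=0$.

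Thus $\bm a=\bm 0$ and all distances coincide. Proposition~\ref{prop:ident} then gives equality up to a constant element of $\mathbb V_4$. Steps (a) and (d) of \eqref{eq:chain} transfer this back to $(\mathbf c,\bm\beta)$.
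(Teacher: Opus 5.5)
Your argument for (ii) follows the paper's proof essentially step for step. It uses rigidity on $\mathcal U$, then the anchor at the origin to force $\mathbf e=\bm 0$ and $a_1=0$, then the relation affine in $z_i$ for each $k\in\mathcal S$. Your $\mathbf R^\top z_k^*-z_k$ is the correct form of what the paper writes as $z_k^*-z_k$.

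The gap is in (i): you never establish $|\mathcal S|\le 4$, and (ii) cannot do without it. Both appeals to (A3) in (ii) require $|\mathcal U|\ge n-4$. These are the contradiction when $\mathbf R^\top\mathbf e\neq\bm 0$, and the claim that $\{z_i(t^\dagger)\}_{i\in\mathcal U}$ spans $\mathbb R^2$. The reason is that (A3) says nothing about sets of fewer than $n-4$ nodes; for $n=7$, a set of $n-5=2$ nodes is always collinear. So your fallback that (ii) needs only $|\mathcal U|\ge n-5$ is false. Indeed, your own write-up of (ii) explicitly invokes $|\mathcal U|\ge n-4$, which your part (i) does not deliver.

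Your intermediate bound $\operatorname{rank}\ge|\mathcal S|-1$ is also not justified by the subspace you chose. The form $\sum_{i\in\mathcal S}a_iv_i^2$ on $\mathcal S$-supported vectors of $\bm 1^\perp$ is degenerate whenever $\sum_{i\in\mathcal S}1/a_i=0$. Moreover, when $\mathcal S=V$ with $\sum_i 1/a_i=0$, the rank is exactly $n-2=|\mathcal S|-2$, so the bound itself fails there.

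The missing idea is to compute the kernel of $\mathbf Q\operatorname{diag}(\bm a)\mathbf Q$ directly instead of bounding its rank by interlacing. For $\mathbf v\in\bm 1^\perp$ we have $\mathbf Q\operatorname{diag}(\bm a)\mathbf Q\mathbf v=\bm 0$ if and only if $\operatorname{diag}(\bm a)\mathbf v\in\operatorname{span}(\bm 1)$, i.e.\ $a_iv_i=c$ for all $i$.

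If $\mathcal U\neq\emptyset$, then $c=0$, so $\mathbf v$ vanishes on $\mathcal S$. The full kernel is then $\operatorname{span}(\bm 1)$ plus the $(|\mathcal U|-1)$-dimensional space of vectors in $\bm 1^\perp$ supported on $\mathcal U$. Hence $\operatorname{rank}\mathbf Q\operatorname{diag}(\bm a)\mathbf Q=|\mathcal S|$ exactly. The rank is carried by vectors with $v_k=1$, $v_u=-1$ (zero elsewhere) for $k\in\mathcal S$, $u\in\mathcal U$, which your subspace excludes.

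If $\mathcal U=\emptyset$, then $v_i=c/a_i$, and $\sum_i v_i=0$ leaves at most a one-dimensional kernel inside $\bm 1^\perp$. So the rank is at least $n-2\ge 5$, contradicting the bound $4$.

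Thus $|\mathcal S|\le 4$ and $|\mathcal U|\ge n-4$, and your argument for (ii) then goes through. The paper itself passes from $\operatorname{rank}\le 4$ to $|\mathcal S|\le 4$ without comment; this kernel computation is what that step needs.
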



Geometrically, if $a_k \neq 0$ for some node $k$, matching $d^{*}_{ik}(t)^2 = d_{ik}(t)^2 + a_k$ across all unshifted nodes $i \in \mathcal{U} = \{j : a_j = 0\}$ forces the at least $n - 4$ positions $\{z_i(t)\}_{i \in \mathcal{U}}$ onto a single line in $\mathbb{R}^2$. This collinear degeneracy contradicts Assumption (A3), forcing $\bm{a} = \bm{0}$. 

Beyond global uniqueness, asymptotic theory and model selection require local non-degeneracy. The following Proposition~\ref{prop:fisher} ensures this by establishing strict positive definiteness of the expected Fisher information matrix on $\mathcal{C}_0 \times \mathbb{R}^n$.

\begin{proposition}
\label{prop:fisher}
Let $n \ge 7$ and let $(\mathbf{c}, \bm{\beta}) \in \mathcal{C}_0 \times \mathbb{R}^n$ satisfy (A1), (A2), and (A3). Then the mapping $(\mathbf{c}, \bm{\beta}) \mapsto \{ \log \lambda_{ij}(\cdot) \}_{(i,j) \in \mathcal{D}}$ has an injective differential on the $(2n-3)K + n$ free coordinates, and the expected Fisher information matrix
\begin{equation}
  \mathcal{I}(\mathbf{c}, \bm{\beta}) = \sum_{(i,j) \in \mathcal{D}} \int_0^T \nabla \log \lambda_{ij}(t) \, \nabla \log \lambda_{ij}(t)^{\top} \lambda_{ij}(t) \, dt
\end{equation}
is strictly positive definite.
\end{proposition}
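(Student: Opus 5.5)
The plan is to reduce positive definiteness of $\mathcal{I}$ to injectivity of the differential. For a tangent vector $\mathbf{u} = (\mathbf{u}_{\bm{\beta}}, \mathbf{u}_{\mathbf{c}})$ in the free coordinates of $\mathcal{C}_0 \times \mathbb{R}^n$,
\[
\mathbf{u}^\top \mathcal{I}\mathbf{u} = \sum_{(i,j)\in\mathcal{D}} \int_0^T \bigl(\nabla\log\lambda_{ij}(t)^\top \mathbf{u}\bigr)^2 \lambda_{ij}(t)\,dt .
\]
Since $\lambda_{ij} > 0$ and the integrand is continuous (piecewise polynomial), this form vanishes iff every directional derivative vanishes identically in $t$. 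Both claims therefore follow once we show that
\[
\nabla\log\lambda_{ij}(t)^\top\mathbf{u} \equiv 0 \quad\text{for all } (i,j)\in\mathcal{D}
\]
forces $\mathbf{u} = \mathbf{0}$.

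Write $h_i(t) = (\bm{\phi}(t)^\top \otimes \mathbf{I}_2)\mathbf{u}_{\mathbf{c}_i}$ for the induced trajectory perturbation. The anchoring gives $h_1 \equiv 0$ and $h_{2,y} \equiv 0$. The vanishing condition then reads
\[
\delta_{ij}(t) := u_{\beta_i} + u_{\beta_j} - 2\bigl(z_i - z_j\bigr)^\top\bigl(h_i - h_j\bigr) = 0 .
\]
This is exactly the linearization of the additive-shift identity \eqref{eq:additive-shift}. The strategy is to redo the proof of Proposition~\ref{prop:joint} at first order, where it becomes a statement about infinitesimal rigidity.

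\textbf{Step 1: linearized Gram relation.} Write $\delta\mathbf{D}(t)$ for the matrix with entries $-2(z_i - z_j)^\top(h_i - h_j)$. The vanishing condition says
\[
\delta\mathbf{D}(t) = -\bigl(\mathbf{a}\bm{1}^\top + \bm{1}\mathbf{a}^\top - 2\operatorname{diag}(\mathbf{a})\bigr), \qquad \mathbf{a} = \mathbf{u}_{\bm{\beta}} .
\]
Conjugating by $\mathbf{Q}$ gives
\[
\mathbf{Q}\operatorname{diag}(\mathbf{a})\mathbf{Q} = \delta\mathbf{G}(t) = \mathbf{Q}\bigl(\mathbf{Z}\mathbf{H}^\top + \mathbf{H}\mathbf{Z}^\top\bigr)\mathbf{Q},
\]
which has rank at most 4. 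As in Proposition~\ref{prop:joint}(i), $\mathcal{S} = \{i : u_{\beta_i} \neq 0\}$ then has at most four elements. This needs a small lemma: $\mathbf{Q}\operatorname{diag}(\mathbf{a})\mathbf{Q}$ has rank at least $|\mathcal{S}| - 1$, and one must check the paper's bound $|\mathcal{S}| \le 4$ carefully, possibly obtaining 5 instead, which $n \ge 7$ still accommodates for Step 2.

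\textbf{Step 2: infinitesimal rigidity on unshifted nodes.} On $\mathcal{U} = V\setminus\mathcal{S}$, with $|\mathcal{U}| \ge n-4 \ge 3$, we have $(z_i - z_j)^\top(h_i - h_j) \equiv 0$ for $i,j \in \mathcal{U}$. Node $1$ contributes $z_1 = h_1 = 0$ but may lie in $\mathcal{S}$. I would first argue $1 \in \mathcal{U}$, or else handle it through the $(1,j)$ equations: $u_{\beta_1} + u_{\beta_j} = 2z_j^\top h_j$.

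Since $\mathcal{U}$ contains at least three non-collinear points at $t^\dagger$ by (A3), and hence on a neighbourhood by continuity, infinitesimal rigidity of a non-degenerate planar framework on a complete graph gives $h_i = \mathbf{R}(t) z_i + \mathbf{e}(t)$ for $i \in \mathcal{U}$, with $\mathbf{R}(t)$ skew-symmetric.

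The same device as in Proposition~\ref{prop:joint}(ii) then applies: the dyads $(i,1)$ give $2 z_i^\top \mathbf{e}(t) = u_{\beta_1} + u_{\beta_i}$, so non-collinearity of $\mathcal{U}$ forces $\mathbf{e} = \mathbf{0}$ and $u_{\beta_1} = 0$. For $k \in \mathcal{S}$, the dyads $(i,k)$ with $i \in \mathcal{U}$ give
\[
2 z_i^\top (h_k - \mathbf{R}z_k) = \text{const in } i,
\]
which by the same non-collinearity forces $h_k = \mathbf{R} z_k$ and $u_{\beta_k} = 0$. Thus $\mathbf{u}_{\bm{\beta}} = \mathbf{0}$ and $h_i = \mathbf{R}(t) z_i$ for all $i$.

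\textbf{Step 3: kill the rotation.} Node 2 has $h_2 = (h_{2x}, 0)$ and $z_2 = (x_2, 0)$. Writing $\mathbf{R} = r(t)\begin{pmatrix}0&-1\\1&0\end{pmatrix}$, the $y$-component of $\mathbf{R}z_2$ is $r(t)x_2(t)$, so $r(t)x_2(t) = 0$. By (A1), $r \equiv 0$. Hence $h_i \equiv 0$, and linear independence of the B-splines gives $\mathbf{u}_{\mathbf{c}} = \mathbf{0}$.

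\textbf{Main obstacle.} The rigidity and non-collinearity arguments above are pointwise in $t$, while (A3) is only assumed at the single time $t^\dagger$. I would run Steps 1–3 on a neighbourhood of $t^\dagger$, where non-collinearity persists by continuity. Since $h_i$ is a spline, vanishing on an open interval kills only the locally supported coefficients. To extend to all of $[0,T]$, I would use (A2) together with a pointwise version of the argument. If that fails, I would strengthen the hypothesis to require (A3) on a set meeting every knot interval, and flag this as a needed refinement of the statement.
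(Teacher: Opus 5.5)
Your route (reduce to injectivity of the differential, then linearize the proof of Proposition~\ref{prop:joint} into an infinitesimal-rigidity argument) is the right one. It is also more careful than the paper's proof. The paper eliminates node~1 to get $z_i^\top h_j + z_j^\top h_i = u_{\beta_1}$ and then simply asserts that this forces $h_i=\mathbf R z_i$. Killing $\mathbf u_{\bm\beta}$ first, as you do, is what lets the rigidity step be justified.

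\textbf{First gap: the ``main obstacle'' is left open.} As written, the proposal does not close, and your fallback of strengthening (A3) is unnecessary. Your hint ``use (A2) together with a pointwise version'' is the right one, but the reason it works is missing: $\mathbf a=\mathbf u_{\bm\beta}$ does not depend on $t$. The argument then runs as follows.
\begin{enumerate}
\item Run Steps~1--2 at the single time $t^\dagger$. The argument there is purely pointwise, so no neighbourhood or spline local support is needed. It gives $\mathbf u_{\bm\beta}=\mathbf 0$, hence $\mathcal S=\emptyset$ at every $t$.
\item The vanishing condition then reads $(z_i(t)-z_j(t))^\top(h_i(t)-h_j(t))=0$ for all pairs in $V$ and all $t$. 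The relevant framework is now the whole node set, not $\mathcal U$.
\item By (A1)--(A2) the whole node set is non-collinear at every $t$: $z_1(t)=\mathbf 0$, $z_2(t)=(x_2(t),0)^\top$ with $x_2(t)\neq 0$, and $y_k(t)\neq 0$ for some $k>2$.
\item Infinitesimal rigidity of the complete framework gives $h_i(t)=\mathbf R(t)z_i(t)+\mathbf e(t)$ for all $i$ at every $t$. Then $h_1\equiv\mathbf 0$ kills $\mathbf e(t)$, and your Step~3 kills $\mathbf R(t)$ pointwise, so $h_i\equiv\mathbf 0$ on $[0,T]$.
\end{enumerate}
So (A3) is used only once, to kill the $\bm\beta$-direction, while (A2) supplies rigidity on all of $[0,T]$. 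The statement needs no refinement.

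\textbf{Second gap: Step~1 needs repair.} Your lemma $\operatorname{rank}\{\mathbf Q\operatorname{diag}(\mathbf a)\mathbf Q\}\ge|\mathcal S|-1$ is false in general. If $\mathcal U=\emptyset$ and $\sum_i a_i^{-1}=0$, then $(a_i^{-1})_i$ lies in the kernel together with $\bm 1$, and the rank is $|\mathcal S|-2$.

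Your fallback that ``$|\mathcal S|\le 5$ would still suffice'' is also wrong. (A3) controls only subsets of size $n-4$, so Step~2 needs $|\mathcal U|\ge n-4$ exactly.

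The correct count is as follows. If $\mathcal U\neq\emptyset$, the kernel of $\mathbf Q\operatorname{diag}(\mathbf a)\mathbf Q$ is $\operatorname{span}(\bm 1)\oplus\{\mathbf y\perp\bm 1:\ \operatorname{supp}\mathbf y\subseteq\mathcal U\}$, of dimension $|\mathcal U|$, so the rank is exactly $|\mathcal S|$. If $\mathcal U=\emptyset$, the rank is at least $n-2\ge 5$, contradicting $\operatorname{rank}\delta\mathbf G(t)\le 4$. Hence $|\mathcal S|\le 4$ for $n\ge 7$, which is exactly what the appeal to (A3) requires.
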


Strict positive definiteness of $\mathcal{I}(\mathbf{c}, \bm{\beta})$, combined with $\bm{\Pi} \succ 0$ in \eqref{eq:Pi}, guarantees local strict concavity of the penalized log-likelihood surface. Consequently, the approximation of BIC in Section~\ref{subsec:bic} is mathematically well-posed.


%% file: tex_files/4_estimation.tex
\section{Estimation}
\label{sec:estimation}

To avoid joint optimization over a non-convex parameter space, we decouple \eqref{eq:objective} into two conditionally efficient updates. Given $\bm{\beta}$, Section~\ref{subsec:sgd} develops a stochastic gradient ascent scheme over subsampled dyad minibatches. Given $\mathbf{c}$, Section~\ref{subsec:beta-update} solves the strictly concave profile problem in $\bm{\beta}$ via a parallel fixed-point update. Both blocks rely on Gauss--Legendre quadrature to evaluate cumulative dyadic intensities (Section~\ref{subsec:quadrature}). To avoid local minima from trajectory misalignment, Section~\ref{subsec:init} leverages the idea of optimal transport alignment to construct a piecewise-linear initializer, while Section~\ref{subsec:bic} selects smoothing parameters via a curvature-adjusted Bayesian information criterion accelerated by randomized trace estimation.Algorithm~\ref{alg:main} summarizes the complete procedure.

\begin{algorithm}[t]
\caption{Estimation of the \dlspp{} model}
\label{alg:main}
\begin{algorithmic}[1]
  \State \textbf{Input:} events $\{\mathcal{H}_{ij}\}$, basis $\{\phi_k\}$,
    quadrature $\{(s_g,w_g)\}$, tuning $(\rho_1,\rho_2,\tau^2)$, batch size $B$,
    step sizes $\{\eta_s\}$.
  \State Precompute $\bm{\Phi}$, $\bm{\Omega}_1$, $\bm{\Omega}_2$.
  \State Initialize $\mathbf{c}^{(0)} \in \mathcal{C}_0$ by Section~\ref{subsec:init}; set $\bm{\beta}^{(0)}$ by \eqref{eq:beta-fp} at $\mathbf{c}^{(0)}$.
  \For{$s = 0,1,2,\dots$ until convergence}
    \State \textbf{(a) Trajectory block.} For one pass over a random partition of
      $\mathcal{D}$ into batches of size $B$: draw $\mathcal{B}$, form
      $\widehat{\nabla}_{\mathbf{c}}\mathcal{J}$ by \eqref{eq:sgd-grad}, and update
      $\mathbf{c} \leftarrow \mathbf{c} + \eta_s \widehat{\nabla}_{\mathbf{c}}\mathcal{J}$,
      holding anchored entries at zero.
    \State \textbf{(b) Activity block.} Iterate \eqref{eq:beta-fp} to tolerance
      $\epsilon_{\bm{\beta}}$ at the current $\mathbf{c}$.
  \EndFor
  \State \textbf{Output:} $\hat{\mathbf{c}}$, $\hat{\bm{\beta}}$.
\end{algorithmic}
\end{algorithm}

\subsection{Stochastic gradient ascent for the trajectories}
\label{subsec:sgd}

Evaluating the exact gradient $\nabla_{\mathbf{c}}\ell$ demands $O(n^2 Q K)$ operations due to summation over all $\binom{n}{2}$ dyads and $Q$ quadrature nodes. At iteration $s$, we sample a minibatch $\mathcal{B}_s \subset \mathcal{D}$ of $B$ dyads uniformly at random without replacement and construct the unbiased estimator
\begin{equation}
\label{eq:sgd-grad}
  \widehat{\nabla}_{\mathbf{c}}\, \mathcal{J}
  \;=\;
  \frac{\binom{n}{2}}{B}
  \sum_{(i,j) \in \mathcal{B}_s}
  \nabla_{\mathbf{c}}\, \ell_{ij}(\mathbf{c},\bm{\beta})
  \;-\;
  \nabla_{\mathbf{c}}\, \mathcal{P}(\mathbf{c}),
\end{equation}
where $\ell_{ij}$ is the log-likelihood contribution of dyad $(i,j)$ and the cheap penalty gradient is evaluated in full. We update coefficients via $\mathbf{c} \leftarrow \mathbf{c} + \eta_s \widehat{\nabla}_{\mathbf{c}}\mathcal{J}$, holding anchored entries at zero to enforce $\mathbf{c} \in \mathcal{C}_0$. Subsampling reduces the computational cost from $O(n^2 Q K)$ to $O(B Q K)$. A decaying step size $\eta_s$ is used until the relative change in $\mathcal{J}$ falls below tolerance.

\subsection{Numerical evaluation of the cumulative intensity}
\label{subsec:quadrature}

We use $\Lambda_{ij} = \int_0^T \lambda_{ij}(t)\,dt$ to denote the cumulative intensity of dyad $(i,j)$.
Under \eqref{eq:dist-spline} the integrand of $\Lambda_{ij}$ is the exponential of a piecewise polynomial of degree $2p$, so the integral has no closed form and is evaluated
by Gauss--Legendre quadrature \citep{stroud1966gaussian}. Let $\{(s_q, w_q)\}_{q=1}^{Q}$ be the nodes and weights of the rule with $Q$ points on $[0,T]$, so that
\begin{equation}
\label{eq:quad}
  \Lambda_{ij}
  \;\approx\;
  \sum_{q=1}^{Q} w_q \, \exp\bigl\{ \beta_i + \beta_j - d_{ij}(s_q)^2 \bigr\}.
\end{equation}
Although the log-likelihood is re-evaluated at every iteration, the grid is fixed throughout so the basis matrix $\bm{\Phi} \in \mathbb{R}^{Q \times K}$ with $\Phi_{qk} = \phi_k(s_q)$ is formed only once and \eqref{eq:quad} is therefore a weighted sum over a fixed grid for every dyad at every iteration.
Similarly, the Gram matrices \eqref{eq:f1} and \eqref{eq:f2} depend only on the basis and are computed once. 
Further details are given in the Supplement Section 1. 

\subsection{Profile update for the activity parameters}
\label{subsec:beta-update}

Fix $\mathbf{c}$ and recall $I_{ij} = \int_0^T \exp\{-d_{ij}(t)^2\}\,dt$, evaluated by \eqref{eq:quad}. Differentiating \eqref{eq:loglik} with respect to $\beta_i$ gives
\[
  \frac{\partial \ell}{\partial \beta_i}
  \;=\;
  m_i \;-\; e^{\beta_i} \sum_{j \ne i} e^{\beta_j} I_{ij},
\] since node $i$ appears in exactly the $n-1$ dyads containing it and each dyad is counted once. Setting the score to zero yields the system $  e^{\beta_i} \sum_{j \ne i} e^{\beta_j} I_{ij} =m_i$ for $i = 1,\dots,n$,
which has no closed-form solution. We solve it by the parallel fixed-point iteration
\begin{equation}
\label{eq:beta-fp}
  \beta_i^{(s+1)}
  \;=\;
  \log m_i
  \;-\;
  \log \sum_{j \ne i} \exp\bigl( \beta_j^{(s)} + \log I_{ij} \bigr),
  \qquad i = 1,\dots,n,
\end{equation}
evaluated in the log-sum-exp form shown to avoid overflow when $\beta_j$ or $\log I_{ij}$ is large in magnitude. By Proposition~\ref{prop:beta} the limit is the unique maximizer whenever $m_i > 0$ for all $i$. 
The updates for $\beta$ under the homogeneous baseline \eqref{eq:homog} and the Poisson $\beta$-model~\eqref{eq:activity} are included in Supplementary Section 3.


\subsection{Initialization}
\label{subsec:init}

The log-likelihood is nonconvex in the spline coefficients, so the optimizer may be trapped in local minima, and initialization matters in practice. We therefore construct a starting configuration that carries the dominant latent movement, using a two-stage piecewise-linear procedure.

\paragraph{Stage 1: block-wise static fits}
We first partition $[0,T]$ into subintervals $[0,T/2]$ and $[T/2,T]$. Within each block $b \in \{1,2\}$, we fit a static Poisson latent space model with cumulative intensity
\begin{equation}
\label{eq:static-fit}
  \log \Lambda^{(b)}_{ij}
  \;=\;
  \log(T/2) \;+\; \beta^{(b)}_i + \beta^{(b)}_j \;-\; \norm{ w^{(b)}_i - w^{(b)}_j }^2.
\end{equation}
Positions $\mathbf{W}^{(b)} \in \mathbb{R}^{n \times 2}$ and node intercepts are estimated jointly by maximizing the block log-likelihood with a light ridge penalty $\frac{\gamma}{2}\sum_i \norm{w^{(b)}_i}^2$. 
We discard the fitted intercepts, retaining only $\mathbf{W}^{(1)}$ and $\mathbf{W}^{(2)}$.
Each block is anchored as in \eqref{eq:C0}, so each is determined only up to $\mathbb{V}_4$.

\paragraph{Stage 2: alignment and projection}
Since each static block is anchored independently, directly connecting $\mathbf{W}^{(1)}$ and $\mathbf{W}^{(2)}$ across time can artificially twist or invert node trajectories. Proposition~\ref{prop:ident} establishes that this mismatch is restricted to coordinate axis sign flips in $\mathbb{V}_4$. We test $\mathbf{W}^{(2)}$ against all four candidates in $\mathbb{V}_4$ to minimize total spatial displacement
\begin{equation}
\label{eq:align}
  \mathbf{U}^{*}
  \;=\;
  \operatorname{arg\,min}_{\mathbf{U} \in \mathbb{V}_4}
  \sum_{i=1}^{n} \norm{ w^{(1)}_i - \mathbf{U}^{\top} w^{(2)}_i }^2,
  \qquad
  \mathbf{W}^{(2)*} = \mathbf{W}^{(2)}\mathbf{U}^{*}.
\end{equation}
This discrete Procrustes alignment frames gauge selection as an optimal transport problem with minimal Euclidean displacement cost. We then connect $w^{(1)}_i$ and $w^{(2)*}_i$ linearly and project these paths onto the B-spline basis via least squares yields $\mathbf{c}^{(0)} \in \mathcal{C}_0$. This warm start captures broad movement trends while preventing erratic path oscillations during early optimization steps. Figure~\ref{fig:initialization} illustrates the construction on a simulated network.


\begin{figure}[t]
  \centering
  \begin{subfigure}[t]{0.32\linewidth}
    \centering
    \includegraphics[width=\linewidth]{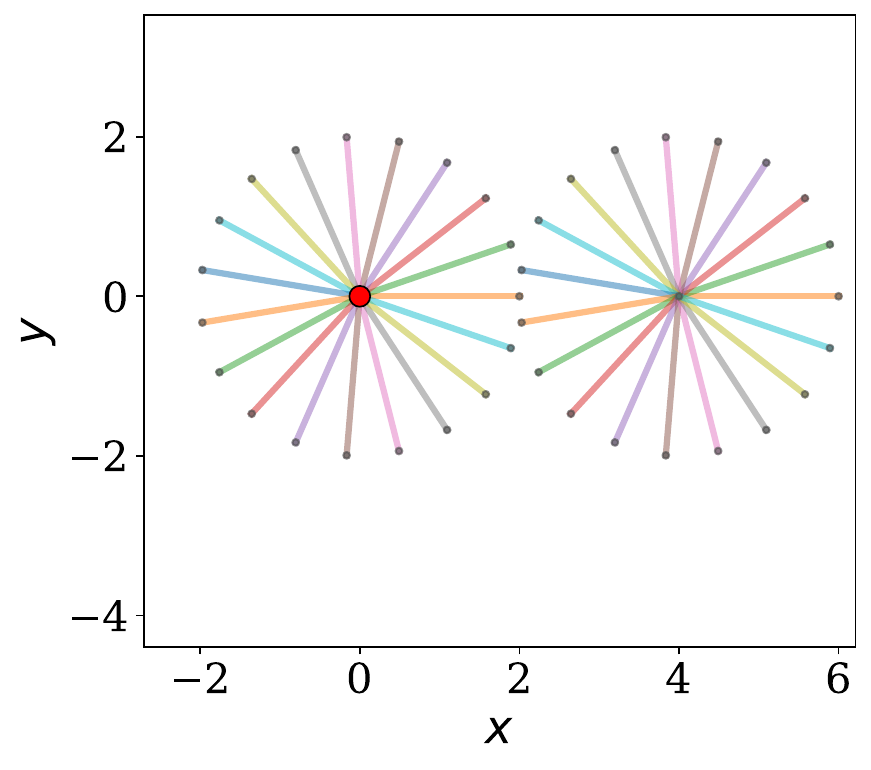}
    \caption{True trajectories.}
    \label{fig:init-true}
  \end{subfigure}\hfill
  \begin{subfigure}[t]{0.32\linewidth}
    \centering
    \includegraphics[width=\linewidth]{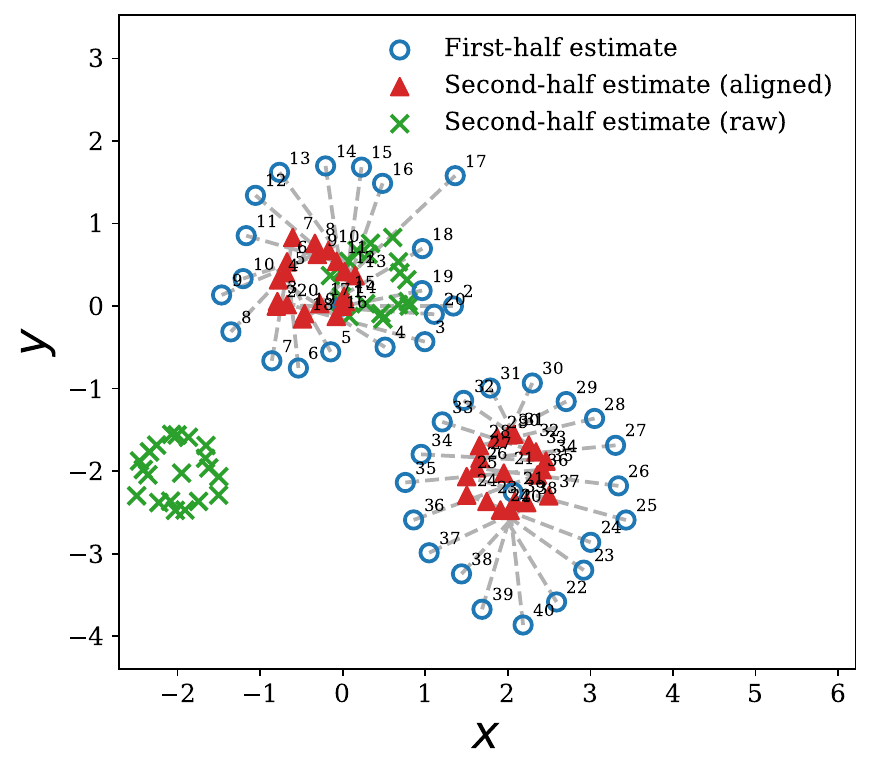}
    \caption{Block-wise static fits.}
    \label{fig:init-static}
  \end{subfigure}\hfill
  \begin{subfigure}[t]{0.32\linewidth}
    \centering
    \includegraphics[width=\linewidth]{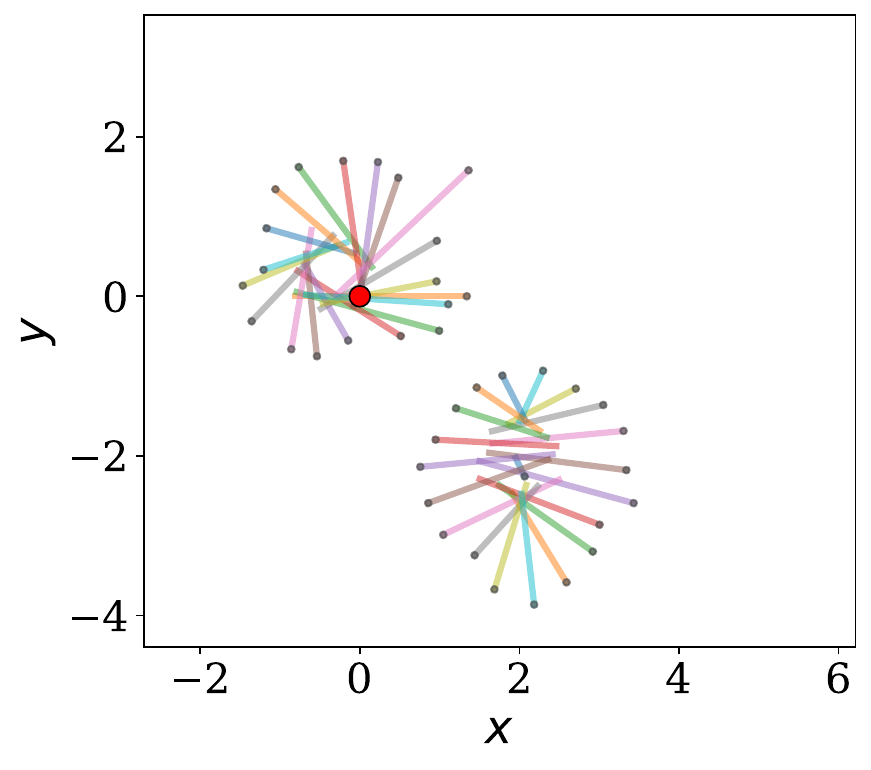}
    \caption{Piecewise-linear initializer.}
    \label{fig:init-linear}
  \end{subfigure}
  \caption{Initialization of the spline coefficients. Static models in (b) are fit on $[0,T/2]$ and $[T/2,T]$ separately; each is anchored as in \eqref{eq:C0} and is therefore determined only up to the four elements of $\mathbb{V}_4$. The orientation minimizing \eqref{eq:align} is retained, and connecting the two configurations in time gives (c), which is projected onto the B-spline basis to obtain $\mathbf{c}^{(0)}$.}
  \label{fig:initialization}
\end{figure}

\subsection{Selection of tuning parameters}
\label{subsec:bic}

We select smoothing parameters $(\rho_1,\rho_2)$ over a grid by minimizing the Bayesian information criterion \citep{schwarz1978estimating}
\begin{equation}
\label{eq:bic}
  \mathrm{BIC}(\rho_1,\rho_2)
  \;=\;
  -2\,\ell(\hat{\mathbf{c}}, \hat{\bm{\beta}})
  \;+\;
  \bigl\{ k_{\mathrm{eff}}(\rho_1,\rho_2) + n \bigr\} \log m,
\end{equation}
where $m$ is the total event count across all dyads and $k_{\mathrm{eff}}$ is the effective degrees of freedom of the trajectory component \citep{gu2005generalized}. Standard trace formulas rely on the log-likelihood Hessian, but non-convex distance mappings produce negative eigenvalues away from local modes. We therefore substitute the positive-semidefinite Gauss--Newton information matrix $\mathbf{F}$ for the Hessian, which discards second-derivative distance terms while capturing pure data curvature, yielding
\begin{equation}
\label{eq:keff}
  k_{\mathrm{eff}}
  \;=\;
  \tr\bigl\{ \mathbf{F}\,(\mathbf{F} + \mathbf{H}_{\mathcal{P}})^{-1} \bigr\},
\end{equation}
where $\mathbf{H}_{\mathcal{P}} = \mathbf{I}_{2n-3} \otimes \bm{\Pi}$ represents the penalization block. To scale to large networks, we evaluate \eqref{eq:keff} via a Hutchinson randomized trace estimator \citep{hutchinson1989stochastic} accelerated by preconditioned conjugate gradients. Supplementary~Section 4 details the curvature construction, linear system implementations, and optimization surface dynamics.


%% file: tex_files/5_simulation.tex
\section{Simulation studies}
\label{sec:simulation}

We conduct two simulation studies to evaluate the in-sample and out-of-sample empirical performance of the proposed dynamic latent space model. Motivated by international relations networks, our designs consider both fluid multipolar movements and structured bipolar coalition formations:
\begin{itemize}
    \item \textbf{S1: Fluid multipolar drift:} $N$ nodes start from random initial locations and move along smooth, continuous trajectories generated via a Fourier basis expansion. This setup models unconstrained diplomatic wandering without imposing global spatial symmetries.
    \item \textbf{S2: Bipolar alliance consolidation:} $N$ nodes are partitioned into two dynamic clusters centered at $(0,0)$ and $(4,0)$, contracting inward toward their centers over time. This setup tests the model's ability to capture sharp clustering and converging pairwise distances.
\end{itemize}
Across both settings, true activity parameters $\bm{\beta}$ are linearly spaced from $1.5$ to $2.5$ across the $N$ nodes. With an average of 4.8 interactions per dyad in S1 and 27.1 in S2, these settings evaluate the model across contrasting sparse and dense network regimes.
Figure~\ref{fig:simulations_true_initial_estimate} displays the true, initial and estimated trajectories of each design.

\begin{figure}[h]
\centering
\setlength{\tabcolsep}{0pt}
\begin{tabular}{@{}c m{0.31\textwidth} m{0.31\textwidth} m{0.31\textwidth}@{}}
 & \centering\textbf{True} & \centering\textbf{Initialization} & \centering\arraybackslash\textbf{Estimation} \\[0.15cm]
\textbf{S1} &
\includegraphics[width=\linewidth]{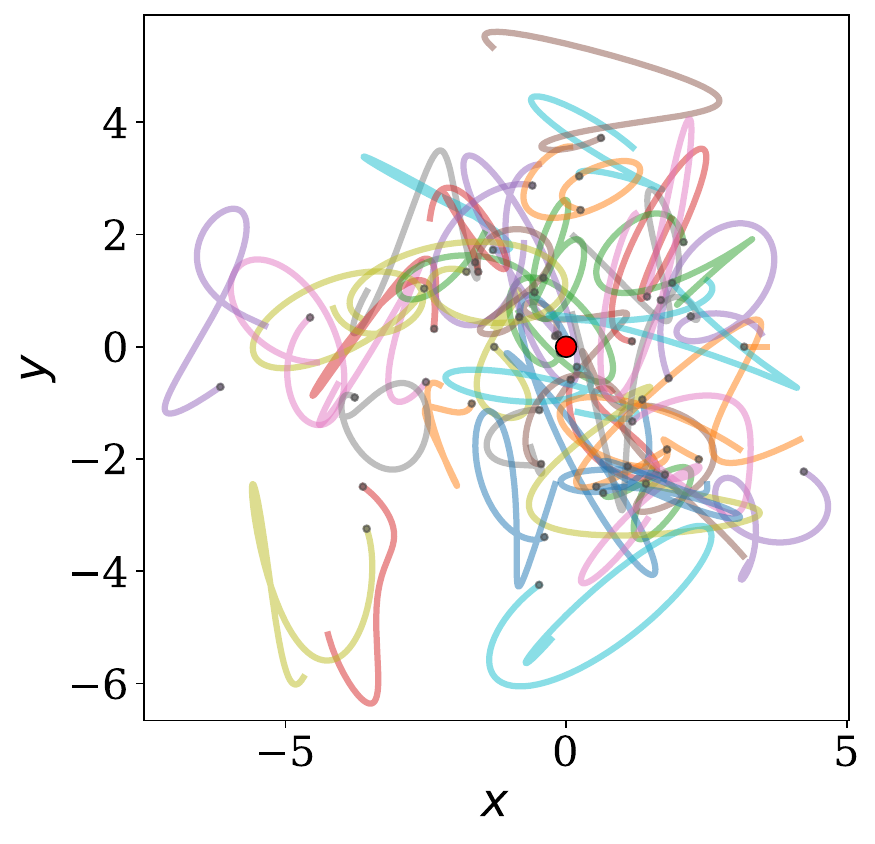} &
\includegraphics[width=\linewidth]{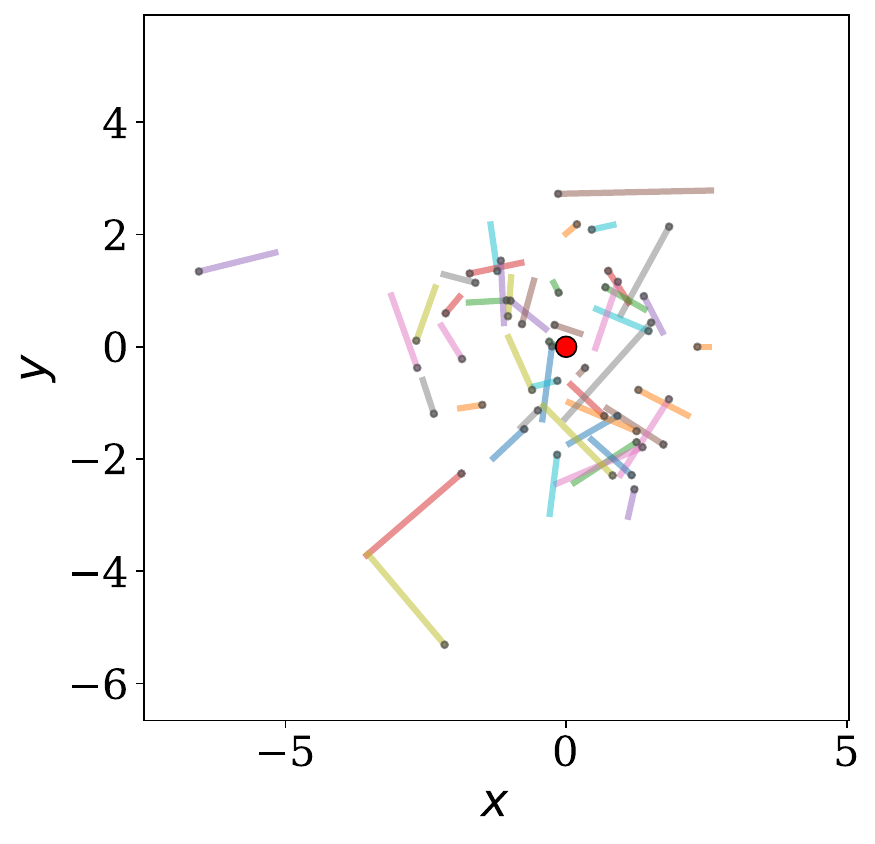} &
\includegraphics[width=\linewidth]{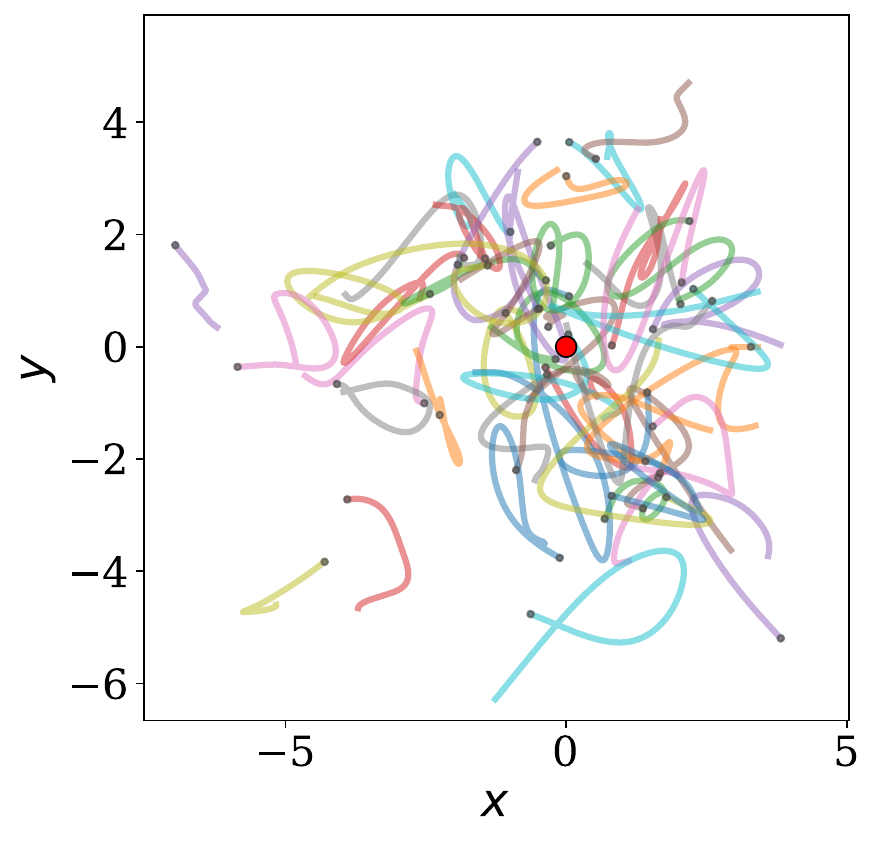} \\
\textbf{S2} &
\includegraphics[width=\linewidth]{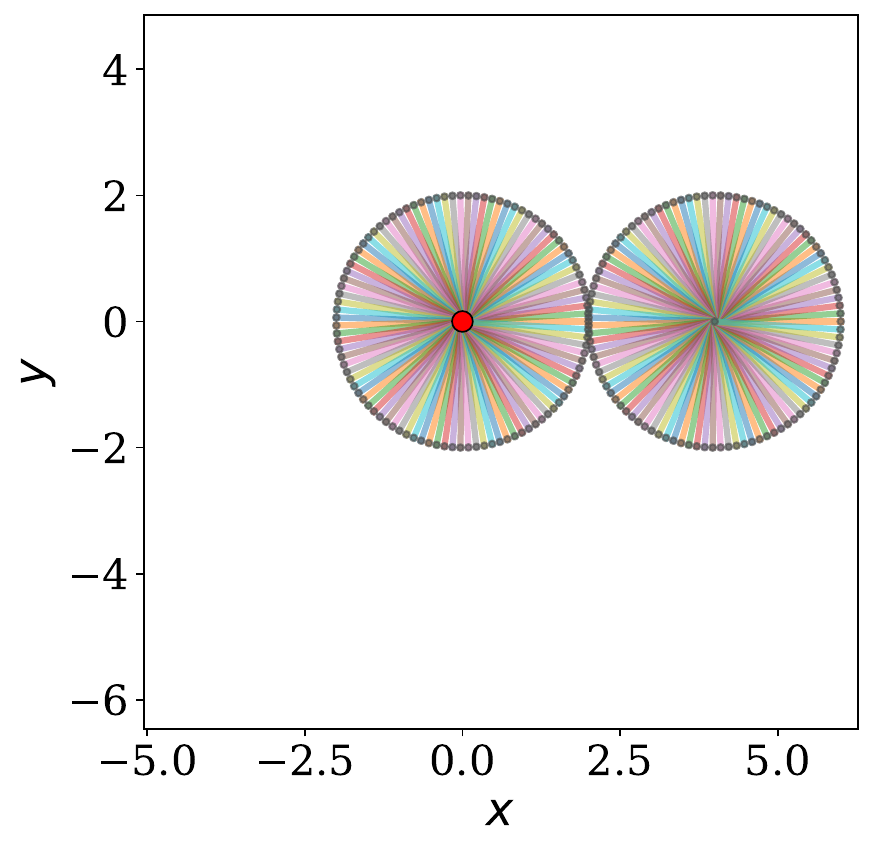} &
\includegraphics[width=\linewidth]{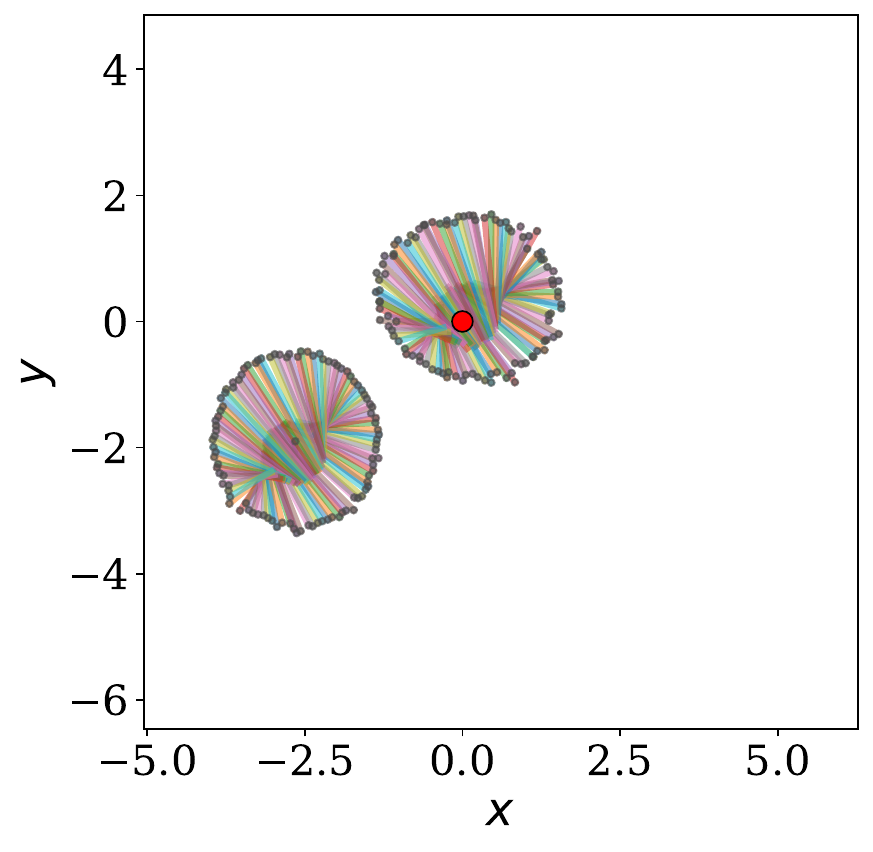} &
\includegraphics[width=\linewidth]{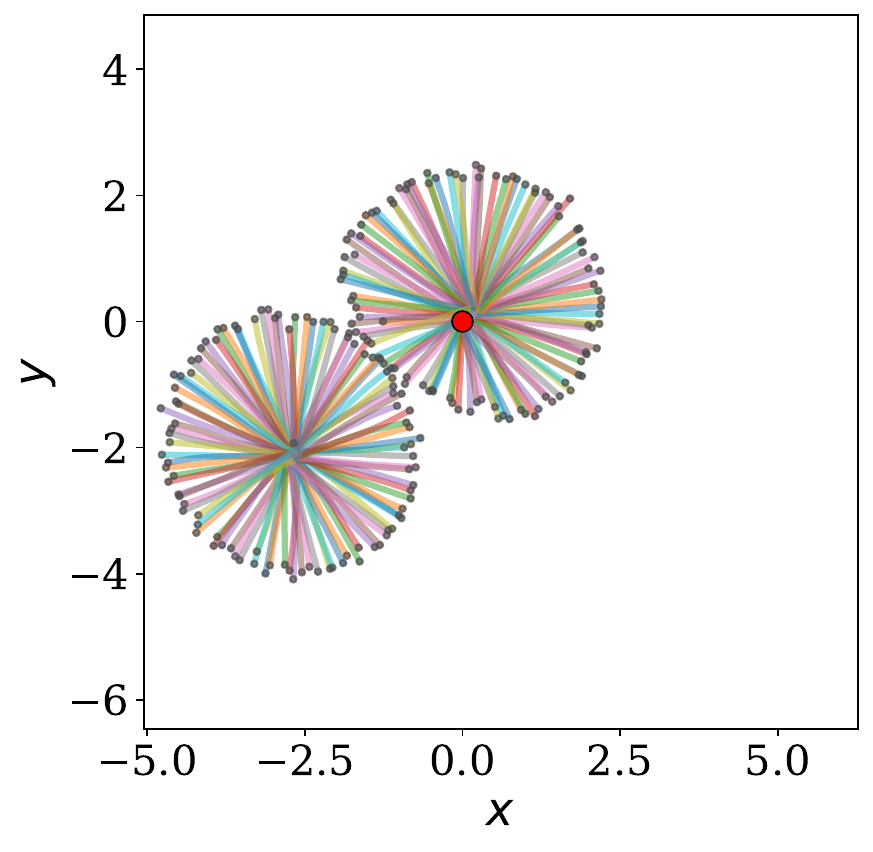}
\end{tabular}
\caption{True (left), initial (middle) and estimated (right) latent trajectories for one replication of designs S1 (top) and S2 (bottom). Each curve is one node's path over $[0,T]$. Grey points mark $t=0$ and the red point marks the origin.}
\label{fig:simulations_true_initial_estimate}
\end{figure}

\paragraph*{\textbf{Comparison baselines}}
We compare the proposed DLS-PP against four baselines. The first two methods were introduced in Section~\ref{sec:model}, the homogeneous baseline~\eqref{eq:homog} where $\log \lambda_{ij} (t) = \beta -\|z_i (t) - z_j (t) \|^2$ with $z_i(t)$ modeled by B-spline and Poisson $\beta$-model~\eqref{eq:activity} where $\log \lambda_{ij} (t) = \beta_i + \beta_j$. As two more comparisons, we consider:
\begin{itemize}
    \item The \textbf{continuous latent position models of \cite{rastelli_2023_continuous} (CLPM):} This is the closest method to ours where $\log \lambda_{ij} (t)=\beta - \|z_i (t) - z_j (t) \|^2$ and the trajectories $z_i (t)$ is assumed to be piece-wise linear curves. 
    \item A \textbf{dyad-independent Poisson process (DIPP):} This simple baseline estimates each pairwise conditional intensity function $\log \lambda_{ij}(t)$ independently by fitting a B-spline. 





\end{itemize}

\paragraph*{\textbf{In-sample and out-of-sample}}
We evaluate the proposed method from two complementary perspectives: in-sample estimation and out-of-sample prediction. 
For in-sample evaluation, the model is fit using the full observation period, with tuning parameters selected according to the proposed BIC. For out-of-sample evaluation, we divide the observation period into a training interval $[0,T_{\mathrm{tr}}]$ and a held-out interval $(T_{\mathrm{tr}},T]$. In this section, $T_{\mathrm{tr}}$ is set as $0.9$ in all experiments. The model is fitted and tuned using the training data, with the estimated $\beta$ and B-spline coefficients fixed for prediction on the held-out interval. Since the latent trajectories represented by B-splines are not directly observed beyond $T_{\mathrm{tr}}$, we consider two forecasting strategies for extending the estimated trajectories into the held-out interval. Under \emph{persistence (P)}, each node is held fixed at its
estimated terminal position in the training interval,
\[
\hat{\mathbf z}_i^{\,\mathrm{P}}(t)
=
\hat{\mathbf z}_i(T_{\mathrm{tr}}),
\qquad
t\in(T_{\mathrm{tr}},T],
\]
whereas under \emph{constant-velocity extrapolation (CV)}, each trajectory is extended linearly using its estimated terminal velocity,
\[
\hat{\mathbf z}_i^{\,\mathrm{CV}}(t)
=
\hat{\mathbf z}_i(T_{\mathrm{tr}})
+
(t-T_{\mathrm{tr}})
\hat{\mathbf z}_i'(T_{\mathrm{tr}}),
\qquad
t\in(T_{\mathrm{tr}},T].
\]

\paragraph*{\textbf{Performance evaluation}}
We use in-sample data to assess the recovery of the latent trajectories and model parameters, whereas the out-of-sample evaluates the ability of the fitted model to predict future latent configurations and observed events. 
In both in-sample and out-of-sample data, we assess trajectory recovery using the integrated mean squared error of the pairwise distances, $\mathrm{IMSE}_d$:
\[
\mathrm{IMSE}_d = \frac{1}{T \binom{N}{2}} \int_0^T \sum_{1 \le i < j \le N} \Bigl( \hat{d}_{ij}(t) - d_{ij}(t) \Bigr)^2 \, dt, \qquad \mathrm{RMSE}_d = \sqrt{\mathrm{IMSE}_d}.
\]
In practice, we evaluate the integral via a Riemann sum over a temporal grid $0 = t_1 < \dots < t_K = T$ with step sizes $\Delta t_k = t_{k+1} - t_k$:
\[
\widehat{\mathrm{IMSE}}_d = \frac{1}{T \binom{N}{2}} \sum_{k=1}^{K-1} \left[ \sum_{1 \le i < j \le N} \Bigl( \hat{d}_{ij}(t_k) - d_{ij}(t_k) \Bigr)^2 \right] \Delta t_k, \qquad \widehat{\mathrm{RMSE}}_d = \sqrt{\widehat{\mathrm{IMSE}}_d}.
\]

We additionally report the Kolmogorov--Smirnov (KS) statistic based on the time-rescaling theorem \citep{brown2002time} to assess the goodness of fit of the fitted point process in both in-sample and out-of-sample data. 

With in-sample data, representative estimated and true latent trajectories are also visualized to illustrate the quality of trajectory recovery. Moeover, parameter recovery using the estimation error of the node-specific effects ${\beta_i}$. 
With held-out data, we also assess event-count prediction. For each dyad $(i,j)$, the predicted expected number of events in the held-out interval is $\hat\mu_{ij}:=\int_{T_{\mathrm{tr}}}^{T} \hat\lambda_{ij}(t)\,dt$ so that the model implies $ N_{ij}^{\mathrm{out}} \sim \operatorname{Poisson}\!\left(\hat\mu_{ij} \right).$
We compare these predicted expected counts with the held-out counts $N_{ij}^{\mathrm{out}}$. All out-of-sample metrics are computed separately for the persistence and constant-velocity forecasting strategies. More details are given in Supplementary Section 5.

\subsection{S1: Fluid multipolar drift}

The first simulation evaluates model performance in an unconstrained, non-linear setting representing fluid multipolar state interactions. We generate trajectories for $N=50$ nodes in $\mathbb{R}^2$ over $t \in [0,1]$. Initial positions are drawn from a standard normal distribution around the origin, $\mathbf{z}_i(0) \sim \mathcal{N}(\mathbf{0}, \sigma_0^2 \mathbf{I}_2)$, with nodes $1$ and $2$ adhering to the identification constraints defined above. For all remaining nodes $i \in \{3, \dots, N\}$, trajectories are constructed via a smooth Fourier basis expansion around their initial positions:
\[
\mathbf{z}_i(t) = \mathbf{z}_i(0) + \sum_{m=1}^{M} \Bigl[ \mathbf{a}_{i,m} \sin(m \pi t) + \mathbf{b}_{i,m} \bigl(1 - \cos(m \pi t)\bigr) \Bigr], \qquad t \in [0,1],
\]
where $\mathbf{a}_{i,m}, \mathbf{b}_{i,m} \sim \mathcal{N}(\mathbf{0}, \sigma_v^2 \mathbf{I}_2)$ are random velocity vectors and $M=3$ controls path curvature. This setup produces random dynamics without spatial symmetries, rigorously testing the flexibility of our trajectory regularizers. The true trajectories are shown in Figure~\ref{fig:simulations_true_initial_estimate}.

\begin{figure}[htbp]
    \centering
    \begin{subfigure}[t]{.32\textwidth}
        \centering
        \includegraphics[width=\linewidth]{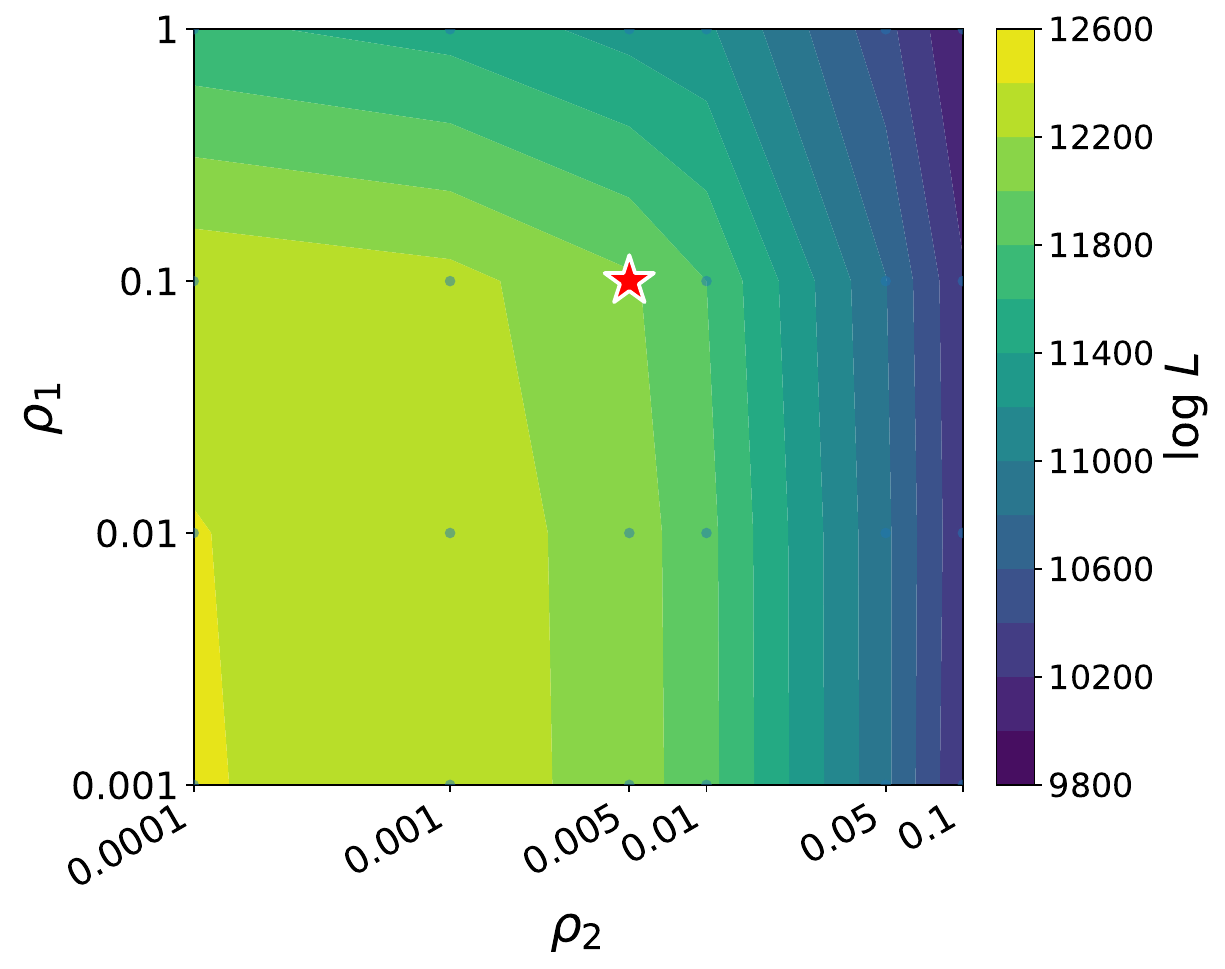}
        \subcaption{Log-likelihood}
        \label{fig:fluid_drift_loglik_surface}
    \end{subfigure}\hfill
    \begin{subfigure}[t]{.32\textwidth}
        \centering
        \includegraphics[width=\linewidth]{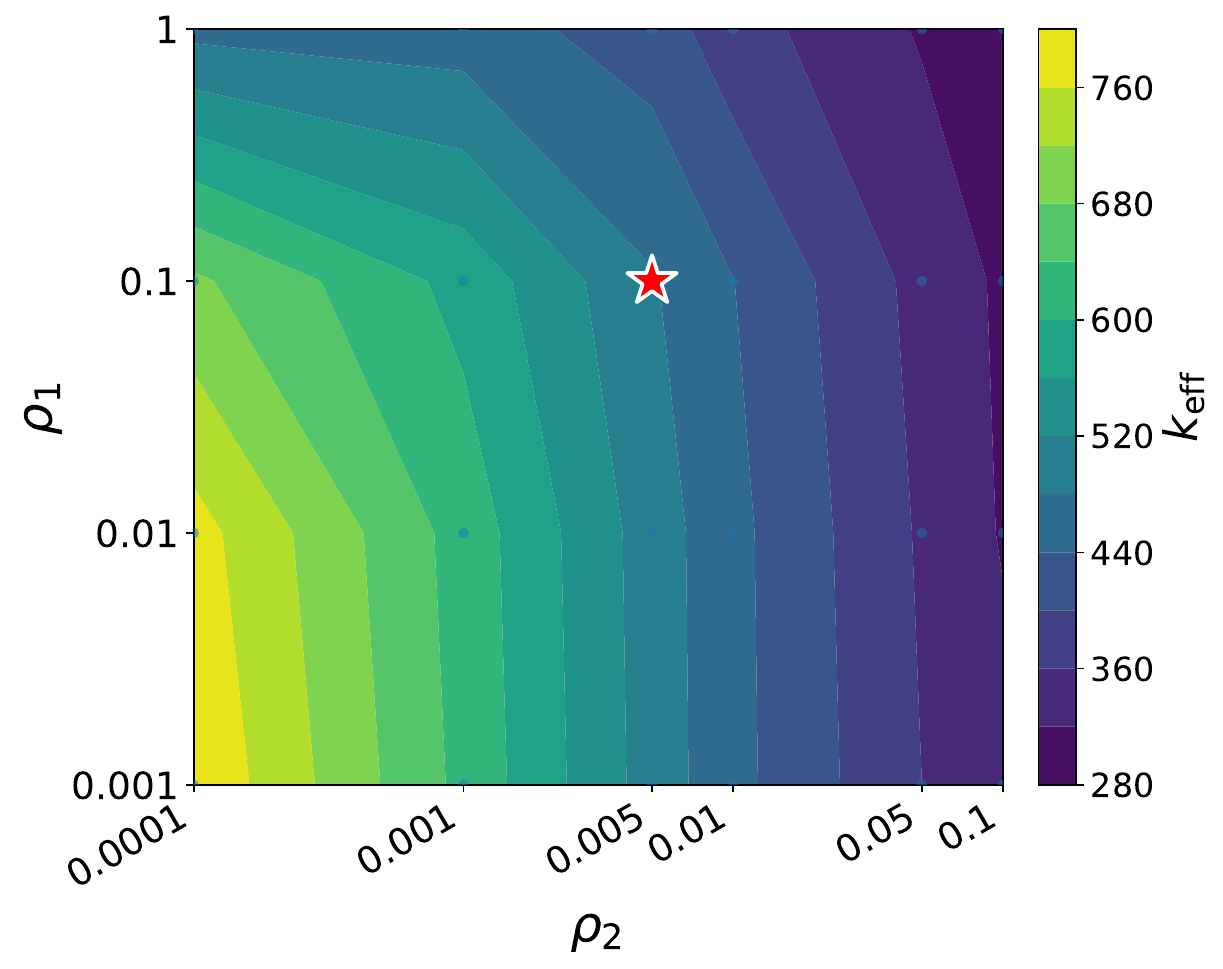}
        \subcaption{Effective degrees of freedom}
\label{fig:fluid_drift_keff_surface}
    \end{subfigure}\hfill
    \begin{subfigure}[t]{.32\textwidth}
        \centering
        \includegraphics[width=\linewidth]{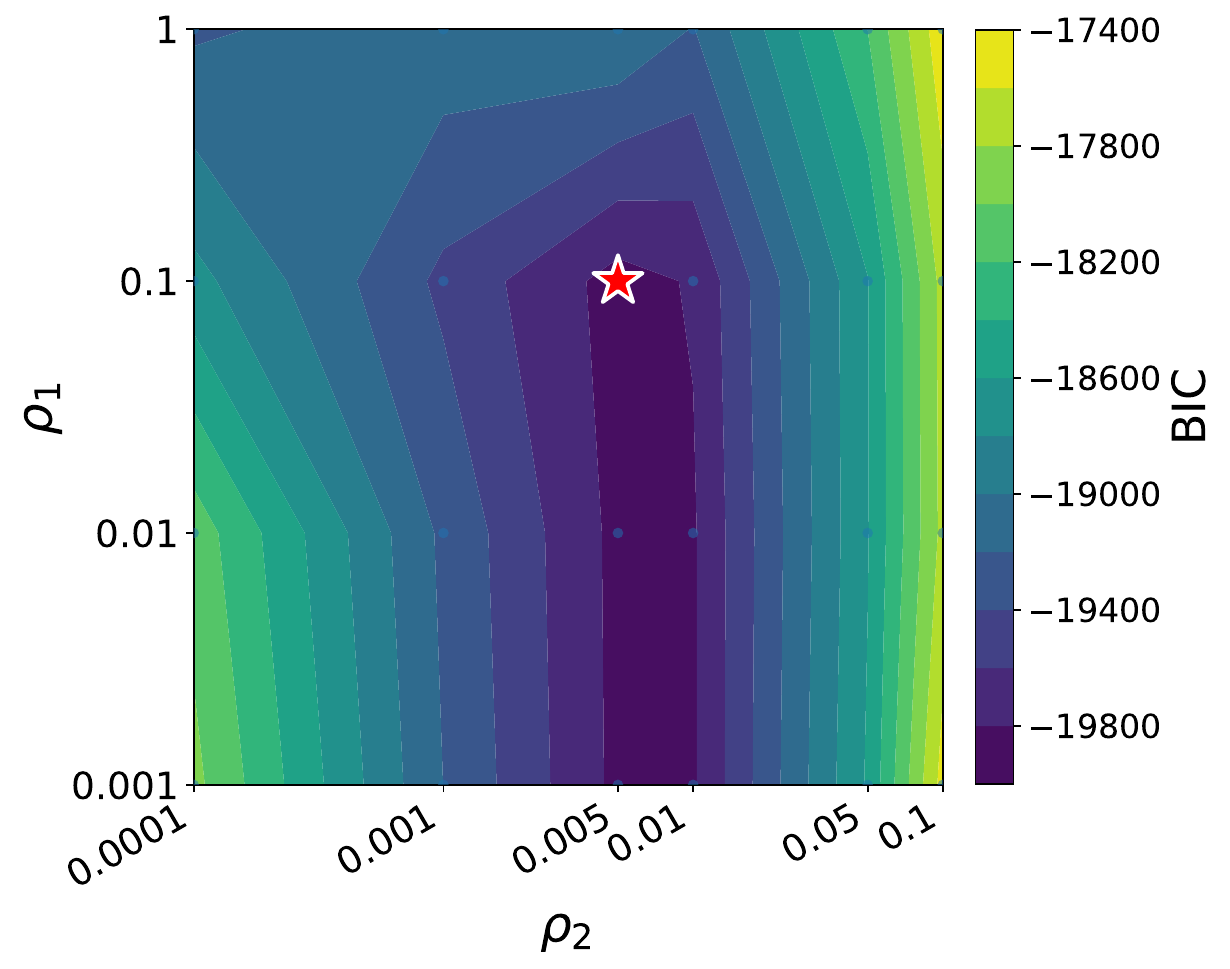}
        \subcaption{BIC}
        \label{fig:fluid_drift_bic_surface}
    \end{subfigure}
    \caption{Tuning surfaces over the two-dimensional $(\rho_1,\rho_2)$ grid in S1. The star marks the selected pair, minimizing BIC.}
    \label{fig:fluid_drift_BIC}
\end{figure}

Figure~\ref{fig:fluid_drift_BIC} reports the log-likelihood, effective degrees of freedom $k_{\text{eff}}$, and BIC evaluated over a two-parameter tuning grid of $(\rho_1,\rho_2)$. Both the log-likelihood and $k_{\text{eff}}$ decrease as either penalty increases, as expected, since stronger penalization trades fit for smoothness. BIC balances the two and attains an interior minimum, as indicated by the star in the figure. All three surfaces vary far less along $\rho_1$ than along $\rho_2$, indicating that fit and selection are driven primarily by $\rho_2$ over the ranges tested.

Figure~\ref{fig:S1} shows the estimation of $\beta$ over 30 replications and the true and estimated distance of 18 randomly selected pairs. Table~\ref{tab:S1} reports the mean and standard error over 30 trials, DLS-PP achieves the good predictive performance across most metrics.

\begin{figure}[htbp]
\centering
\begin{subfigure}[t]{0.31\textwidth}
    \centering
    \includegraphics[height=4.5cm]{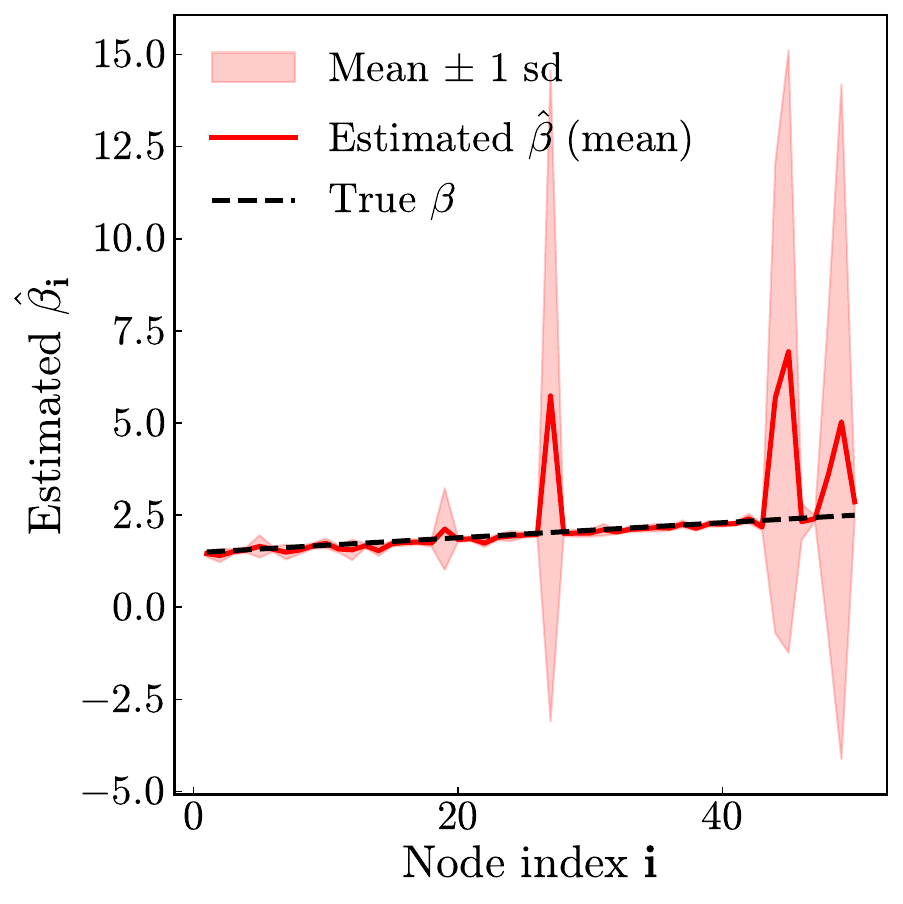}
    \caption{$\beta$ estimation}
    \label{fig:left}
\end{subfigure}\hfill
\begin{subfigure}[t]{0.63\textwidth}
    \centering
    \includegraphics[height=5cm]{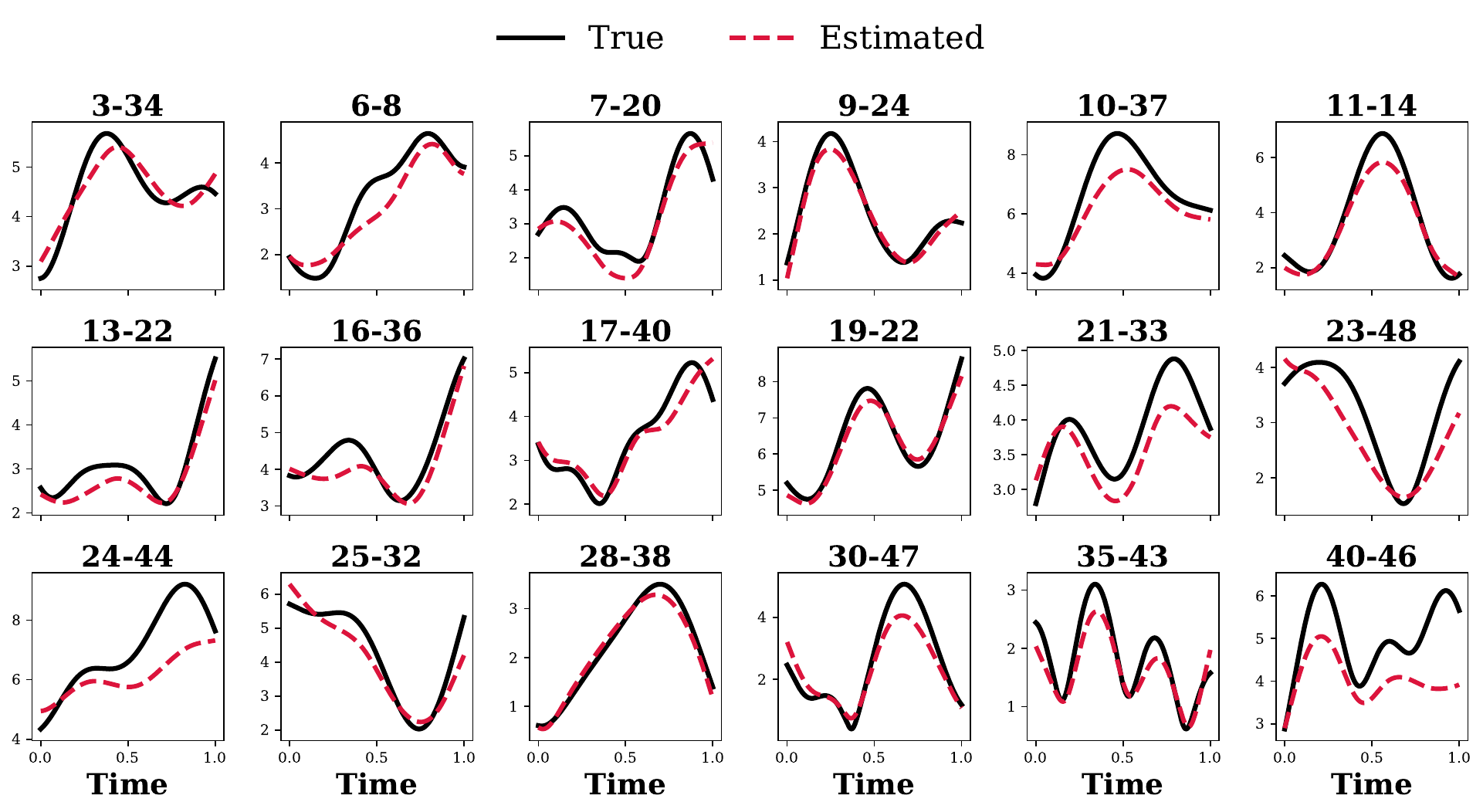}
    \caption{Randomly selected pairwise distances}
    \label{fig:right}
\end{subfigure}
\caption{Estimation of $\beta$ across 30 trials and of the latent distance in one trial under S1. Both components are recovered well, with a few outlying $\beta_i$ estimates expected under this extremely sparse regime.}
\label{fig:S1}
\end{figure}

\begin{table}[t]
\centering
\caption{(S1) Comparison of methods on in-sample fit and two out-of-sample scenarios.}
\label{tab:S1}
\footnotesize
\renewcommand{\arraystretch}{1.15}
\begin{tabular}{llccccc}
\toprule
Scenario & Metric & CLPM & DIPP & \makecell[c]{Poisson-$\beta$\\model} & \makecell[c]{Homogeneous\\baseline} & DLS-PP \\
\midrule
\multirow{3}{*}{In-sample}
  & IMSE            & \makecell[c]{1.5715 \\ (0.3967)} & N/A & N/A & \makecell[c]{0.8389 \\ (0.1607)} & \makecell[c]{\textbf{0.5383} \\ (0.4080)} \\
\cmidrule(l){2-7}
  & KS              & \makecell[c]{0.1069 \\ (0.0382)} & \makecell[c]{0.0553 \\ (0.0039)} & \makecell[c]{0.4520 \\ (0.0053)} & \makecell[c]{0.0484 \\ (0.0040)} & \makecell[c]{\textbf{0.0224} \\ (0.0130)} \\
  \cmidrule(l){2-7}
  & Err $\beta$     & \makecell[c]{1.5646 \\ (0.1328)} & N/A & \makecell[c]{1.9081 \\ (0.0204)} & \makecell[c]{1.7739 \\ (0.0184)} & \makecell[c]{\textbf{1.1264} \\ (1.1404)} \\
\midrule
\multirow{4}{*}{\makecell[l]{Out-of-sample\\Persistence}}
  & IMSE            & \makecell[c]{4.6412 \\ (0.7737)} & N/A & N/A & \makecell[c]{\textbf{1.6882} \\ (0.6525)} & \makecell[c]{1.7648 \\ (0.9096)} \\
\cmidrule(l){2-7}
  & KS              & \makecell[c]{0.3449 \\ (0.0258)} & \makecell[c]{0.2670 \\ (0.0294)} & \makecell[c]{0.4648 \\ (0.0155)} & \makecell[c]{0.2468 \\ (0.0148)} & \makecell[c]{\textbf{0.2418} \\ (0.0179)} \\
\cmidrule(l){2-7}
  & \makecell[l]{Log-\\likelihood} & \makecell[c]{-712.2064 \\ (507.4203)} & \makecell[c]{-1356.5623 \\ (360.5870)} & \makecell[c]{-109.3386 \\ (27.7944)} & \makecell[c]{154.6839 \\ (600.1946)} & \makecell[c]{\textbf{297.4742} \\ (106.0829)} \\
\cmidrule(l){2-7}
  & Count           & \makecell[c]{\textbf{0.7206} \\ (0.0446)} & \makecell[c]{2.0605 \\ (0.5539)} & \makecell[c]{0.8498 \\ (0.0380)} & \makecell[c]{0.7516 \\ (0.0410)} & \makecell[c]{0.7774 \\ (0.0660)} \\
\midrule
\multirow{4}{*}{\makecell[l]{Out-of-sample\\Constant-velocity}}
  & IMSE            & \makecell[c]{10.4659 \\ (1.8391)} & N/A & N/A & \makecell[c]{2.1064 \\ (0.8600)} & \makecell[c]{\textbf{2.0988} \\ (0.9905)} \\
\cmidrule(l){2-7}
  & KS              & \makecell[c]{0.4520 \\ (0.0460)} & \makecell[c]{0.2891 \\ (0.0247)} & N/A & \makecell[c]{0.2769 \\ (0.0186)} & \makecell[c]{\textbf{0.2678} \\ (0.0213)} \\
\cmidrule(l){2-7}
  & \makecell[l]{Log-\\likelihood} & \makecell[c]{-3888.1787 \\ (1412.4392)} & \makecell[c]{$-7.019 \times 10^{13}$ \\ ($7.5 \times 10^{12}$)} & N/A & \makecell[c]{-76.3435 \\ (703.7006)} & \makecell[c]{\textbf{79.6065} \\ (170.5268)} \\
\cmidrule(l){2-7}
  & Count           & \makecell[c]{0.8051 \\ (0.0482)} & \makecell[c]{$1.296 \times 10^{11}$ \\ ($7.31 \times 10^{9}$)} & N/A & \makecell[c]{\textbf{0.7697} \\ (0.0459)} & \makecell[c]{1.0001 \\ (0.4870)} \\
\bottomrule
\end{tabular}
\end{table}




\subsection{S2: Bipolar alliance consolidation}

The second simulation examines a structured configuration modeling bipolar alliance formation and bloc consolidation. We consider $N=200$ nodes in $\mathbb{R}^2$ over $t \in [0,1]$, partitioned equally into two communities centered at $\mathbf{c}_A = (0,0)$ and $\mathbf{c}_B = (4,0)$. Node $1$ is anchored at $\mathbf{c}_A$ and node $101$ is anchored at $\mathbf{c}_B$ across all time. The remaining nodes begin on circles of radius $r=2$ centered at their respective group origin, with initial positions $\mathbf{z}_i(0)$ equally spaced in angle. Each node moves linearly toward its group center at constant velocity according to:
\[
\mathbf{z}_i(t) = (1-t) \, \mathbf{z}_i(0) + t \, \mathbf{c}_g, \qquad t \in [0,1],
\]
for group $g \in \{A, B\}$. Each cluster forms a contracting ring that collapses deterministically into its focal pole, as shown in Figure~\ref{fig:simulations_true_initial_estimate}. This geometric structure tests the model's ability to recover sharp dynamic clustering and resolve pairwise distances during spatial collapse.

\begin{figure}[htbp]
\centering
\begin{subfigure}[t]{0.31\textwidth}
    \centering
    \includegraphics[height=4.5cm]{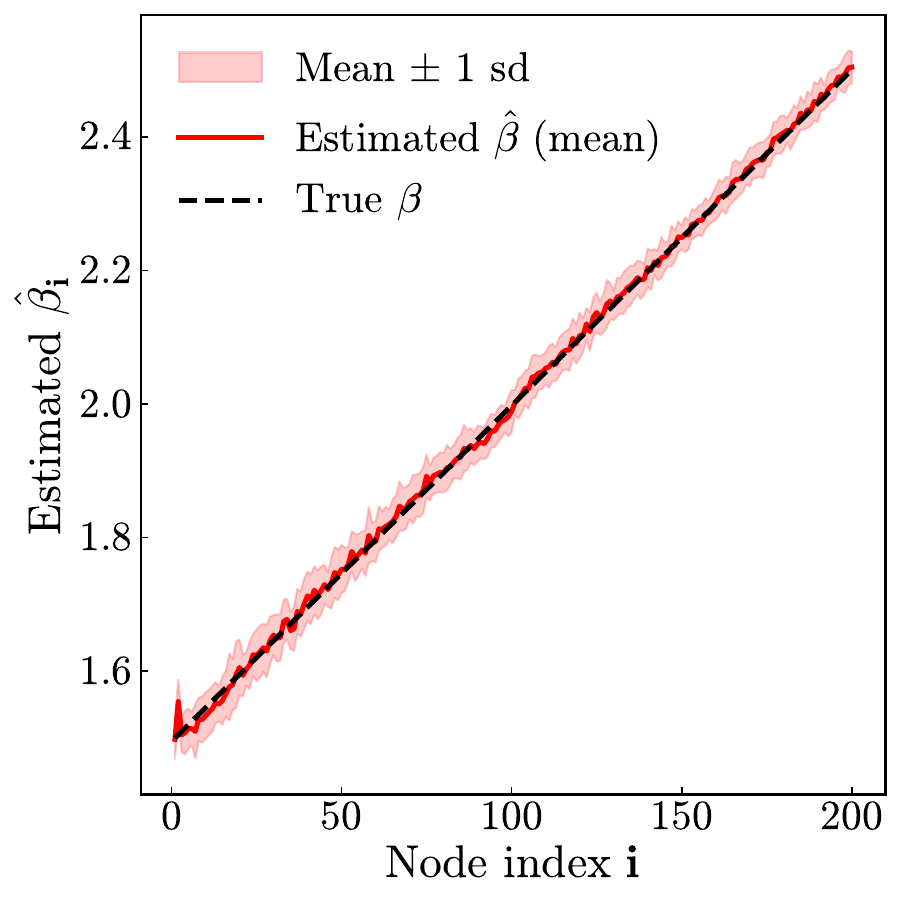}
    \caption{$\beta$ estimation}
    \label{fig:left}
\end{subfigure}\hfill
\begin{subfigure}[t]{0.63\textwidth}
    \centering
    \includegraphics[height=5cm]{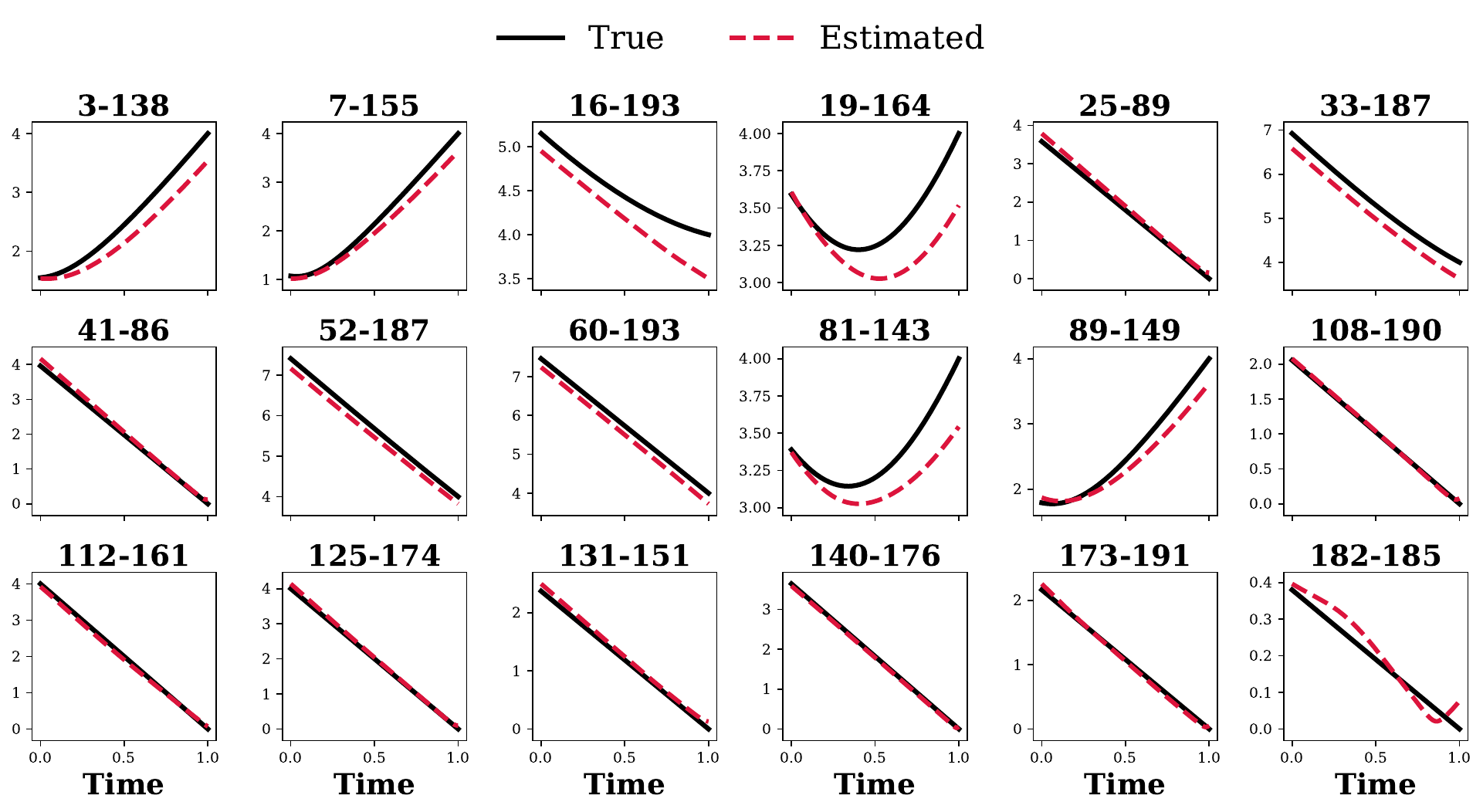}
    \caption{Randomly selected pairwise distances}
    \label{fig:right}
\end{subfigure}
\caption{Estimation of $\beta$ across 30 trials and of the latent distance in one trial under S2. Both components are recovered accurately, with tight uncertainty across all nodes.}
\label{fig:S2}
\end{figure}

In this synthetic experiment, the ground-truth trajectories are straight lines, so stronger smoothing is well aligned with the data-generating process. 
To illustrate the effect of $\rho_2$, Figure~\ref{fig:star_lambda2} displays estimated trajectories for $\rho_2\in\{0.001,1,100\}$ with $\rho_1$ fixed at $0.1$: all three recover the true shapes, while larger $\rho_2$ produces trajectories that are progressively straighter. 

\begin{figure}[ht]
    \centering
    \begin{subfigure}[t]{.319\textwidth}
        \centering
        \includegraphics[width=\linewidth]{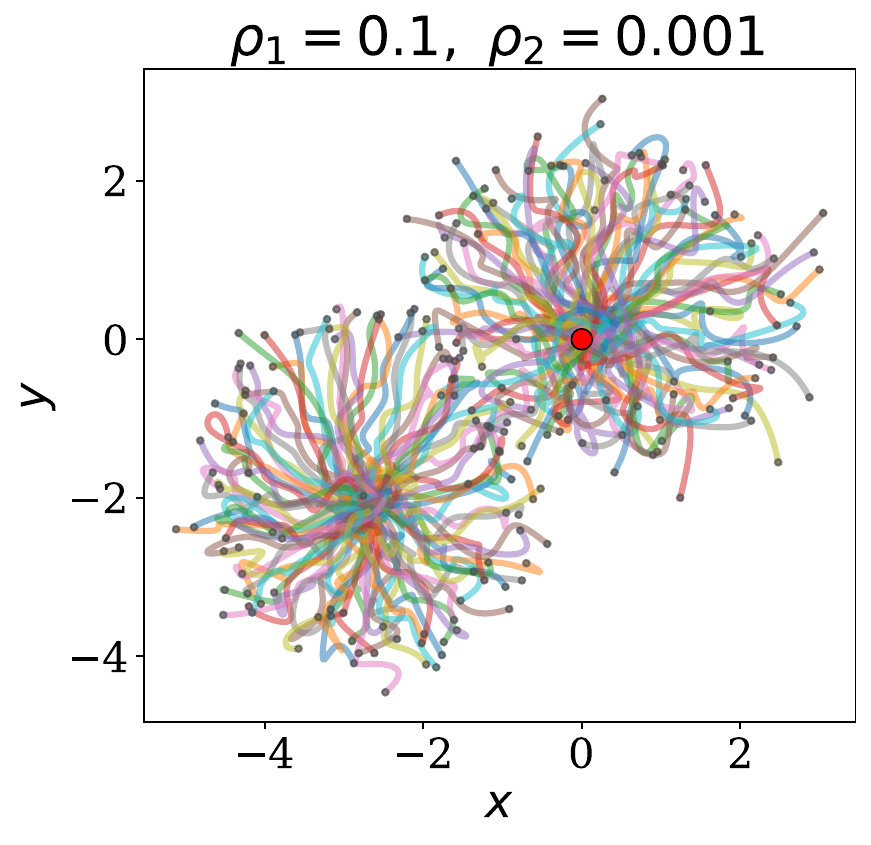}
        \label{fig:lambda2_0p001}
    \end{subfigure}\hfill
    \begin{subfigure}[t]{.34\textwidth}
        \centering
        \includegraphics[width=\linewidth]{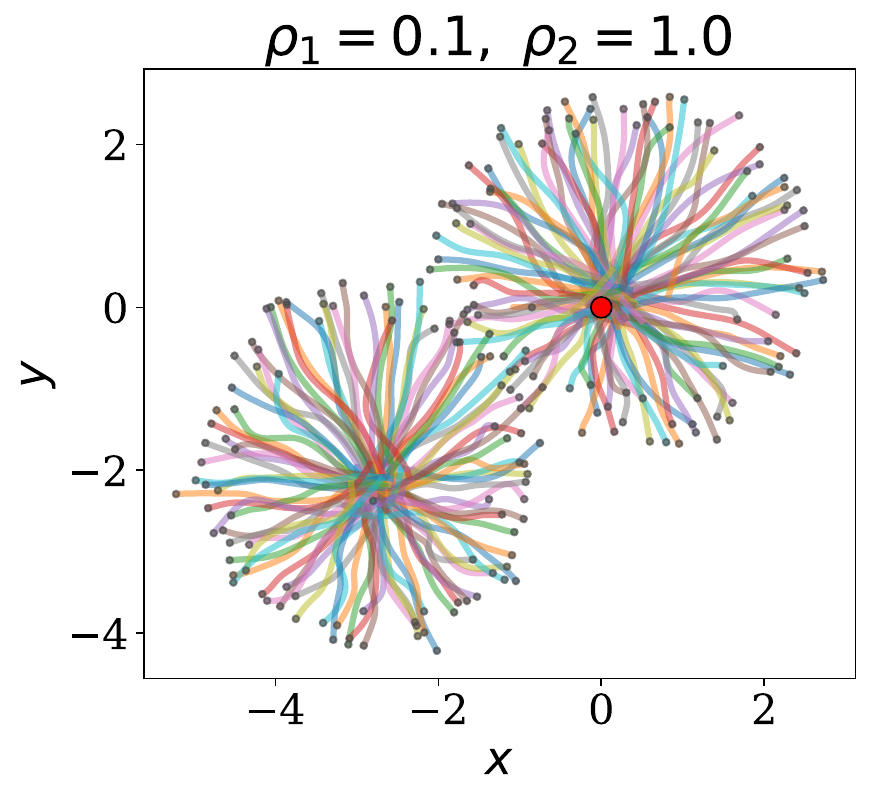}
        \label{fig:lambda2_0p01}
    \end{subfigure}\hfill
    \begin{subfigure}[t]{.319\textwidth}
        \centering
        \includegraphics[width=\linewidth]{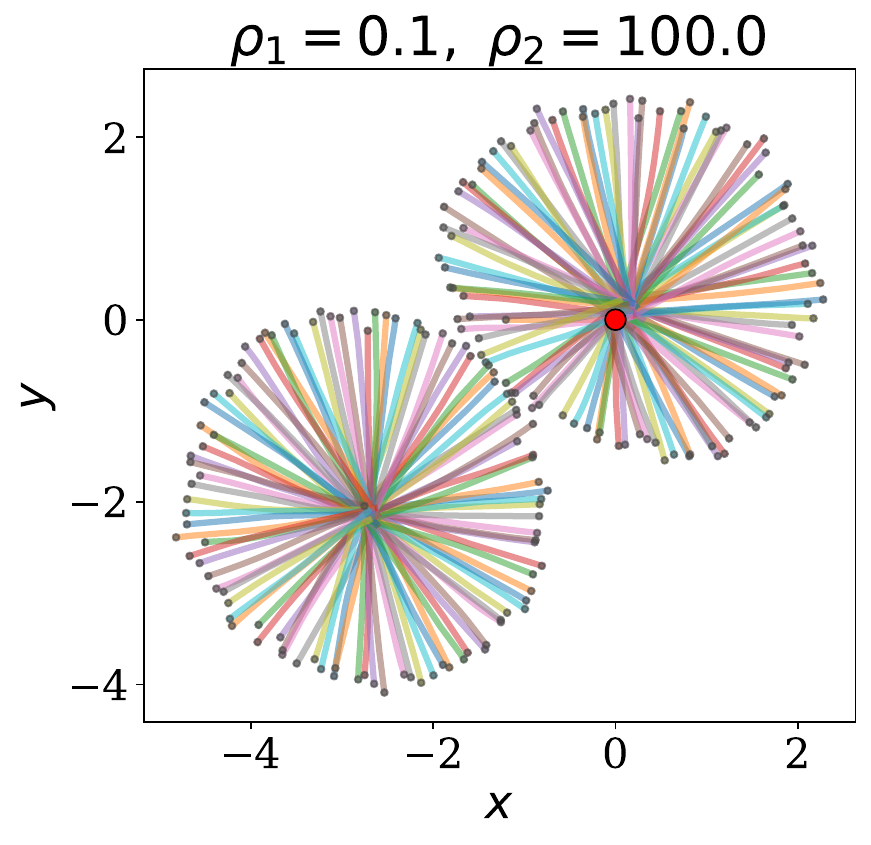}
        \label{fig:lambda2_0p1}
    \end{subfigure}
    \vspace{-2em}
\caption{Estimated trajectories in S2 for increasing $\rho_2$ ($\rho_1 = 0.1$). Larger $\rho_2$ yields progressively straighter and smoother trajectories while preserving the true shapes.}
    \label{fig:star_lambda2}
\end{figure}

Similar to S1, Figure~\ref{fig:S2} shows the estimation of $\beta$ over 30 replications and the true and estimated distance of 18 randomly selected pairs. Table~\ref{tab:S2} reports the mean and standard error over 30 trials, DLS-PP achieves the good predictive performance across most metrics.

\begin{table}[t]
\centering
\caption{(S2) Comparison of methods on in-sample fit and two out-of-sample scenarios.}
\label{tab:S2}
\footnotesize
\renewcommand{\arraystretch}{1.15}
\begin{tabular}{llccccc}
\toprule
Scenario & Metric & CLPM & DIPP & \makecell[c]{Poisson-$\beta$\\model} & \makecell[c]{Homogeneous\\baseline} & DLS-PP \\
\midrule
\multirow{3}{*}{In-sample}
  & IMSE            & \makecell{0.7472 \\ (0.2110)} & N/A & N/A  & \makecell{\textbf{0.2442} \\ (0.2012)} & \makecell{0.5966 \\ (0.5766)} \\
\cmidrule(l){2-7}
  & KS              & \makecell{0.3862 \\ (0.0159)} & \makecell{0.0269 \\ (0.0004)} & \makecell{0.3278 \\ (0.0005)} & \makecell{0.0334 \\ (0.0030)} & \makecell{\textbf{0.0098} \\ (0.0081)} \\
  \cmidrule(l){2-7}
  & Err $\beta$     & \makecell{1.1504 \\ (0.0495)} & N/A & \makecell{0.8697 \\ (0.0012)} & \makecell{2.2769 \\ (0.0132)} & \makecell{\textbf{0.0532} \\ (0.0256)} \\
\midrule
\multirow{4}{*}{\makecell[l]{Out-of-sample\\Persistence}}
  & IMSE            & \makecell{1.1213 \\ (0.3785)} & N/A & N/A & \makecell{0.1816 \\ (0.0779)} & \makecell{\textbf{0.0851} \\ (0.0168)} \\
\cmidrule(l){2-7}
  & KS              & \makecell{0.4881 \\ (0.0089)} & \makecell{0.0999 \\ (0.0024)} & \makecell{0.4382 \\ (0.0009)} & \makecell{0.0609 \\ (0.0121)} & \makecell{\textbf{0.0381} \\ (0.0134)} \\
\cmidrule(l){2-7}
  & \makecell[l]{Log-\\likelihood} & \makecell{$2.821 \times 10^{5}$ \\ ($4.452 \times 10^{3}$)} & \makecell{$3.205 \times 10^{5}$ \\ ($2.47 \times 10^{3}$)} & \makecell{$2.812 \times 10^{5}$ \\ (789)} & \makecell{$3.854 \times 10^{5}$ \\ ($1.27 \times 10^{3}$)} & \makecell{$\mathbf{3.932 \times 10^{5}}$ \\ ($1.5 \times 10^{3}$)} \\
\cmidrule(l){2-7}
  & Count           & \makecell{4.3695 \\ (0.0311)} & \makecell{4.3120 \\ (0.0559)} & \makecell{4.0769 \\ (0.0142)} & \makecell{2.4631 \\ (0.0400)} & \makecell{\textbf{1.8305} \\ (0.0394)} \\
\midrule
\multirow{4}{*}{\makecell[l]{Out-of-sample\\Constant-velocity}}
  & IMSE            & \makecell{4.3763 \\ (1.7896)} & N/A & N/A & \makecell{0.1657 \\ (0.1529)} & \makecell{\textbf{0.0930} \\ (0.0539)} \\
\cmidrule(l){2-7}
  & KS              & \makecell{0.5478 \\ (0.0090)} & \makecell{0.1884 \\ (0.0030)} & N/A & \makecell{0.0523 \\ (0.0173)} & \makecell{\textbf{0.0276} \\ (0.0184)} \\
\cmidrule(l){2-7}
  & \makecell[l]{Log-\\likelihood} & \makecell{$1.937 \times 10^{5}$ \\ ($1.030 \times 10^{4}$)} & \makecell{$-3.567 \times 10^{13}$ \\ ($5.04 \times 10^{12}$)} & N/A & \makecell{$3.844 \times 10^{5}$ \\ ($1.53 \times 10^{3}$)} & \makecell{$\mathbf{3.936 \times 10^{5}}$ \\ ($1.62 \times 10^{3}$)} \\
\cmidrule(l){2-7}
  & Count           & \makecell{4.5850 \\ (0.0305)} & \makecell{$1.614 \times 10^{10}$ \\ ($1.80 \times 10^{9}$)} & N/A & \makecell{2.5129 \\ (0.0454)} & \makecell{\textbf{1.8067} \\ (0.0534)} \\
\bottomrule
\end{tabular}
\end{table}


%% file: tex_files/6_application.tex
\providecommand{\needs}[1]{\medskip\noindent\textbf{[NEEDS RESULT --- #1]}\medskip}
\providecommand{\flagg}[1]{\medskip\noindent\textbf{[FLAG --- #1]}\medskip}

\section{Application}
\label{sec:app}
 
\subsection{Data}
\label{sec:app-data}

We study cooperative diplomatic interaction among 60 major economies over 1995--2022. Using GDP adjusted for purchasing power parity (GDP PPP) from the World Bank World Development Indicators, we therefore rank all country economies in each year, and retain the $60$ with the smallest median rank over 1995--2022. 
Samples of the largest $N$ economies constructed in this way are standard in empirical macroeconomics and international political economy \citep{diGiovanniLevchenko2012, diGiovanniLevchenko2013, Ossa2015}. More details and the selected list are given in Supplementary Section 6.
 
Interactions are measured with ICEWS coded event data \citep{Boschee2015ICEWS}, a machine-coded archive of more than 17 million events from 1995 onward, extracted from international news sources by natural language processing and coded to the Conflict and Mediation Event Observations (CAMEO) ontology.
Each record carries a date, a source actor, a target actor, an event type and a Goldstein score measuring cooperative or conflictual intensity. Our model uses only the actor pair and the timestamp. ICEWS has been used extensively for crisis forecasting \citep{OBrien2010ICEWS}, for latent-factor and tensor models of country--dyad event counts \citep{ScheinPaisleyBleiWallach2015, ScheinZhouBleiWallach2016} and for constructing measures of state behaviour \citep{BagozziBerlinerWelch2021, KorkmazEtAl2016CivilUnrest}. 
Since coverage derives from news reporting it is uneven, with events in less covered regions underreported \citep{BagozziEtAl2019Underreporting}, so event volume reflects media attention as well as political activity and our estimates describe the reported interaction network. 
 
We retain events whose top-level CAMEO category lies between 01 and 08, comprising verbal and material cooperation, and exclude the conflictual categories 09--20, so that latent proximity admits a single unambiguous reading as cooperative engagement. For each unordered pair of countries we take the recorded event dates as the realization $\mathcal{H}_{ij}$ of the dyadic point process, pooling the two directions, and rescale calendar time to the unit interval. This yields $n = 60$ nodes and a total of $m$ retained events.
 

Treating dyads as unordered is a modelling choice, and the aggregate data support it. Writing $W_{ij}$ for the total number of events from sender $i$ to receiver $j$, as displayed in Figure~\ref{fig:icews_heatmap_total}, the weighted reciprocity $r_w \;=\; 1-\frac{\sum_{i<j}\lvert W_{ij}-W_{ji}\rvert}{\sum_{i\neq j} W_{ij}} \;=\; 0.934$ leaves approximately $6.6\%$ of total volume as net directional imbalance, and the correlation between $\log(1+W_{ij})$ and $\log(1+W_{ji})$ across dyads is $0.987$. Figure~\ref{fig:icews_heatmap_total} shows that recorded events concentrate among high-GDP countries, consistent with heterogeneous media coverage in automated event data. This is the degree heterogeneity that the activity parameters $\beta_i$ in \eqref{eq:intensity} are designed to absorb. 
 
\subsection{Baseline comparison}
\label{sec:app-spec}
 
We fit the DLS-PP model \eqref{eq:intensity} and four competing methods as described in Section~\ref{sec:simulation}. Details of tuning are reported in Supplementary Section 5. Table~\ref{tab:application-comparison} compares model performance under in-sample and out-of-sample evaluation. For the out-of-sample scenario, we hold out the final month across all country pairs (93,773 events) while fitting the model on the remaining 4,198,928 events. Our model achieves the best performance across all metrics, excelling in both in-sample fit and predictive capability.
 
\begin{table}[ht]
\centering
\caption{Comparison of methods on application ICEWS dataset across in-sample fit and out-of-sample scenarios.}
\label{tab:application-comparison}
\footnotesize
\renewcommand{\arraystretch}{1.15}
\begin{tabular}{llccccc}
\toprule
Scenario & Metric & CLPM & DIPP & \makecell[c]{Poisson-$\beta$\\model} & \makecell[c]{Homogeneous\\baseline} & DLS-PP \\
\midrule
In-sample 
  & KS & 0.7278 & 0.6868 & 0.3121 & 0.4360 & \textbf{0.2435} \\
\midrule
\multirow{3}{*}{\makecell[l]{Out-of-sample\\Persistence}}
  & KS & 0.8878 & 0.7055 & 0.3622 & 0.4540 & \textbf{0.2777} \\
\cmidrule(l){2-7}
  & \makecell[l]{Log-\\likelihood} & $3.16 \times 10^5$ & $7.03 \times 10^5$ & $7.00 \times 10^5$ & $6.52 \times 10^5$ & {\boldmath $7.25 \times 10^5$} \\
\cmidrule(l){2-7}
  & Count & 124.65 & 188.93 & 87.80 & 114.36 & \textbf{79.43} \\
\midrule
\multirow{3}{*}{\makecell[l]{Out-of-sample\\Constant-velocity}}
  & KS & 0.8916 & 0.7057 & N/A & 0.4576 & \textbf{0.2945} \\
\cmidrule(l){2-7}
  & \makecell[l]{Log-\\likelihood} & $3.04 \times 10^5$ & $-8.42 \times 10^7$ & N/A & $6.40 \times 10^5$ & {\boldmath $7.15 \times 10^5$} \\
\cmidrule(l){2-7}
  & Count & 124.65 & $8.1892 \times 10^5$ & N/A & 115.09 & \textbf{84.90} \\
\bottomrule
\end{tabular}
\end{table}

\subsection{Estimation in DLS-PP}
\label{sec:app-geometry}
We now examine the in-sample estimates under DLS-PP to interpret the underlying network dynamics.
Figure~\ref{fig:node_beta} displays the estimated activity parameters $\hat{\beta}_i$, which measure each country's baseline propensity for diplomatic interaction. Since countries are ordered by descending GDP rank, the general downward slope confirms that diplomatic volume broadly scales with economic capacity, led by major powers like the United States, China, and the Russian Federation. Conversely, established middle powers such as Canada, the Netherlands, and Sweden exhibit $\hat{\beta}_i$ values noticeably below their economic rank, capturing a more low-key diplomatic posture relative to their output.

\begin{figure}[htbp]
    \centering
    \includegraphics[width=0.95\linewidth]{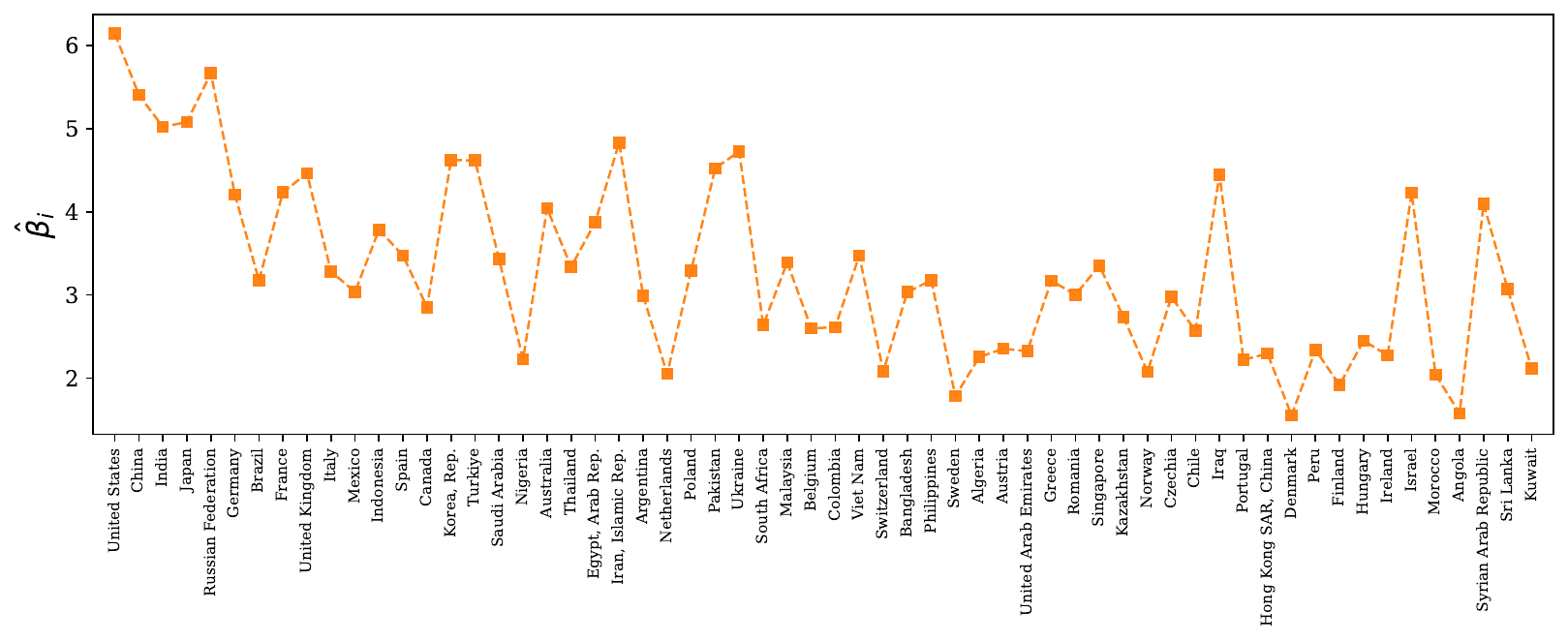}
    \caption{Estimated activity parameters $\hat\beta_i$, with countries ordered by GDP rank.}
    \label{fig:node_beta}
\end{figure}

To illustrate the continuous latent space geometry, Figure~\ref{fig:snap} displays four temporal snapshots with the United States fixed as the origin anchor. Notably, China, Japan, and South Korea display tight clustering and parallel migration toward the core, illustrating how regional interdependence drives coordinated diplomatic trajectories in latent space. Other key actors, such as Russia and Germany, trace unique paths reflecting broader realignment in international relations. The animation is provided with the submission.

\begin{figure}[ht]
    \centering
    \includegraphics[width=1\linewidth]{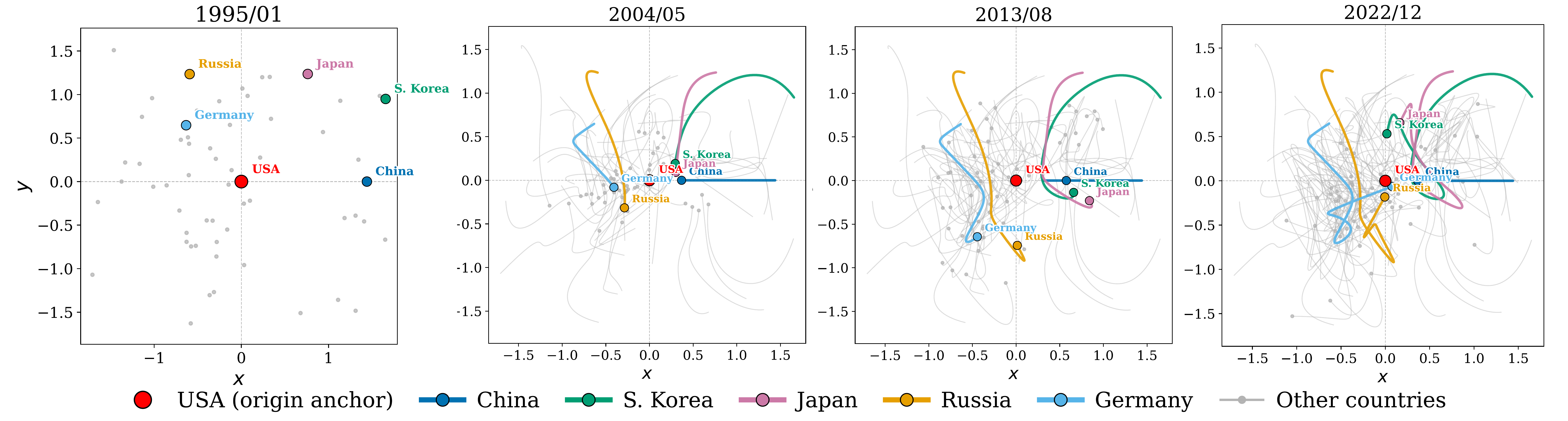}
\caption{Estimated continuous-time latent trajectories at four equally spaced dates, with the United States anchored at the origin. Highlighted paths showcase dynamic spatial alignments among six key nations over time.}
    \label{fig:snap}
\end{figure}

We report the length and arc of each estimated trajectory in Table~\ref{tab:country_length_summary}. Length measures how far a country moved in the latent space between 1995 and 2022, and arc measures how sharply it changed direction. Countries high on both, such as Ukraine, Iraq and Poland, shifted their international relationships more abruptly. Bangladesh moved just as far but far more smoothly, changing gradually rather than abruptly. Canada, the United Kingdom, the Netherlands and France barely moved at all, staying stable throughout.

\begin{table}[htbp]
\centering
\caption{Top and bottom 10 countries by trajectory length and arc.
Length is the path length $\int_0^T \|\dot{\hat z}_i(t)\|\,dt$. Arc is the
curvature penalty $\int_0^T \|\ddot{\hat z}_i(t)\|^2\,dt$, reported in units
of $10^3$.}
\label{tab:country_length_summary}
\setlength{\tabcolsep}{3pt}
\resizebox{\textwidth}{!}{%
\begin{tabular}{lrlrlrlr}
\toprule
\multicolumn{2}{c}{\textbf{Top 10 Length (Mobile)}} & \multicolumn{2}{c}{\textbf{Bottom 10 Length (Static)}} & \multicolumn{2}{c}{\textbf{Top 10 Arc (Curved)}} & \multicolumn{2}{c}{\textbf{Bottom 10 Arc (Straight)}} \\
\cmidrule(lr){1-2} \cmidrule(lr){3-4} \cmidrule(lr){5-6} \cmidrule(lr){7-8}
Country & Length & Country & Length & Country & Arc & Country & Arc \\
\midrule
Ukraine              & 7.01 & United States$^{\dagger}$  & 0.00 & Poland               & 9.98 & United States$^{\dagger}$ & 0.00 \\
Iraq                 & 5.57 & Canada                     & 1.46 & Indonesia            & 8.61 & Denmark                   & 0.72 \\
Poland               & 4.92 & United Kingdom             & 1.62 & Iraq                 & 8.35 & Netherlands               & 0.83 \\
Pakistan             & 4.75 & Netherlands                & 1.67 & Brazil               & 8.11 & Austria                   & 0.90 \\
Sri Lanka            & 4.59 & Ireland                    & 1.73 & Philippines          & 7.66 & Ireland                   & 0.97 \\
Brazil               & 4.28 & China$^{\ddagger}$         & 1.78 & Turkiye              & 7.56 & Israel                    & 1.28 \\
Syrian Arab Republic & 4.17 & Switzerland                & 1.92 & Argentina            & 6.93 & Canada                    & 1.37 \\
Iran, Islamic Rep.   & 4.09 & Italy                      & 1.95 & Romania              & 6.87 & Sweden                    & 1.40 \\
Saudi Arabia         & 4.01 & France                     & 2.03 & Syrian Arab Republic & 6.81 & Italy                     & 1.42 \\
Korea, Rep.          & 4.01 & Israel                     & 2.08 & Ukraine              & 6.55 & Bangladesh                & 1.52 \\
\bottomrule
\end{tabular}%
}
\vspace{0.5em}
\begin{minipage}{\textwidth}
\footnotesize
$^{\dagger}$The United States is the origin anchor, $z_1(t)\equiv(0,0)$, so its
length and arc are identically zero by construction rather than by estimation.
$^{\ddagger}$China is the axis anchor, $y_2(t)\equiv 0$, so it moves along a
line and its values are not directly comparable with the remaining countries.
\end{minipage}
\end{table}

%% file: tex_files/7_discussion.tex
\section{Conclusion}
\label{sec:discussion}

We have developed a dynamic latent space framework for continuous-time event networks (DLS-PP), supported by an scalable estimation pipeline and an effective-degrees-of-freedom selection criterion. Methodologically, a key feature of the model is the separation of node-specific activity levels ($\beta_i$) from dynamic spatial geometry ($z_i(t)$). By preventing baseline interaction rates from masquerading as central spatial positioning, this construction eliminates degree-driven core-periphery artifacts, ensuring that estimated trajectories reflect genuine shifts in dyadic alignment, which is a crucial distinction for isolating mobile geopolitical actors from stationary institutional anchors.

More broadly, our approach serves as a modular template for continuous-time spatial modeling. Because the spline trajectories, dyadic minibatching, warm starts, and geometric penalties interact solely through the log-intensity, the computational architecture extends directly to discrete network snapshots, weighted edge counts, or self-exciting Hawkes processes. Future extensions include allowing activity parameters to evolve continuously over time, integrating gravity model covariates directly into the intensity function, and leveraging Laplace curvature approximations for formal trajectory uncertainty quantification.

%% file: tex_files/8_AIdeclare.tex
\begingroup \small \linespread{0.95}\selectfont 

\paragraph*{AI Usage Disclosure.} 
The authors used Google Gemini and Anthropic Claude to assist with language refinement, code refinement, plotting implementation, and literature-search support during manuscript preparation. These tools were used only for non-substantive implementation and presentation tasks. In particular, Claude was used to generate the illustrative trajectories shown in Figure 2. All AI-assisted text, code, and outputs were reviewed and revised by the authors, who developed the manuscript's structure, substantive arguments, and analysis and take full responsibility for its accuracy and integrity.
\par \endgroup